\documentclass[11pt]{article}
\pdfoutput=1

\usepackage{style}
\usepackage{shortcuts}

\title{Sparse Nonnegative Convolution Is Equivalent to\texorpdfstring{\\}{ }Dense Nonnegative Convolution\texorpdfstring{\thanks{This work is part of the project TIPEA that has received funding from the European Research Council (ERC) under the European Unions Horizon 2020 research and innovation programme (grant agreement No.~850979).}
}{}}
\author{Karl Bringmann}
\author{Nick Fischer}
\author{Vasileios Nakos}
\affil{Saarland University and Max Planck Institute for Informatics, Saarland Informatics Campus}

\begin{document}

\maketitle

\begin{abstract}
Computing the convolution $A \star B$ of two length-$n$ vectors $A, B$ is an ubiquitous computational primitive, with applications in a variety of disciplines. 
Within theoretical computer science, applications range from string problems to Knapsack-type problems, and from 3SUM to All-Pairs Shortest Paths. These applications often come in the form of \emph{nonnegative} convolution, where the entries of $A,B$ are nonnegative integers. The classical algorithm to compute $A\star B$ uses the Fast Fourier Transform (FFT) and runs in time $O(n \log n)$.

However, in many cases $A$ and $B$ might satisfy sparsity conditions, and hence one could hope for significant gains compared to the standard FFT algorithm. The ideal goal would be an $O(k \log k)$-time algorithm, where $k$ is the number of non-zero elements in the output, i.e., the size of the support of $A \star B$. This problem is referred to as \emph{sparse} nonnegative convolution, and has received a considerable amount of attention in the literature; the fastest algorithms to date run in time $O(k \log^2 n)$.

The main result of this paper is the first $O(k \log k)$-time algorithm for sparse nonnegative convolution. Our algorithm is randomized and assumes that the length $n$ and the largest entry of $A$ and $B$ are subexponential in $k$. Surprisingly, we can phrase our algorithm as a reduction from the sparse case to the dense case of nonnegative convolution, showing that, under some mild assumptions, sparse nonnegative convolution is equivalent to dense nonnegative convolution for constant-error randomized algorithms. Specifically, if $D(n)$ is the time to convolve two nonnegative length-$n$ vectors with success probability $2/3$, and $S(k)$ is the time to convolve two nonnegative vectors with output size $k$ with success probability $2/3$, then $S(k) = O( D(k) + k (\log \log k)^2)$.

Our approach uses a variety of new techniques in combination with some old machinery from linear sketching and structured linear algebra, as well as new insights on linear hashing, the most classical hash function.
\end{abstract}

% !TEX root = ../paper.tex

\section{Introduction} \label{sec:introduction}
Computing convolutions is an ubiquitous task across all science and engineering. Some of its special cases are as fundamental as the general case; we first introduce the most important problem variants.

\begin{itemize}
\item \textbf{\textsf{Boolean Convolution}} is the problem of computing for given vectors $A,B \in \{0,1\}^n$ the vector $C = A \ostar B \in \{0,1\}^{2n-1}$ defined by $C_k = \bigvee_i A_i \wedge B_{k-i}$. This formalizes a situation in which we split a computational problem into two subproblems, so that in total there is a solution of size $k$ if and only if for some $i$ there is a solution of the left subproblem of size $i$ and there is a solution of the right subproblem of size $k - i$. Therefore, it is a natural task that frequently arises in algorithm design. Boolean convolution is also equivalent to \emph{sumset computation}, where for given sets $A,B \subseteq \{0,1,\ldots,n-1\}$ the task is to compute their sumset $A+B$ consisting of all sums $a+b$ with $a \in A,\, b \in B$. It therefore frequently comes up in algorithms for Subset Sum, 3SUM, and similar problems, see, e.g.,~\cite{ChanL15,Bringmann17,BateniHSS18,JansenR19,KoiliarisX19,MuchaWW19}. 

\item \textbf{\textsf{Nonnegative Convolution}} is the problem of computing for given vectors $A,B \in \Int^n$ with nonnegative entries the vector $C = A \conv B \in \Int^{2n-1}$ defined by $C_k = \sum_i A_i \cdot B_{k-i}$. For instance, if~$A_i$ and $B_i$ count the number of size-$i$ solutions of the left and right subproblem, then $C_k$ counts the number of size-$k$ solutions of the whole problem. It also comes up in string algorithms when computing the Hamming distance of a pattern and each sliding window of a text; this connection was found by Fischer and Paterson~\cite{FischerP74} and has been exploited in many string algorithms, see, e.g.,~\cite{Abrahamson87,AmirLP04,Karloff93,KopelowitzP18}. 
As an operation, nonnegative convolution is frequent also in computer vision, image processing and computer graphics; a prototypical such an example is the process of blurring an image by a Gaussian kernel in order to remove noise and detail~\cite{GaussianSmoothing}. Note that nonnegative convolution generalizes Boolean convolution, as $A \ostar B$ is simply the support of the nonnegative convolution $A \conv B$. In this paper our focus is on the nonnegative convolution problem.

\item \textbf{\textsf{General Convolution}}, or simply ``convolution'', denotes the general case obtained by dropping the nonnegativity assumption from the previous problem variant. This problem is central in signal processing and is also equivalent to polynomial multiplication, one of the most fundamental problems of computer algebra, and thus has a wealth of applications. 
We remark that general convolution can be reduced to nonnegative convolution (and thus they are equivalent), by replacing $A'_i := A_i + M$ and $B'_j := B_j + M$, which are nonnegative for sufficiently large $M$, and noting that $ (A \conv B)_k=(A' \conv B')_k~\mathrm{mod}~M$. However, this reduction destroys the \emph{sparsity} of the input and output, and thus is not applicable in the context of this paper.
\end{itemize}

\paragraph{Algorithms in the Dense Case.}
The standard algorithm for these problems uses Fast Fourier Transform (FFT) and runs in time $\Order(n \log n)$ on the RAM model. This running time is conjectured to be optimal (at least for general convolution), but proving this is a big open problem. There is some evidence in favor of this conjecture, for instance nonnegative convolution can be used to multiply integers and the latter is connected to the network coding conjecture~\cite{AfshaniFKL19}. For Boolean convolution, the evidence is less clear, since there exists a Boolean convolution algorithm by Indyk~\cite{Indyk98} running in time $O(n)$ with the guarantee that any fixed output entry is correct with constant probability (see \autoref{thm:indyk}). However, in this paper we focus on algorithms where the whole output vector is correct with constant probability, and boosting Indyk's algorithm to such a guarantee would again result in running time $O(n \log n)$. Therefore, for all three problem variants it is plausible that time $\Order(n \log n)$ is optimal, even for constant-error randomized algorithms.

\paragraph{Algorithms in the Sparse Case.}

A long line of work has considered convolution in a sparse setting, see, e.g.,~\cite{Muthukrishnan95,ColeH02,Roche08,MonaganP09,VanDerHoevenL12,ArnoldR15,ChanL15,Roche18,Nakos20,GiorgiGC20}. Here the running time is expressed not only in terms on $n$, but also in terms of the output size $k$, defined as the number of nonzero entries of $A \conv B$. All variants of convolution listed above admit randomized algorithms running in near-linear time $k \polylog n$. This was first achieved by Cole and Hariharan~\cite{ColeH02} for nonnegative convolution with a Las Vegas algorithm running in time $O(k \log^2 n + \polylog(n))$, in~\cite{Nakos20} for general convolution with running time\footnote{Here we use the notation $\widetilde\Order(T) = \bigcup_{c>0} O(T \log^c T)$.} $\widetilde\Order(k \log^2 n + \polylog(n))$. The latter was improved by Giorgi, Grenet and Perret du Cray~\cite{GiorgiGC20} to a \emph{bit complexity} of $\widetilde{O}(k \log n)$; it seems that on the RAM model their algorithm would run in time $O(k \log^5 k \polyloglog n)$.\footnote{To determine their running time on the RAM model, from the last paragraph of the proof of their Lemma~4.7 one can infer that the bottleneck of their running time stems from $\Theta(\log^2 k)$ many dense convolutions on vectors of length $\Theta(k \log(k \log n) \log(k \log\log n)) = \Theta(k \log^2 k \polyloglog n)$. Since one dense convolution of length $d$ can be performed in time $O(d \log d)$ on the RAM model, they require time $O(k \log^5 k \polyloglog n)$.} Implementations of sparse convolution algorithms exist in Maple~\cite{MonaganP14,MonaganP15} and Magma~\cite{Magma}.

This research is closely related to the extensively studied sparse Fourier transform problem, e.g.~\cite{GilbertGIMS02,GilbertMS05,HassaniehIKP12,IndykK14,IndykKP14}. Indeed, the same running time of $O(k \log^2 n)$, albeit with a more complicated algorithm and under the assumption that complex exponentials can be evaluated at constant time, can be obtained by combining the state-of-the-art sparse Fourier transform with the semi-equispaced Fourier transform, see \autoref{sec:prevtechniques}.

In summary, for nonnegative convolution on the RAM model, the state of the art requires time $\Omega(k \log^2 n)$ or $\Omega(k \log^5 k)$. In view of the conjecture that $O(n \log n)$ is optimal for the dense case, the best running time we could expect for the sparse case would be $O(k \log k)$. The driving question of this work is thus:

\begin{center}
  \emph{Can sparse nonnegative convolution be solved in time $O(k \log k)$?}
\end{center}

We note that the need for sparse convolution arises in many different areas of algorithm design, for example algorithms for the sparse cases of Boolean and nonnegative convolution have been used for clustered 3SUM and similar problems~\cite{ChanL15}, output-sensitive Subset Sum algorithms~\cite{BringmannN20}, pattern matching on point sets~\cite{CardozeS98}, sparse wildcard matching~\cite{ColeH02}, and other string problems~\cite{AmirBP14,AmirKP07}.

\subsection{Results} \label{sec:results}
We present a novel connection between the sparse and dense case of nonnegative convolution, which can be viewed as work at the intersection of sparse recovery and fine-grained complexity.

We work on the Word RAM model where each cell stores a word consisting of $w$ bits, and standard operations on $w$-bit integers can be performed in constant time; this includes addition, multiplication, and division with remainder. 
We always assume that the length $n$ of the input vectors as well as each input entry fit into a word, or more precisely into a constant number of words. For nonnegative convolution, this means that the input consists of vectors $A,B \in \{0,1,\ldots,\Delta\}^n$ with $n, \Delta \le 2^{O(w)}$.
In this machine model, the standard algorithm for dense convolution uses FFT and runs in time $D(n) = O(n \log n)$. In the following, we denote by $D_\delta(n)$ the running time of a randomized algorithm for dense nonnegative convolution with failure probability $\delta$ (for any $0 \le \delta \le 1/3$). Note that this notation hides the dependence on $\Delta$. 

In the sparse setting, we denote the output size by $k$, i.e., $k$ is the number of nonzero entries of the convolution $A \conv B$. Also in this setting we will always assume that the length $n$ of the input vectors as well as the largest input entry $\Delta$ fit into a constant number of words. We will denote by $S_\delta(k)$ the running time of a randomized algorithm for sparse nonnegative convolution with failure probability~$\delta$; this hides the dependence on $n$ and $\Delta$.

\medskip
The main result of this paper is a novel Monte Carlo algorithm for nonnegative sparse convolution. 

\begin{theorem} \label{thm:corollary}
The sparse nonnegative convolution problem has a randomized algorithm with running time $O(k \log k + \polylog (n\Delta) )$ and failure probability $2^{-\sqrt{ \log k}}$.
\end{theorem}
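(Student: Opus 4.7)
The plan is to derive \autoref{thm:corollary} from the sparse-to-dense reduction advertised in the abstract, namely $S_{1/3}(k) = O(D_{1/3}(k) + k(\log\log k)^2)$. Plugging in the FFT bound $D_{1/3}(k) = O(k \log k)$ absorbs the $k(\log\log k)^2$ into the leading $k\log k$ term, and the $\polylog(n\Delta)$ summand in the statement covers one-time setup (such as sampling a suitable prime for the hash) and word-sized arithmetic on $\Order(\log(n\Delta))$-bit numbers. The two substantive components to build are the reduction itself, which I expect to be the technical heart of the paper, and a probability-amplification step that boosts failure from a constant to $2^{-\sqrt{\log k}}$ without exceeding the $O(k\log k)$ time budget.

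For the reduction I would use \emph{linear hashing}: draw $\sigma$ uniformly from $\{1,\ldots,p-1\}$ for a prime $p$ slightly larger than $n$, and set $h_\sigma(x) = (\sigma x \bmod p) \bmod m$ with $m = \Theta(k)$. Bucketize $A,B$ into $\tilde A,\tilde B \in \Int^m$ by $\tilde A_b = \sum_{i : h_\sigma(i)=b} A_i$ (and similarly $\tilde B$) and compute $\tilde A \conv \tilde B$ via one dense convolution of cost $D(k)$. Because $h_\sigma$ is nearly a homomorphism, every nonzero output coordinate $s$ deposits its mass $\sum_{i+j=s} A_i B_j$ into bucket $h_\sigma(s)$ up to an occasional ``carry'' from the outer modulus. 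Convolving the weighted variants $(i A_i)_i$ with $B$ and $A$ with $(j B_j)_j$ as well lets one read off, from the ratios of zeroth and first ``moments'', the identity and value of the unique responsible coordinate inside every \emph{isolated} bucket. A pairwise-independence analysis of $h_\sigma$ shows that a constant fraction of output coordinates are isolated with constant probability, so peeling them off and rehashing with a fresh $\sigma$ shrinks the residual by a constant factor per round. After $\Theta(\log\log k)$ rounds only $k / \polylog k$ coordinates remain, which can be cleared by a direct dense convolution on short vectors; the geometric sum of per-round work yields the $k(\log\log k)^2$ additive overhead.

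I expect two main obstacles inside the reduction: the imperfect linearity of $h_\sigma$, since the outer modulus causes a small fraction of pairs to land one bucket away from $h_\sigma(i+j)$, which I would neutralize by running $O(1)$ parallel hash tables per round and reconciling consistent answers; and the need to prove a genuine constant-factor residual shrinkage despite the contamination introduced when partially recovered coordinates are subtracted from the bucket sketches, which I expect to require a potential-function amortization and to carry the bulk of the technical weight. Finally, for the amplification to $2^{-\sqrt{\log k}}$ I would avoid naively repeating the $\Theta(k \log k)$-time base algorithm; instead, given a candidate output $C$ I would perform $\Theta(\sqrt{\log k})$ Schwartz--Zippel point-evaluation tests of the identity $A(x) B(x) \equiv C(x) \pmod{p}$ for random $x$, each test taking $O(k)$ time and driving the false-accept probability to $2^{-\sqrt{\log k}}$. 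Since the base algorithm already succeeds with constant probability, a constant expected number of retries suffices, and truncating at $\Theta(\log k)$ retries turns the expected-time bound into worst-case $O(k\log k)$ with the desired failure probability.
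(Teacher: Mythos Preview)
Your plan has two genuine gaps.

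\textbf{Peeling with linear hashing does not work.} The obstruction you flag as ``imperfect linearity'' is more serious than a carry that can be voted away with $O(1)$ parallel tables. In round~$\ell+1$ you hold a partial answer $\widetilde C$ and want to sketch the residual $A\conv B - \widetilde C$ under a fresh linear hash $h'$. You can compute $h'(A)\conv_{m'} h'(B)$, whose $b$-th entry is $\sum_{h'(y)+h'(z)\equiv b} A_yB_z$, and you can compute $h'(\widetilde C)$. But for each recovered $x$, the mass $(A\conv B)_x$ that you want to remove is spread across the \emph{three} buckets $h'(0)+h'(x)+\{-p,0,p\}\bmod m'$, and the split depends on the individual pairs $(y,z)$ with $y+z=x$, which you do not know. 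Subtracting $h'(\widetilde C)$ therefore does not cancel what was hashed, so the residual sketch is polluted by $\Theta(|\supp(\widetilde C)|)$ terms and its sparsity does not decrease. The paper identifies exactly this as the central obstruction and works around it by a very different architecture: linear hashing is used only for universe reduction (where $m$ is large enough that $h$ is \emph{perfect}), while the iterative peeling is done with the truly affine hash $g(x)=x\bmod p$ for a random prime $p$; the price is that $g$ is only $O(\log U)$-universal, which is why the paper first invests heavy machinery (support approximation via Indyk's algorithm, Vandermonde-based recovery, and a new concentration bound for linear hashing in a \emph{tiny} universe $U\le k\polylog k$) to drive the residual down to $k/\polylog k$ before peeling with~$g$ begins.

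\textbf{The amplification overshoots the time budget.} Retry-and-verify with a constant-error base algorithm of cost $\Theta(k\log k)$ needs $\Theta(\sqrt{\log k})$ retries to push the probability that no correct output is ever produced below $2^{-\sqrt{\log k}}$; truncating at fewer retries leaves that failure mode too large, and truncating at $\Theta(\sqrt{\log k})$ retries costs $\Theta(k\log^{3/2}k)$ in the worst case. (Your claim that truncating at $\Theta(\log k)$ retries gives worst-case $O(k\log k)$ is simply off by a $\log k$ factor.) The paper avoids external amplification entirely: the reduction is parameterised by $\delta$ and yields $S_\delta(k)=O\!\big(D_{1/3}(k)+k\log^2(\log(k)/\delta)+\polylog(n\Delta)\big)$, so plugging in $\delta=2^{-\sqrt{\log k}}$ makes the middle term $k(\sqrt{\log k}+\log\log k)^2=O(k\log k)$ directly.
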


Naturally, the same algorithm also can be used for Boolean convolution, where $\Delta = 1$. For Boolean convolution, this is the first algorithm that improves upon the dense case's running time of $O(n \log n)$ for all $k = o(n)$; previous algorithms required $k = o(n/\log n)$. For nonnegative convolution, the same statement is true assuming that \raisebox{0pt}[0pt][0pt]{$\Delta \le 2^{n^{o(1)}}$}. Moreover, this answers our driving question for $k \gg \polylog(n \Delta)$, by a randomized algorithm.

In fact, our algorithm can be phrased as a reduction from the sparse case to the dense case of nonnegative convolution:

\begin{theorem} \label{thm:core}
Any randomized algorithm for dense nonnegative convolution with running time $D_{1/3}(n)$ can be turned into a randomized algorithm for sparse nonnegative convolution running in time\footnote{To be precise, we should take the dependence on $\Delta$ (and $n$) into account. Expressing the running time for the dense case as $D_\delta(n,\Delta)$ and for the sparse case as $S_\delta(k,n,\Delta)$, our reduction actually shows that $S_\delta(k,n,\Delta) = O( D_{1/3}(k, \poly(n \Delta)) + k \log^2(\log(k)/\delta) + \polylog(n\Delta))$.}
\begin{equation*}
  S_\delta(k) = O\big( D_{1/3}(k) + k \log^2 (\log(k)/\delta) + \polylog(n\Delta)\big).
\end{equation*}
Here we assume for technical reasons that the function $D_\delta(n) / n$ is nondecreasing, as is to be expected from any natural running time.
\end{theorem}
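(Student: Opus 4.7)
The plan is to reduce a sparse instance of output size $k$ to a single dense convolution of size $\Theta(k)$ together with nearly linear overhead.

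First I would perform a universe reduction, which accounts for the additive $\polylog(n\Delta)$ term: apply a random prime modulus of size $\poly(k)$ to compress the index universe $[n]$ down to $[\poly(k)]$, while keeping full precision on entries (since $\Delta \le 2^{O(w)}$, arithmetic stays constant time). With constant probability, distinct output indices remain distinct modulo the chosen prime. From here on we may assume $n, \Delta \le \poly(k)$.

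The core step is a random linear hash $h(x) = \alpha x \bmod p \bmod m$ with $m = \Theta(k)$ and prime $p$ much larger than $n$. Because reduction modulo $x^m - 1$ is a ring homomorphism, a single dense convolution of length $m$ on the ``wrapped'' input vectors yields, in each bucket, the aggregate $\sum_{h(c) = j} C_c$ of true output entries whose index $c$ hashes to bucket $j$. The random multiplier $\alpha$ then randomizes which output indices share a bucket, so a constant fraction of the $k$ support elements end up alone in their bucket, and I would read off their values directly. To also recover the \emph{index} of each isolated entry I would run $O(1)$ more dense convolutions with one input weighted by its position; division inside each singleton bucket recovers the index. The total dense cost per round is $O(D_{1/3}(k))$.

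Subtracting the recovered pairs from $A \conv B$ and recursing on the residual shrinks the support by a constant factor per level, so the dense-convolution cost telescopes to $\sum_{t \ge 0} D_{1/3}(k / 2^t) = O(D_{1/3}(k))$ using the hypothesis that $D_{1/3}(n)/n$ is non-decreasing. Naively boosting the innate $1/3$ failure of each dense call down to $\delta$ would multiply $D_{1/3}(k)$ by $\log(1/\delta)$, which is unaffordable. Instead I would cap the number of dense calls at $O(1)$ per level and use cheap, FFT-free verification and correction with $O(\log(\log(k)/\delta))$ independent auxiliary linear hash functions per level, each costing $O(k)$, so that the verification budget sums to $O(k \log^2(\log(k)/\delta))$.

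The main obstacle is making the auxiliary verification and correction actually run in time $O(k)$ per call without invoking further FFTs. Pairwise analysis of the linear hash $h(x) = \alpha x \bmod p \bmod m$ gives only collision probability $O(1/m)$ per pair, which is not enough to drive a single round's failure below $1/\polylog(k)$ with just a few repetitions. This is where the ``new insights on linear hashing'' advertised in the abstract must come in, supplying stronger concentration or higher-moment bounds on bucket loads, together with the structured-linear-algebra and sketching machinery needed to solve the local per-bucket subproblems in near-linear aggregate time.
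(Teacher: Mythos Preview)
Your plan has a genuine structural gap in the ``subtract and recurse'' step. The linear hash $h(x) = ((\sigma x + \tau)\bmod p)\bmod m$ is only \emph{almost}-affine: $h(y)+h(z)$ agrees with $h(0)+h(y+z)$ only up to an offset in $\{-p,0,p\}$ modulo~$m$. Consequently $h(A)\conv_m h(B)$ is not equal to $h(A\conv B)$, and more importantly $h(A\conv B) - h(\widetilde C)$ is \emph{not} the hashed residual $h(A\conv B - \widetilde C)$. So after you peel off a constant fraction of the support, you cannot cancel those entries inside the hashed picture; the sparsity of what you are hashing need not shrink, and the geometric recursion collapses. This is exactly the obstruction the paper discusses in the technical overview, and it is why the straightforward iterative-recovery template (which you outline) fails to reach $O(k\log k)$. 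Using the truly affine hash $g(x)=x\bmod p$ instead would fix the subtraction but costs an extra $\log$ factor per round because $g$ is only $O(\log U)$-universal.

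The paper's actual route is quite different from yours. It first reduces to a small universe using linear hashing (not a random prime), then it does \emph{not} attempt to recover a constant fraction per round. Instead it computes in one shot an approximation $\widetilde C$ with only $k/\polylog k$ errors, and only afterwards runs iterative recovery---now with the affine hash $g$, where the remaining sparsity is small enough that $g$'s $\log$-factor overhead is harmless. Getting the initial $(1-1/\polylog k)$-fraction requires several further ideas: a second universe reduction to a \emph{tiny} universe $U = k\polylog k$, an $O(k(\log\log k)^2)$-time approximation of $\supp(A\conv B)$ via Indyk's linear-time Boolean convolution, and then a ``set query'' step that hashes to $k/T$ buckets with $T=\polylog k$ and solves a transposed Vandermonde system per bucket. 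The improved concentration bound for linear hashing is used here, in the tiny universe, to guarantee that most buckets receive at most $2T$ elements---not for the verification scheme you propose. Your sketch does not contain the tiny-universe reduction, the support approximation, or the Vandermonde set-query step, and without them there is no known way to get the $(1-1/\polylog k)$-fraction needed to make the error-correction phase cheap.
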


Since $D_{1/3}(k) = O(k \log k)$, setting $\delta = 2^{-\sqrt{\log k}}$ yields time $O(k \log k + \polylog(n\Delta))$, which proves \autoref{thm:corollary}. Furthermore, any future algorithmic improvement for the dense case automatically yields an improved algorithm for the sparse case by our reduction. In fact, under the mild conditions that $k \gg \polylog(n\Delta)$ and that the optimal running time $D_{1/3}(k)$ is $\Omega(k (\log \log k)^2)$, we obtain an \emph{asymptotic equivalence} with respect to constant-error randomized algorithms: 
\begin{itemize}
\item $S_{1/3}(k) = O(D_{1/3}(k))$ holds by \autoref{thm:core} and the mild conditions,
\item $D_{1/3}(n) = O(S_{1/3}(n))$ holds since the sparse case trivially is a special case of the dense one.
\end{itemize}

\subsection{Discussion and Open Problems}
Our work raises a plethora of open problems that we discuss in the following.

\paragraph*{Improving our Reduction.}
We can ask for improvements of our reduction, specifically of the parameters of \autoref{thm:core}:
\begin{enumerate}[leftmargin=2.4em]
\item Can the error probability of the reduction be reduced? Specifically, can \autoref{thm:corollary} be improved from $\delta = 2^{-\sqrt{\log k}}$ to $1/\poly(k)$ or even $1/\poly(n)$?
\item Can the $\polylog(n \Delta)$ term in \autoref{thm:core} be removed, to make it work also for very small~$k$? This would require a quite different approach than the one we take here, since already for finding a prime field large enough to store $n$ and $\Delta$, or for computing a single multiplicative inverse in such a prime field, the fastest algorithms that we are aware of run in time $O(\polylog(n\Delta))$, even for the Word RAM model. 
\item Can we obtain further improvements by bit packing, say for Boolean convolution? 
\end{enumerate}

\paragraph*{General Convolution.}
Here we focused on nonnegative convolution, what about the general case?
\begin{enumerate}[resume, leftmargin=2.4em]
\item Does sparse general (not necessarily nonnegative) convolution have a randomized algorithm running in time $O(k \log k + \polylog (n\Delta))$?
\item Are sparse general convolution and dense general convolution asymptotically equivalent?
\end{enumerate}

\paragraph*{Deterministic Algorithms.}
Chan and Lewenstein~\cite{ChanL15} presented a deterministic $k \cdot n^{o(1)}$-time algorithm for sparse nonnegative convolution, assuming that they are additionally given a small superset of the output.
\begin{enumerate}[resume, leftmargin=2.4em]
\item Is there a deterministic algorithm for sparse nonnegative convolution with running time $k \polylog(n)$?
\item Are sparse and dense nonnegative convolution asymptotically equivalent with respect to deterministic algorithms?
\end{enumerate}

\paragraph*{Sparse Fourier Transform.}
Computing convolutions is intimately connected to the Fast Fourier Transform (FFT). In fact, in the dense case these two problems are known to be equivalent: if one of these problems can be solved in time $T(n)$ then the other can be solved in time $O(T(n))$. One direction of this equivalence follows from the standard algorithm for convolution that uses FFT, the other direction follows from an old trick invented by Bluestein~\cite{Bluestein70}, see also~\cite[pp.~213--215]{GoldR69}, showing how to express the discrete Fourier transform as a convolution.\footnote{As a technical detail, this reduction assumes that terms of the form $\exp(2\pi i x)$ can be evaluated in constant time.}

The sparse case of Fourier transform, where one has oracle access to $x$ and wants to compute $\widehat{x}$ under a $k$-sparsity assumption, is also extensively studied~\cite{GilbertGIMS02,GilbertMS05,HassaniehIKP12,HavivR17,IndykK14,IndykKP14,PriceS15,Kapralov16,Kapralov17,KapralovVZ19,NakosSW19}. We can ask whether the results presented in this paper also work for computing Fourier transforms:
\begin{enumerate}[resume, leftmargin=2.4em]
\item Can we reduce sparse Fourier transform to dense Fourier transform in a fine-grained way? The algorithm in~\cite{HassaniehIKP12} runs time $O(k \log(n \Delta))$, but the running time is not dominated by the calls to FFT. 
\end{enumerate}
Note that positive answers to Questions 5 and 8 would, together with the known equivalence of dense convolution and dense Fourier transform, show an asymptotic equivalence of sparse general convolution and sparse Fourier transform.

Note that since dense nonnegative convolution is equivalent to general dense convolution (as mentioned in the introduction), and since the latter is equivalent also to (dense) DFT computation, our work places sparse nonnegative convolution to the aforementioned equivalent class, under the assumptions made.

\medskip
We hope that our work ignites further work on revealing connections between all these fundamental problems. 

\subsection{Organization}
This paper is organized as follows. \autoref{sec:preliminaries} starts with some preliminary definitions. In \autoref{sec:tech} we sketch our algorithm and describe some technical difficulties and highlights. The reduction is split across several sections starting with \autoref{sec:tools} which gathers some algorithmic tools, followed by the individual steps of the reduction in Sections~\ref{sec:set-query}--\ref{sec:large-to-small}; we give an outline for these sections in \autoref{sec:tech}. Finally, in \autoref{sec:hashing}, we show an improved concentration bound for linear hashing, which we used as an essential ingredient in our reduction, as well as an almost tight lower bound against a theorem from~\cite{Knudsen16}.

% !TEX root = ../paper.tex

\section{Preliminaries} \label{sec:preliminaries}
Let $\Int$, $\Nat$, $\Rat$ and $\Complex$ to denote the integers, nonnegative integers, rationals, and complex numbers, respectively. For any nonnegative integer $n$, let $\Int_n$ denote the ring of integers modulo $n$. We set $[n] = \{0, 1, \dots, n-1\}$. The Iverson bracket $[P] \in \{0, 1\}$ denotes the truth value of a proposition~$P$. We write $\log$ for the base-$2$ logarithm, $\poly(n) = n^{\Order(1)}$ and $\polylog(n) = \log^{\Order(1)} n$.

We mostly denote vectors by letters~$A, B, C$ with $A_i$ referring to the $i$-th coordinate in $A$. The \emph{convolution} of two length-$n$ vectors $A$ and $B$ is the vector $A \conv B$ of length~$2n-1$ with
\begin{equation*}
	(A \conv B)_i = \sum_{0 \leq j \leq i} A_j B_{i-j}.
\end{equation*}
The \emph{cyclic convolution} of two length-$n$ vectors $A, B$ is the length-$n$ vector $A \conv_n B$ with
\begin{equation*}
	(A \conv_n B)_i = \sum_{0 \leq j \leq n-1} A_j B_{(i-j) \bmod n}.
\end{equation*}
We let $\supp(A) = \{ i \in [n] : A_i \neq 0 \}$,  $\norm A_0 = |\supp(A)|$ and $\norm A_\infty = \max_i |A_i|$. Furthermore, we often write $A \bmod m$ for the vector with
\begin{equation*}
	(A \bmod m)_{i'} = \sum_{i = i' \mod m} A_i,
\end{equation*}
and more generally, for a function $\Int \to [m]$, we write $f(A)$ for the length-$m$ vector with
\begin{equation*}
	f(A)_{i'} = \sum_{i : f(i) = i'} A_i.
\end{equation*}
For sets $X, Y$, we define the \emph{sumset} $X + Y = \{ x + y : (x, y) \in X \times Y\}$ and some other shorthand notation: $a + X = \{ a + x : x \in X \}$, $aX = \{ ax : x \in X \}$, $X \bdiv a = \{ \floor{\frac xa} : x \in X \}$ and $X \bmod a = \{ x \bmod a : x \in X \}$. More generally, for a function defined on $X$ we set $f(X) = \{ f(x) : x \in X \}$.
% !TEX root = ../paper.tex

\section{Technical Overview} \label{sec:tech}
\subsection{Previous Techniques} \label{sec:prevtechniques}
Possibly the earliest work on sparse convolution is a quite complicated $\Order(k \log^2 n + \polylog(n))$-time\footnote{The claimed running time in their paper is $O(k\log^2n)$, however they need to pick a prime $p\in[n,2n]$, which requires time $\polylog(n)$ (this additive overhead disappears if the algorithm is allowed to hardcode $p$).} algorithm due to Cole in Hariharan~\cite{ColeH02} for the nonnegative case. Their approach builds on linear hashing and string algorithms in order to identify $\supp(A\conv B)$, and involves many ideas such as encoding characters with complex entries before applying convolution. The more recent approaches~\cite{Roche08,Roche11,VanDerHoevenL12,VanDerHoevenL13,ArnoldR15,Nakos20,GiorgiGC20} (the last two of which can also solve the general convolution problem) heavily build on hashing modulo a random prime number. This approach suffers from the loss of one log factor due to the density of the primes given by the Prime Number Theorem. Therefore, these approaches seem hopeless of getting time $\Order(k \log k)$, or even time $o(k \log^2 k)$.

On the other hand, a quite different $\Order(k \log^2 (n\Delta))$ algorithm, not explicitly written down as far as we know, is attainable using techniques from the sparse Fourier transform (assuming that complex exponentials can be evaluated in constant time). It has been established in the celebrated work of Hassanieh, Indyk, Katabi and Price~\cite{HassaniehIKP12} that one can recover a $k$-sparse vector $x \in \Complex^n$ in time $\Order(k \log (n\Delta) )$ by only accessing a small subset of its Fourier transform $\widehat x$. This alone might not seem sufficient, but spelling out the details of~\cite{HassaniehIKP12} reveals that the pattern of accesses to $\widehat x$ is a random arithmetic progression of length $\Order(k \log (n\Delta))$. In light of this, one can additionally leverage known techniques from semi-equispaced Fourier transforms~\cite{DuttR93},~\cite[Section~12]{IndykKP14} to obtain a $\Order(k \log^2 (n\Delta))$-time algorithm. The semi-equispaced Fourier transform is a well-studied subfield of computational Fourier transforms, and results from that area show that $s$ equally spaced Fourier coefficients of a length-$n$ and $s$-sparse vector can be computed in time $\Order(s \log (n\Delta))$~\cite[Section~12]{IndykKP14}. Combining this with the algorithm of~\cite{HassaniehIKP12} yields an $\Order(k \log^2 (n\Delta))$-time algorithm for sparse convolution. The inherent reason for this logarithmic blow-up is that going back and forth in Fourier and time domain is more costly in the sparse case than in the dense case. Furthermore, the above algorithm cannot yield a reduction between the sparse and dense convolution (more generally, the approach of~\cite{HassaniehIKP12} cannot yield such an equivalence, as their running time is not dominated by calls to FFT). It is a very interesting open question in that area to show any equivalence between some variant of sparse and dense Fourier transform, as well as to achieve $\Order(k \log k)$ running time.

There are other techniques for sparse convolution using polynomial interpolation, see~\cite{Roche18}, but they do not seem sufficient in going beyond a $\Order(k \polylog n)$-time algorithm in any variation of the problem, owing to the usage of a variety of tools from structured linear algebra which come with additional $\mathrm{polylog}$ factors. 

\subsection{Our Approach}
The goal is to solve the following problem in output-sensitive time:

\begin{restatable}[{{{\normalfont\scshape SparseConv}}}]{problem*}{probsparseconv}
\InputTask
    {Nonnegative vectors $A, B$ and a parameter $\delta > 0$.}
    {Compute $A \conv B$ with success probability $1 - \delta$.}
\end{restatable}

In what follows, we assume that we are given a number $k$ such that $\|A \conv B\|_0 \leq k$, and we want to recover $A \conv B$ in time $\Order(k \log k)$. This assumption will be removed in \autoref{sec:guess-k} using standard techniques. For the sake of simplicity, we will focus on how to obtain a constant-error randomized algorithm for sparse convolution from a deterministic algorithm for dense convolution.

\paragraph{The Obstructions Created by Known Recovery Techniques.}
So far, hashing-based approaches on computing sparse convolutions build on either of two well-known hash functions mapping $[n] \to [m]$:
\begin{itemize}
\item $g(x) = x~\mathrm{mod}~p$, where $p$ is a random prime of appropriate size. 
\item \emph{Linear hashing}: $h(x) = ((\sigma x + \tau) \bmod p) \bmod m$, where $p$ is a sufficiently large fixed prime number and $\sigma, \tau$ are random.\footnote{One can also use $h(x) = \lfloor ((\sigma x + \tau) \bmod p) m / p \big\rfloor$ for a sufficiently large prime $p$, which enjoys similar properties~\cite{Knudsen16}.} 
\end{itemize}
The first hash function satisfies $g(x+y) = (g(x) + g(y)) \bmod m$, in particular it is \emph{affine}, in the sense that $g(x+y) + g(0) \equiv g(x)+g(y) \pmod m$. In comparison, the second hash function is only \emph{almost-affine}, in the sense that $h(x+y) + h(0) - h(x) - h(y)$ can only take a constant number of different values. Although almost-affinity is an amenable issue in many situations, e.g.~\cite{Patrascu10,ChanH20}, in our case it appears to be a more serious obstruction for reasons outlined later.

In turn, the first hash function is only $O(\log n)$-universal. Thus, if we want to hash a size-$k$ set $X$ using $g$, such that a fixed $x \in X$ is isolated from every other $x' \in X$, we must pick $p = \Omega(k \log n)$. This results in a multiplicative $O(\log n)$ overhead on top of the number of buckets. In comparison, linear hashing is $O(1)$-universal, so setting $m=O(k)$ suffices for proper isolation.

\medskip
Before delving deeper, let us sketch how to design an $O(k\log k)$-time algorithm, assuming that we had an ``ideal'' hash function $\iota : [n] \to [m]$ that is $O(1)$-universal and affine, i.e., combines the best of $g(x)$ and $h(x)$. 
Then the hashed convolution could be easily computed as $\iota(A \conv B) = \iota(A) \conv_m \iota(B)$. 
The next ingredient is the derivative operator from~\cite{Huang19}. Defining the vector $\partial A$ with $(\partial A)_i = i \cdot A_i$, and similarly $\partial B$ with $(\partial B)_i = i \cdot B_i$, we have that $ \partial (A\conv B) = (\partial A)\conv B +  A \conv (\partial B)$, which when combined with the ideal hash function $\iota$ gives $\iota( \partial(A\conv B)) = \iota(\partial A) \conv_m \iota(B) +  \iota(A) \conv_m \iota(\partial B)$. The $b$-th coordinate of this vector is
\[	\iota( \partial(A\conv B))_b \;=\; \sum_{i: \iota(i) = b} i \cdot (A\conv B)_i, \]
which can be accessed by computing the length-$m$ convolutions $\iota(\partial A) \conv_m \iota(B)$ and $\iota(A) \conv_m \iota(\partial B)$ and adding them together.
By setting $m = O(k)$, we can now infer a constant fraction of elements $i \in \supp(A\conv B)$ by performing the division 
\[ \frac{\iota(\partial(A\conv B))_b}{\iota((A\conv B))_b} = \frac{\sum_{i: \iota(i) = b} i \cdot (A\conv B)_i}{\sum_{i: \iota(i) = b}  (A\conv B)_i} \] for all $b \in [m]$. This yields the locations of all isolated elements in $\supp(A\conv B)$ under $\iota$. In particular, we obtain a vector $\widetilde C$ such that $\|A\conv B-\widetilde C\|_0 \leq k / 2$, say. 

Now, a classical linear sketching technique~\cite{GilbertLPS10} kicks in. The idea is that we can recover the residual vector $A\conv B- \widetilde C$ by iteratively hashing to a geometrically decreasing number of buckets and performing the same recovery step as before. The number of buckets in the $\ell$-th iteration is $m_\ell = \Order(k / 2^\ell)$, and the goal is to obtain a sequence of vectors $\widetilde{C}^\ell$ such that $\|A\conv B- \widetilde{C}^\ell\|_0 \leq k / 2^\ell$. The crucial observation is that since $\iota$ is affine, we can \emph{cancel out} the contribution of the found elements $\widetilde{C}^\ell$ by the fact that $\iota(A\conv B) -\iota( \widetilde{C}^\ell) = \iota(A\conv B - \widetilde{C}^\ell)$. Thus, after $R=O(\log k)$ iterations~\cite{GilbertLPS10} we obtain a vector $\widetilde{C}^R$ such that $\norm{A\conv B- \widetilde{C}^R}_0 = 0$, recovering $A\conv B$. The running time is dominated by the first iteration, where an FFT over vectors of length $O(k)$ is performed.

Unfortunately, we do not have access to such an ideal function $\iota$. Replacing $\iota$ by $h$ or $g$ runs into issues: If we use $h$ as a substitute, we \emph{cannot cancel out} the contribution of the found elements, since $h$ is only almost affine but not affine. Specifically, the sparsity of $h(A\conv B) - h(\widetilde{C}^\ell)$ does not necessarily decrease in the next iteration, which renders the geometric decreasing number of buckets impossible and thus precludes iterative recovery. If we use $g$ as a substitute, we need to pay additional log factors to ensure isolation of most coordinates, even in the very first iteration.

\medskip
Given this discussion, it seems that the known hash functions reach a barrier on the way to designing $O(k \log k)$-time algorithms. We show how to remedy this state of affairs. 

In the following we describe our approach in five steps.

\paragraph{Step 0: Universe Reduction from Large to Small.}
The first step is to reduce our problem to a universe of size $U = \poly(k)$. We will refer to this regime of $U$ as a \emph{small} universe, and say that $U$ is \emph{large} if there is no bound on $U$. Formally, we introduce the following problem.

\begin{restatable}[{{{\normalfont\SmallUnivSparseConv}}}]{problem*}{probsmallunivsparseconv}
\InputTask
    {Nonnegative vectors $A, B$ of length $U$, an integer $k$ such that $\norm{A \conv B}_0 \leq k$ and $U = \poly(k)$.}
    {Compute $A \conv B$ with success probability $1 - \delta$.}
\end{restatable}

We show in \autoref{sec:large-to-small} how to reduce the general problem of computing $A \conv B$ in a large universe~$n$ to three instances in a small universe $U$. This makes use of the fact that in this parameter regime the linear hash function $h$ is \emph{perfect} with probability $1-1/\poly(k)$. In combination with the derivative operator $\partial$, it suffices to compute the three convolutions $h(A)\conv_U h(B), h(\partial A) \conv_U h(B)$, $h(A) \conv_U h(\partial B)$. Note that the cyclic convolution $\conv_U$ can be reduced in the nonnegative case to the non-cyclic convolution at the cost of doubling the sparsity of the underlying vector, i.e., $\|h(A) \conv h(B)\|_0 \leq 2\|h(A)\conv_U h(B)\|_0 \leq 2\|A \conv B\|_0 \leq 2k$. This yields the claimed reduction.

This universe reduction ensures that from now on the function $g(x) = x \bmod p$ is $O(\log k)$-universal, i.e., we have removed its undesired dependence on $n$, which will be important for the next step. We stress as a subtle detail that this step crucially relies on the fact that we are dealing with nonnegative convolution, for more details see \autoref{sec:generalconvolution}.

\paragraph{Step 1: Error Correction.} 
In the next step, we show that it suffices to compute the convolution $A \conv B$ up to $k / \polylog k$ errors, since we can correct these errors by iterative recovery with the affine hash function~$g$. More precisely assume that we can somehow recover a vector $\widetilde C$ such that $\|A\conv B - \widetilde C\|_0 \leq k / \polylog k$. In other words, suppose that we could efficiently solve the following problem for an appropriate parameter $\gamma$ (think of $\gamma = 1/\log k$).

\begin{restatable}[{{{\normalfont\SmallUnivApproxSparseConv}}}]{problem*}{probsmallunivapproxsparseconv}
\InputTask
    {Nonnegative vectors $A, B$ of length $U$, an integer $k$ such that $\norm{A \conv B}_0 \leq k$ and $U = \poly(k)$.}
    {Compute $\widetilde C$ such that $\norm{A \conv B- \widetilde C}_0 \leq \gamma k$ with success probability $1 - \delta$.}
\end{restatable}

If we are able to do so, then the remaining goal is to correct the error between $A \conv B$ and~$\widetilde C$. We can access the residual vector via $g(A) \conv_m g(B) - g(\widetilde C) = g(A\conv B - \widetilde C)$, for~$g:[U]\rightarrow [O(k)]$. Thus, since the new universe size is a log factor larger than the sparsity of the residual vector, it is possible to continue in an iterative fashion using $g$ and still be within the $O(k \log k)$ time bound. Note that
\begin{enumerate*}[font=,label=(\roman*)]
\item it is crucial that we have recovered a $(1-1/\log k)$-fraction of the coordinates of $\widetilde C$ rather than only a constant fraction, and
\item it can (and will) be the case that $\supp(\widetilde C) \setminus \supp(A\conv B) \neq \emptyset$, i.e., there are spurious elements, but those spurious elements will be removed upon iterating.
\end{enumerate*}

There is one catch: Iterative recovery creates a sequence of successive approximations $\widetilde{C}^1, \widetilde{C}^2,\ldots $ to $A\conv B$, and the time to hash each such vector, i.e., to perform the subtraction $g(A) \conv_m g(B) - g(\widetilde{C}^\ell)$, is $O(k)$. Since there are $O(\log k)$ such subtractions, the total cost spent on subtractions is $O(k \log k)$, which suffices for \autoref{thm:corollary} but not for \autoref{thm:core}. The natural solution is to reduce the number of successive approximations (iterations), which is closely related to the column sparsity of linear sketches that allow iterative recovery. More sophisticated iterative loop invariants exist~\cite{IndykKPW11,PriceW12,GilbertNPRS13}, but these all get $\Omega(\log k)$ column sparsity. What we observe is that, surprisingly, a small modification of the iterative loop in~\cite{GilbertLPS10} finishes in $O(\log \log k)$ iterations, rather than $O(\log k)$. In the $\ell$-th iteration we hash to $O(k / \ell^2)$ buckets, and let $k_\ell = \|A\conv B - \widetilde{C}^\ell\|_0$. An easy argument yields that with probability $1-1/\ell^2$ we have $k_{\ell+1} \leq 1/10 \mult k_\ell^2 / k \mult \ell^4$, which yields $k_L  < 1$ for $L = O( \log \log k)$. This means that the subtraction is performed $O( \log \log k)$ times, so the additive running time overhead is only $O(k \log \log k)$. A more involved implementation of this idea (due to the fact that we are interested in $o(1)$ failure probability) appears in \autoref{sec:error_correction}.

\paragraph{An Attempt using Prony's Method.} 
So far we have reduced to small universe and established that we can afford $k/\polylog k$ errors. In the following we want to recover a $(1-1/\log k)$-fraction of the coordinates ``in one shot''. Consider the following line of attack. Fix a parameter $T \ll k$ and a linear hash function $h : [U] \rightarrow [k/T]$. We aim to recover, for each bucket $b \in [k/T]$, all entries of the convolution $A \conv B$ that are hashed to bucket $b$.\footnote{Here and in the following for ease of exposition we ignore the issue that entries of $A \conv B$ can be split up, due to~$h$ being only almost-affine.} This corresponds to hashing $A \conv B$ to $k/T$ buckets; we expect to have $T$ elements per bucket and thus most buckets contain at most $2T$ elements, say. Note that we no longer expect isolated buckets, so we cannot use the derivative operator. However, we can instead get access to the first~$4T$ Fourier coefficients of each vector $(A\conv B)_{ h^{-1}(b) }$ in the following way. Let $\omega$ be a $U$-th root of unity. For each $t \in [2T]$, set $(\omega^t \bullet A)_i = \omega^{ti} A_i$ and $(\omega^t \bullet B)_i = \omega^{ti} B_i$ and perform the convolution $ h(\omega^t \bullet A) \conv_{k/T} h(\omega^t \bullet B)$. This yields
\begin{align*}
(h(\omega^t \bullet A) \conv_{k/T} h(\omega^t \bullet B))_b = \sum_{(h(i)+h(j)) \bmod k/T = b } \omega^{t(i+j)}\cdot A_i B_j,
\end{align*}
which is essentially the $t$-th Fourier coefficient of $(A\conv B)_{ h^{-1}(b) }$.

The time to perform these $4T$ convolutions is $O((k/T) \cdot \log (k/T)) \cdot 4T = O(k \log k)$. Now, a classical algorithm due to Gaspard de Prony in 1796 (rediscovered several times since then, for decoding BCH codes~\cite{Wolf67} and in the context of polynomial interpolation~\cite{BenOrT88}) postulates that any $2T$-sparse vector can be efficiently reconstructed from its first $4T$ Fourier coefficients. However, Prony's method with finite precision or over a finite field does not have sufficiently fast algorithms for our needs.

Nevertheless, there is another problem with this approach. Since we want to recover a $(1-1/\log k)$-fraction of elements in $A \conv B$, for a $(1-1/\log k)$-fraction of support elements $i \in \supp(A \conv B)$ it must be the case that $|h^{-1}(h(i))| \leq 2T$. This is a necessary condition in order to recover $(A \conv B)_{h^{-1}(h(i))}$ using $4T$ Fourier coefficients. 
If $h$ was three-wise independent, a standard argument using Chebyshev's inequality would show the desired concentration bound. However, since the linear hash function $h$ is only pairwise independent, we need to take a closer look at concentration of linear hashing.

\paragraph{Intermezzo on Linear Hashing.}
A beautiful paper of Knudsen~\cite{Knudsen16} shows that the linear hash function $h$, despite being only pairwise independent, satisfies refined concentration bounds.

\begin{theorem}[Informal Version of {{\cite[Theorem 5]{Knudsen16}}}]
Let $X \subseteq [U]$ be a set of~$k$ keys. Randomly pick a linear hash function $h$ that hashes to $m$ buckets, fix a key $x \not\in X$ and buckets $a, b \in [m]$. Moreover, let $y, z \in X$ be chosen independently and uniformly at random. Then:
\begin{equation}\label{eq:knudsen-informal}
    \Pr(h(y) = h(z) = b \mid h(x) = a) \leq \frac1{m^2} + \frac{2^{\Order(\sqrt{\log k \log\log k})}}{m k}.
\end{equation}
\end{theorem}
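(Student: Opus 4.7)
The plan is to analyze the joint distribution of $(h(x), h(y), h(z))$ by exploiting the two-stage structure $h(x) = ((\sigma x + \tau) \bmod p) \bmod m$. Writing $H(x) = (\sigma x + \tau) \bmod p$, the key identity is the affine relation $H(y) \equiv H(x) + \sigma(y - x) \pmod p$. Conditional on $h(x) = a$, the shift $\tau$ becomes uniform over the $\approx p/m$ values for which $H(x) \in a + m\mathbb{Z}$, and the events $h(y) = b$ and $h(z) = b$ translate to $\sigma(y - x)$ and $\sigma(z - x)$ landing in a prescribed union of arithmetic progressions modulo $p$. Under genuine $3$-wise independence this would yield exactly $1/m^2$; the deviation stems from the single multiplier $\sigma$ acting on both differences simultaneously.

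My first step is to fix $y, z \in X$ and bound $\Pr(h(y) = h(z) = b \mid h(x) = a)$ in terms of how $\sigma(y - x)$ and $\sigma(z - x)$ co-distribute modulo $p$. The diagonal contribution $y = z$ occurs with probability $1/k$ over the uniform choice of $y, z$ and, by pairwise independence, contributes about $1/(mk)$ after averaging, which is already consistent with the error term's denominator. For $y \neq z$, letting $d_1 = y - x$ and $d_2 = z - x$, I want to bound $\mathbb{E}_\sigma\bigl[\mathbf{1}(\sigma d_1 \in I_1 \bmod m)\,\mathbf{1}(\sigma d_2 \in I_2 \bmod m)\bigr]$ for intervals $I_1, I_2 \subseteq [p]$ of length $p/m$. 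In the "generic" case this is $\approx 1/m^2$, but it can blow up when $d_1/d_2 \bmod p$ admits a small-denominator rational approximation, since then both events can be forced to occur in a correlated way.

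To control the bad pairs I would partition $X - x$ dyadically by the scale of the differences in $\mathbb{Z}_p$ and, for each pair of scales, bound the number of $(d_1, d_2)$ whose $\sigma$-image is abnormally concentrated. The natural tools here are Fourier analysis on $\mathbb{Z}_p$ (whose non-trivial characters control the deviation of an interval-indicator from its expectation) combined with upper bounds on the additive energy of $X - x$ at each scale. Averaging then converts a per-pair estimate into the expectation over uniform $(y,z) \in X^2$, producing an error of the form $(\text{scale factor})/(mk)$.

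The main obstacle, and the technical heart of the proof, is obtaining the specific exponent $\sqrt{\log k \log \log k}$ rather than a plain $\log k$. A naive dyadic analysis would union-bound over $\Theta(\log p)$ scales and lose a full $\log k$ factor, producing an unusable $2^{O(\log k)}$ overhead. The savings to $2^{O(\sqrt{\log k \log \log k})}$ require grouping scales into blocks of size roughly $\sqrt{\log k / \log\log k}$ and applying a second-moment or Fourier-analytic argument \emph{within} each block before union-bounding \emph{across} blocks; this balance between per-block precision and number of blocks is where I expect the calculation to be most delicate, and is the subtle optimization that Knudsen's argument crucially exploits.
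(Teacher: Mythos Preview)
Your intuition is right in one crucial respect: the deviation from $1/m^2$ is governed, for fixed $y,z$, by how well the ratio $(y-x)/(z-x)$ is approximable by a small-numerator, small-denominator rational. But the machinery you propose is not the one that actually yields the bound, and your account of where the exponent $\sqrt{\log k\log\log k}$ comes from is off.

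Knudsen's argument (which the paper reproduces in \autoref{sec:hashing}) does not use Fourier analysis on~$\Int_p$ or additive energy. Instead it introduces the \emph{height} $H(\frac{y-x}{z-x})$ and proves, via an elementary discrepancy bound for dilated arithmetic progressions (\autoref{lem:discrepancy-inverses} and \autoref{lem:progression-independence}), that for each fixed pair $y,z$,
\[
\Pr\bigl(h(y)=h(z)=b\,\big|\,h(x)=a\bigr)\;\le\;\frac{1}{m^2}+O\!\left(\frac{1}{m\,H_p\!\left(\tfrac{\residue y-\residue x}{\residue z-\residue x}\right)}\right).
\]
Averaging over uniform $y,z\in X$ then reduces the problem to bounding the purely number-theoretic sum $\sum_{y,z\in X}1/H\!\left(\frac{y-x}{z-x}\right)$. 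Knudsen shows this sum is at most $k\cdot 2^{O(\sqrt{\log k\log\log k})}$ for \emph{every} $k$-element set~$X$; that is the sole source of the exponent. It is not a by-product of balancing block sizes against a union bound over dyadic scales, and no second-moment argument within blocks is involved. (The paper's lower bound in \autoref{sec:hashing-lower-bound}, built from products of $\Theta(\sqrt{\log k/\log\log k})$ primes, shows the exponent is essentially tight and clarifies its arithmetic origin.)

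So the gap is twofold: you are missing the height concept and the per-pair discrepancy lemma that makes it effective, and your proposed mechanism for the $\sqrt{\log k\log\log k}$ savings is speculative and does not match the actual argument. A Fourier/energy route might in principle recover the per-pair bound, but the final exponent would still have to come from the inverse-height sum, not from a scale-grouping optimization.
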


Using the above theorem and Chebyshev's inequality, Knudsen arrives at a concentration bound on the number of elements falling in a fixed bucket, see~\cite[Theorem~2]{Knudsen16}.\footnote{We are referring to the FOCS proceedings version, which differs in an important way from the arXiv version.} Up to the factor $2^{\Order(\sqrt{ \log k \log \log k})} = k^{\order(1)}$, this would indeed be the concentration bound satisfied by three-wise independent hash functions. However, this additional $k^{o(1)}$ factor is crucial for our application. Moreover, as we show in \autoref{sec:hashing}, the analysis in~\cite{Knudsen16} is nearly tight. In particular, we show the existence of a set $X$ such that the $k^{o(1)}$ factor is necessary.

\begin{theorem}[{\cite[Theorem 5]{Knudsen16}} is Almost Optimal] \label{thm:linearhashingexample}
Let $k$ and $U$ be arbitrary parameters with $U \geq k^{1+\epsilon}$ for some constant $\epsilon > 0$, and let $h$ be a random linear hash function which hashes to $m$ buckets. Then there exists a set $X \subseteq [U]$ of $k$ keys, a fixed key $x \not\in X$ and buckets $a, b \in [m]$ such that for uniformly random $y, z \in X$ we have
\begin{equation*}
    \Pr(h(y) = h(z) = b \mid h(x) = a) \geq \frac1{mk} \exp\!\left(\Omega\!\left(\sqrt{\min\left(\tfrac{\log k}{\log\log k}, \tfrac{\log U}{\log^2 \log U}\right)}\right)\!\right)\!.
\end{equation*}
\end{theorem}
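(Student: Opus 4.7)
The plan is to construct an explicit bad set $X$ and to mimic the upper bound argument of Knudsen in reverse, showing it is essentially tight. First, I would set up the target: the conditional probability $\Pr(h(y) = h(z) = b \mid h(x) = a)$, averaged over $y, z$ chosen uniformly from $X$, can be rewritten as $\frac{1}{|X|^2}\sum_{y,z \in X} \Pr_h(\cdots)$. The diagonal $y = z$ already contributes the baseline $\Theta(1/(mk))$. The remaining task is to find $X$ such that the off-diagonal contribution beats this baseline by a factor of $2^{\Omega(\sqrt{\log k / \log\log k})}$; equivalently, I would exhibit at least $k \cdot 2^{\Omega(\sqrt{\log k/\log \log k})}$ ordered pairs $(y,z) \in X^2$ each carrying a conditional probability of order $1/m$.

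I would take $X$ to be a $d$-dimensional generalized arithmetic progression
\begin{equation*}
  X = \Bigl\{ \sum_{i=0}^{d-1} a_i q^i : 0 \le a_i < s \Bigr\},
\end{equation*}
with $s = \lceil k^{1/d} \rceil$, $q > s$ chosen so all elements are distinct, and $d$ the optimization parameter. The constraint $\max(X) \le U$ forces $q \le (U/s)^{1/(d-1)}$, which is where the hypothesis $U \ge k^{1+\epsilon}$ enters and what ultimately bounds $d$. I pick some $x \notin X$, say $x = q^d$, with $a, b \in [m]$ to be chosen later. The point of this construction is that the difference set $X - X$ inherits the GAP structure: the representation count $r_{X-X}(t)$ factorizes across the $d$ coordinates, yielding a multiplicity of $\Theta(s^d) = \Theta(k)$ for the zero difference and, more importantly, $\Theta(k \cdot 2^d)$ pairs $(y,z)$ whose differences are of a ``low-complexity'' form with small digit-wise entries.

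The core of the argument is to show that each of these $\Theta(k \cdot 2^d)$ structured pairs contributes at rate $\Omega(1/m)$ to the conditional probability. To this end, I would expand $\sigma(y-x) \bmod p \bmod m$ using the two-level modular structure of linear hashing: conditioning on $h(x) = a$ and introducing the inner wrap-around indicator $\gamma_y \in \{0,1\}$, the event $h(y) = b$ is equivalent to $\sigma(y-x) \equiv (b-a) + \gamma_y p \pmod m$. Choosing $a, b$ so that both residues in $\{b-a, b-a+p\} \bmod m$ are ``useful,'' and combining the constraints for $y$ and $z$ with the GAP structure on $y - z$, I would show that a random $\sigma$ in a carefully chosen range (roughly $\sigma \cdot \max(X) < p$, so that no inner wraparound occurs) produces collisions for $2^{\Omega(d)}$ many pairs on average. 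Finally, optimizing $d = \Theta(\sqrt{\min(\log k/\log\log k, \log U/\log^2 \log U)})$ balances the various terms and yields the claimed bound.

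The main obstacle will be the interaction between the two moduli: the presence of the wraparound term $\gamma_y p$ modulo $m$ destroys the clean linearity of $\sigma \mapsto h(\sigma x + \tau)$ and makes the ``matching condition'' on $\sigma(y-z)$ genuinely nonlinear. I would handle this either by restricting to a subinterval of $\sigma$-values where $\gamma_y = \gamma_z = 0$ for all relevant $y, z \in X$, or by accounting for each of the $O(1)$ possible $(\gamma_y, \gamma_z)$ cases separately and showing each contributes comparably; either way, this costs only a constant factor and does not affect the exponential improvement. A secondary but more bookkeeping-like obstacle is verifying that the enumeration over the GAP's digits assembles into exactly $2^{\Omega(d)}$ collisions per bucket, rather than a polynomial-in-$d$ factor, which is what distinguishes the multi-dimensional construction from a mere 1D arithmetic progression.
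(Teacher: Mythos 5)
Your high-level framing is the right one (beat the diagonal baseline $\Theta(\frac{1}{mk})$ by exhibiting many ordered pairs with elevated conditional collision probability), but the core step of your plan is based on the wrong notion of which pairs are ``bad'', and this is fatal to the GAP construction. For a linear hash function with a uniformly random multiplier $\sigma$, the joint event $\{h(y)=b\}\cap\{h(z)=b\}$ conditioned on $h(x)=a$ is an event about the pair $(\sigma(y-x),\sigma(z-x))$ landing in a product of arithmetic progressions, and its probability is controlled by the \emph{multiplicative} complexity of the pair, namely the height $H\bigl(\frac{y-x}{z-x}\bigr)$ of the reduced fraction: the paper's \autoref{lem:progression-independence} gives the upper bound $\frac{1}{m^2}+O\bigl(\frac{1}{m H}+\frac1p\bigr)$ per pair, and the lower-bound proof obtains the matching $\Omega\bigl(\frac{1}{mH}\bigr)$ by counting solutions of $i z\equiv j y \pmod p$ with $i,j$ small. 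A small or digit-structured \emph{difference} $y-z$ buys nothing: for example $z=y+1$ (with $x=0$) has the smallest possible difference, yet $H(y/z)=y+1$ is maximal and the pair contributes only the independent baseline $\approx 1/m^2$. So your claim that each of the $\Theta(k\cdot 2^d)$ pairs with small digit-wise differences ``contributes at rate $\Omega(1/m)$'' is unjustified and false. Worse, for your set $X=\{\sum_i a_iq^i\}$ with $x=q^d$ and a generic (say prime, large) $q$, all differences $y-x,z-x$ have magnitude $\approx q^d$ and pairwise $\gcd$ of size $O(1)$, so every off-diagonal height is enormous; plugging this into the identity derived in the proof of \autoref{thm:threewise-independence},
\begin{equation*}
\Pr(h(y)=h(z)=b \mid h(x)=a)\;\le\;\frac{1}{m^2}+O\!\left(\frac{1}{k^2 m}\sum_{y,z\in X}\frac{1}{H_p\!\left(\frac{y-x}{z-x}\right)}+\frac1p\right)\!,
\end{equation*}
which holds for \emph{every} set $X$, shows the sum is dominated by the diagonal $\Theta(k)$ and your set achieves only $\frac{1}{m^2}+O(\frac{1}{mk})$, i.e., it provably cannot witness the theorem. (Restricting to $\sigma$ with no inner wraparound does not rescue this; it only isolates an event of probability $\approx 1/\max(X)\le 1/k$ and still leaves you needing the multiplicative coincidences that the GAP lacks.)

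What is needed, and what the paper does, is a \emph{multiplicatively} structured set: $X'=\{s\prod_{i\in I}p_i : I\subseteq[n],\,|I|=n/2,\,1\le s\le S\}$ for $n$ primes $p_i\in[n\log n, Cn\log n]$, with $n=\min\bigl(\Theta(\tfrac{\log U}{\log\log U}),\log k\bigr)$ forced by the constraint $\max(X)\le U$. For $y=s\prod_{i\in I}p_i$ and $z=s\prod_{i\in J}p_i$ with $|I\setminus J|=|J\setminus I|=r$, everything but the symmetric difference cancels in $y/z$, so $H(y/z)=\Theta(n\log n)^r$, while each $y$ has $\binom{n/2}{r}^2$ such partners; optimizing $r=\Theta(\sqrt{n/\log n})$ yields $\sum_{y,z}1/H(y/z)\ge k\exp(\Omega(\sqrt{n/\log n}))$ (\autoref{lem:heights-lower-bound}), which combined with the per-pair bound $\Omega(\frac{1}{mH})$ and averaging over $y,z$ gives the theorem. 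If you want to salvage your write-up, the repair is to replace the GAP by this product-of-primes construction and to replace ``low-complexity difference'' by ``small height of the ratio of differences'' as the criterion driving the whole argument.
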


This brings us to an unclear situation. The structured linear algebra machinery of Prony's method seems inadequate for our purposes and the state of the art concentration bounds of linear hashing do not seem to be sufficiently strong. However, we show again how to remedy this state of affairs. 

\medskip
Our first trick (Step 2) is to reduce to a \emph{tiny} universe of size $k \polylog k$. Note that then \autoref{thm:linearhashingexample} is no longer applicable, and indeed we show improved concentration bounds for linear hashing as we shall see later. Another technical step is to approximate the support of $A \conv B$ (Step 3), which can be done efficiently when the universe is tiny. This replaces the computationally expensive part of Prony's method. After that, we are ready to make the attempt work (Step 4). These steps are described in the following.

\paragraph{Step 2: Universe Reduction from Small to Tiny.}
We further reduce the universe size to $U = k \polylog k$; let us call this regime of $U$ \emph{tiny}. This is the smallest universe we can hash to while ensuring that with constant probability a $(1-1/\log k)$-fraction of coordinates is isolated under the hashing. Apart from this difference the reduction is very similar to Step~0. It remains to solve the following computational problem (again, you may think of $\gamma = 1/ \log k$). This is done in \autoref{sec:small-to-tiny}.

\begin{restatable}[{{{\normalfont\TinyUnivApproxSparseConv}}}]{problem*}{probtinyunivapproxsparseconv}
\InputTask
    {Nonnegative vectors $A, B$ of length $U$, an integer $k$ such that $\norm{A \conv B}_0 \leq k$ and $U \leq k / \gamma^2$.}
    {Compute $\widetilde C$ such that $\norm{A \conv B- \widetilde C}_0 \leq \gamma k$ with success probability $1 - \delta$.}
\end{restatable}

\paragraph{Step 3: Approximating the Support.} 
Next we want to approximate the support $\supp(A \conv B)$. Specifically, we want to recover a set $X$ of size $|X| = \Order(k)$ such that $|\supp(A\conv B) \setminus X| \leq k / \polylog k$. Since $\supp(A \conv B) = \supp(A) + \supp(B)$, for $Y = \supp(A),\, Z = \supp(B)$ we formally want to solve the following problem.

\begin{restatable}[{{{\normalfont\TinyUnivApproxSupp}}}]{problem*}{probtinyunivapproxsupp}
\InputTask
    {Sets $Y, Z \subseteq [U]$ and an integer $k$, such that $U \leq k / \gamma$ and $|Y + Z| \leq k$.}
    {Compute a set $X$ of size $\Order(k)$ such that $|(Y + Z) \setminus X| \leq \gamma k$.}
\end{restatable}

To this end, we create a sequence of successive approximations to $Y + Z$. Consider the sets
\begin{equation*}
    Y_\ell = \left\{ \floor*{\frac y{2^\ell}} : y \in Y \right\}, \quad Z_\ell = \left\{\floor*{\frac z{2^\ell}} : z \in Z \right\},
\end{equation*}
for $0 \le \ell \le \log(U/k)$. For $\ell \ge \log (U/k)$, we have $Y_\ell,Z_\ell \subseteq [k]$, and thus we can compute $X_\ell := Y_\ell + Z_\ell$ by one Boolean convolution in time $O(k \log k)$. Since $U$ is tiny, the number of levels is just $\log(U/k) = O(\log \log k)$. It remains to argue how to go from level~$\ell + 1$ to~$\ell$, to finally approximate $Y_0 + Z_0 = Y + Z$. We say that a set~$X_\ell$ \emph{closely approximates} $Y_\ell + Z_\ell$ if $|X_\ell| = \Order(k)$, and $|(Y_\ell + Z_\ell) \setminus X_\ell| \leq k / \polylog k$. Given a set $X_{\ell+1}$ which closely approximates $Y_{\ell+1} + Z_{\ell+1}$, we want to find a set $X_\ell$ which closely approximates $Y_\ell+Z_\ell$. It is not hard to see that a candidate for~$X_\ell$ is $2 X_{\ell+1} + \{0,1,2\}$. Hence the main problem is keeping the size of~$X_\ell$ small by filtering out false positives. One way to do so would be to compute $h(Y_\ell) + h(Z_\ell)$, for a random linear hash function $h : [U] \to [\Order(k)]$. We then throw away all coordinates $i \in 2X_{\ell+1} + \{0,1,2\}$ for which the bucket $h(i)$ is empty. Naively computing the convolution would lead to time $\Omega(k \log k \log \log k)$. To improve this, we apply an algorithm due to Indyk:

\begin{theorem}[Randomized Boolean Convolution~{{\cite{Indyk98}}}]
There exists an algorithm which takes as input two sets $Y', Z' \subseteq [U]$, and in time $\Order(U)$ outputs a set $\mathcal O \subseteq Y' + Z'$, such that for all $x \in Y' + Z'$ we have $\Pr(x \in \mathcal O) \geq \frac{99}{100}$.
\end{theorem}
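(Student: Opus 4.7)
The plan is to compress the universe by random linear hashing to $m = \Theta(U/\log U)$ buckets, compute a single FFT-based dense Boolean convolution on the compressed vectors in $O(U)$ time, and then certify individual sums via hash-table lookups so that the output is contained in $Y'+Z'$ deterministically.

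First I would draw a random linear hash $h\colon [U] \to [m]$ and pre-store $Y', Z'$ in open-addressing hash tables, all in $O(U)$ time. Form the indicator vectors $A, B \in \{0,1\}^m$ with $A_b = [\,h^{-1}(b) \cap Y' \neq \emptyset\,]$ and analogously for $B$, then compute $C = A \conv_m B$ (over the Boolean semiring, by rounding a real FFT) in time $O(m \log m) = O(U)$. By almost-affinity of $h$, every $x \in Y'+Z'$ lands in an active bucket $h(x) \in \supp(C)$ up to $O(1)$ correction offsets, which I would handle by running $O(1)$ shifted variants.

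Second, I would enumerate the candidate set $\bigcup_{b \in \supp(C)} h^{-1}(b) \cap [2U-1]$, whose expected size is $O(U)$ by a second-moment bound on bucket occupancies. For each candidate $x$ I would try to find a witness pair $(y,z) \in Y' \times Z'$ with $y+z = x$ by drawing $O(1)$ random $y \in Y'$ whose hash is consistent with $h(x)$, querying $x - y \in Z'$ against the hash table, and inserting $x$ into $\mathcal O$ upon success. Because every output element comes with an explicit witness, $\mathcal O \subseteq Y'+Z'$ holds deterministically, and the total work remains $O(U)$.

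\textbf{Main obstacle.} The hard part is showing that a \emph{fixed} $x \in Y'+Z'$ is captured with probability at least $99/100$. Since $h$ is only pairwise independent, concentration on bucket occupancies is weaker than under fully independent hashing, so I would argue via Chebyshev that with probability $\ge 99/100$ the bucket $h(x)$ is not pathologically crowded, and that a constant number of sampled witnesses then suffice to find a valid pair certifying $x$. Balancing the bucket size, the FFT cost, and the sampling amplification simultaneously within the $O(U)$ budget is the delicate point, and I expect to follow Indyk's original analysis closely here.
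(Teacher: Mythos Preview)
The paper does not prove this statement; it is quoted as a black box from Indyk's 1998 paper and used as an ingredient in \autoref{sec:approx-supp}. So there is no ``paper's own proof'' to compare your proposal against.

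That said, your proposed argument has a genuine gap in the witness-certification step. Take any $x \in Y'+Z'$ with a \emph{single} witness pair $(y_0,z_0)$. After hashing $Y'$ into $m = \Theta(U/\log U)$ buckets, the bucket containing $y_0$ has expected occupancy $|Y'|/m = |Y'|\log U / U$, which is $\Theta(\log U)$ whenever $|Y'| = \Theta(U)$. Drawing $O(1)$ random $y$'s ``whose hash is consistent with $h(x)$'' therefore hits $y_0$ with probability only $O(1/\log U)$, not $99/100$. Your appeal to Chebyshev does not help here: the obstacle is not that the bucket is pathologically crowded, but that even at its \emph{expected} size the lone witness is a $1/\log U$ fraction of it. Raising the number of samples to $\Theta(\log U)$ repairs the per-element guarantee but pushes the total work to $\Theta(U\log U)$; raising $m$ to $\Theta(U)$ makes the FFT itself $\Theta(U\log U)$. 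Either way you leave the $O(U)$ budget.

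Indyk's actual algorithm is not ``hash $+$ small FFT $+$ per-candidate witness sampling''; it proceeds by a different route (geometric random subsampling of one of the two sets, stratified by witness multiplicity, so that elements with many witnesses are caught by sparse samples and elements with few witnesses are caught when the sample is dense). In particular, deferring to ``Indyk's original analysis'' does not rescue your scheme, because the analysis you would need is for a different algorithm. If you want to supply a self-contained proof here you will have to replace the witness-sampling step by a mechanism that does not degrade with the number of non-witnesses sharing a bucket.
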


Since Indyk's algorithm has a small probability of not reporting an element in the sumset, this leads to losing some of the elements in $\supp(A) + \supp(B)$, but we are fine with $k/\polylog k$ errors. On the positive side, compared to standard Boolean convolution this reduces the running time by a factor $\log k$. 
Putting everything together carefully, we show that $\supp(A\conv B)$ can be approximated in time $\Order(k (\log \log k)^2)$. For the complete proof we refer to Section~\ref{sec:approx-supp}.

\paragraph{Step 4: Approximate Set Query.} 
With all reductions and preparations discussed so far, it remains to solve the following problem to finish our algorithm, for details see \autoref{sec:set-query}.

\noindent
\parbox{\linewidth}{
\begin{restatable}[{{{\normalfont\TinyUnivApproxSetQuery}}}]{problem*}{probtinyunivapproxsetquery}
\InputTask
    {Nonnegative vectors $A, B$ of length $U$, an integer~$k$ such that $\norm{A \conv B}_0 \leq k$ and $U  \leq k / \gamma^2$, and a set $X$ with $|X| = \Order(k)$ and $|\supp(A \conv B) \setminus X| \leq \order(\gamma^2 k)$.}
    {Compute $\widetilde C$ such that $\norm{A \conv B - \widetilde C}_0 \leq \gamma k$ with success probability $1 - \delta$.}
\end{restatable}
}

This is the last step of the algorithm. As in the approach using Prony's method that we discussed above, we pick a parameter~$T$, hash to $k/T$ buckets, and get access to $h(\omega^t \bullet A) \conv_{k/T} h(\omega^t \bullet B)$. Here, $\omega$ is an appropriate element in $\Int_q^\times$ for $q$ a sufficiently large prime. The surprising observation is that in a tiny universe $U$ the lower bound on the concentration bound of linear hashing does not apply, and in fact a much stronger concentration bound is attainable. In particular, we obtain the analogue of \eqref{eq:knudsen-informal} where the term $2^{O(\sqrt{\log k \log \log k})}$ is replaced by $\polylog k$, see \autoref{sec:hashing}. This can be proved using the machinery established in~\cite{Knudsen16} as well as some elementary number theory, and is actually simpler than the complete analysis of~\cite{Knudsen16}.

Furthermore, we can now circumvent the computationally expensive part of Prony's method, since we have knowledge of most of the support $\supp(A \conv B)$. It turns out that we only need to solve $\Order(k/T)$ transposed Vandermonde systems of size $\Order(T)\times\Order(T)$ over~$\Int_q$. The part of the support we do not know might mess up some the estimates due to collisions, but it is such a small fraction that cannot make us misestimate more than a $1/\polylog k$-fraction of the coordinates in $X$ (and the errors that will be introduced due to misestimation will be cleaned up by the iterative recovery loop in Step 2). Using the improved concentration bound for linear hashing, a fast transposed Vandermonde solver~\cite{Li00}, and some additional tricks to compute all $h(\omega^t \bullet A)$ simultaneously, we can pick $T = \polylog k$ and arrive at a $\Order(k \log k)$-time algorithm, that is also a reduction from sparse to dense convolution.

One last detail is that Vandermonde system solvers compute multiplicative inverses, which cost time $\Omega(\log q) = \Omega( \log (n\Delta))$ each, and thus account for time $\Omega(k \log (n\Delta))$ in total. We observe that, since we are solving several (in particular, $k/T$) Vandermonde systems, we can run all of them in parallel and batch the inversions across calls. We can then simulate $k/T$ inversions using $O(k/T)$ multiplications and just one division, see \autoref{lem:bulk-division}. This yields $\Order(k \log k)$ running time and, as claimed in \autoref{thm:core}, an additive $\polylog(n\Delta)$ term (which is already present, only for choosing the prime~$q$).

\subsection{What Makes General Convolution Harder?} \label{sec:generalconvolution}
The reader may ask whether general convolution can be attacked using our techniques. We want to stress that a linear hash function, which is one of our building blocks and at the core of almost all the steps of the algorithm, seems not to be suited for general convolution, due to the fact that it is almost-affine, but not affine. For an element $x \in [n]$ consider the quantities
\begin{align*}
    c_1 &= \sum_{\substack{y + z = x, \\ h(y) + h(z) \equiv h(0) + h(x)}} A_y B_z, \\
    c_2 &= \sum_{\substack{y + z = x, \\ h(y) + h(z) \equiv h(0) + h(x) + p}} A_y B_z, \\
    c_3 &= \sum_{\substack{y + z = x, \\ h(y) + h(z) \equiv h(0) + h(x) - p}} A_y B_z,
\end{align*}
where, for convenience we write $\equiv$ for equality modulo $m$, and $p, m$ are parameters of the linear hash function. By the almost-affinity of linear hashing we have $(A\conv B)_x = c_1+c_2+c_3$ (see \autoref{lem:linear-hashing-basics}). In general, it can happen that $(A\conv B)_x =0$, not contributing at all to the output size, whereas $c_1, c_2, c_3 \neq 0$. This means that what is hashed to $m$ buckets is a vector with sparsity much larger than~$k$. Handling the presence of cancellations in $A \conv B$ is a significant obstruction to an $\Order(k \log k)$ general convolution algorithm. Note that even Step~0 is non-trivial to implement for general convolution.

Unless one can somehow handle this issue, we can only work with $g(x) = x \bmod p$, which comes with additional log factor losses. We believe that a very different approach is needed to obtain time $\Order(k \log k)$ in the general case, which is a very interesting open question.

% !TEX root = ../paper.tex

\section{Tools} \label{sec:tools}
\subsection{Linear Hashing}
In many of our algorithms, the goal is to reduce the dimension of some vectors in a convolution-preserving way. To that end, we often use the classic textbook hash function
\begin{equation*} \label{eq:linear-hashing}
    h(x) = ((\sigma x + \tau) \bmod p) \bmod m.
\end{equation*}
In our case $p$ is always some (fixed) prime, $m \leq p$ is the (fixed) number of buckets and $\sigma, \tau \in [p]$ are chosen uniformly and independently at random. We say that $h$ is a \emph{linear hash function} with parameters $p$ and $m$. We start with some well-known fundamental properties of linear hashing:

\begin{lemma}[Linear Hashing Basics] \label{lem:linear-hashing-basics}
Let $h$ be a linear hash function with parameters~$p$ and~$m$ drawn uniformly at random. Then the following properties hold:
\begin{description}
\item[Universality:] For distinct keys $x, y$ and $a \in [m]$:\\$|\, \Pr(h(x) = h(y) + a \mod m) - \frac1m \,| \leq \frac3p \leq \frac3m$.
\item[Pairwise Independence:] For distinct keys $x, y$ and arbitrary buckets $a, b \in [m]$:\\$|\, \Pr(h(x) = a \land h(y) = b) - \frac1{m^2} \,| \leq \frac3{mp} \leq \frac3{m^2}$. 
\item[Almost-Affinity:] For arbitrary keys $x, y$ there exists one out of three possible offsets $o \in \{-p, 0, p\}$ such that $h(x) + h(y) = h(0) + h(x + y) + o \mod m$.
\end{description} 
\end{lemma}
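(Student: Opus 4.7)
The plan is to derive all three properties from a single core observation: for any two keys $x, y$ that are distinct modulo $p$, the intermediate values $u(x) := (\sigma x + \tau) \bmod p$ and $u(y) := (\sigma y + \tau) \bmod p$ are jointly uniformly distributed on $[p] \times [p]$. This is a one-line bijection argument: the map $(\sigma, \tau) \mapsto (u(x), u(y))$ is affine on $\Int_p^2$ with linear part of determinant $y - x$, which is invertible in the field $\Int_p$ since $p$ is prime and $y \not\equiv x \pmod{p}$. Hence the map is a bijection on $[p]^2$, so $(u(x), u(y))$ is uniform there. (In the intended applications $p$ exceeds the universe size, so distinct keys are automatically distinct modulo $p$.)

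From here, pairwise independence falls out by counting. Set $n_c := |\{z \in [p] : z \bmod m = c\}|$, so that $n_c \in \{\lfloor p/m \rfloor, \lceil p/m \rceil\}$ and in particular $|n_c - p/m| \le 1$. Uniformity of $(u(x), u(y))$ on $[p]^2$ gives $\Pr(h(x) = a \wedge h(y) = b) = n_a n_b / p^2$, and writing $n_c = p/m + \varepsilon_c$ with $|\varepsilon_c| \le 1$ bounds the deviation from $1/m^2$ by $(2p/m + 1)/p^2 \le 3/(mp)$, using $m \le p$. Universality is the analogous marginal statement: $\Pr(h(x) \equiv h(y) + a \pmod{m})$ lies in $[\lfloor p/m \rfloor / p,\, \lceil p/m \rceil / p]$, a window of width at most $1/p \le 3/p$ around $1/m$.

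Almost-affinity is purely algebraic. Both $u(x) + u(y)$ and $u(0) + u(x+y)$ are congruent modulo $p$ to $\sigma(x+y) + 2\tau$, so their integer difference is a multiple of $p$. Since each of the four summands lies in $\{0, \dots, p-1\}$, both sums lie in $[0, 2p-2]$, so the difference lies in $(-(2p-1),\, 2p-1)$; the only multiples of $p$ in this window are $\{-p, 0, p\}$, yielding the offset $o$ after reducing modulo $m$. I do not anticipate a real obstacle: the one care point is the tight interval $[0, 2p-2]$ that rules out an offset of $\pm 2p$, together with the routine constant-factor bookkeeping in the two quantitative bounds.
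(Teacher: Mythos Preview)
Your proposal is correct and follows essentially the same approach as the paper: establish that the pair $(\pi(x),\pi(y))$ is uniform on $[p]^2$, then count preimages under $\bmod\ m$ for the quantitative bounds, and use the integer range $[0,2p-2]$ for almost-affinity. The only cosmetic differences are that the paper shows uniformity via independence of $\pi(x)$ and $\pi(y)-\pi(x)$ rather than the determinant bijection, and it obtains universality by summing the pairwise-independence bound over the $m$ relevant pairs rather than arguing it directly; neither difference is substantive.
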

\begin{proof}
Universality follows directly from pairwise independence, so we start proving pairwise independence. Let $h(x) = \pi(x) \bmod m$, where $\pi(x) = (\sigma x + \tau) \bmod p$ for uniformly random $\sigma, \tau \in [p]$. The first step is to prove that $\Pr(\pi(x) = a' \land \pi(y) = b') = 1/p^2$ for distinct keys $x, y$ and arbitrary $a', b' \in [p]$. Note that the event $\pi(x) = a'$ and $\pi(y) = b'$ can be rewritten as $\pi(x) = a'$ and $\pi(y) - \pi(x) = b' - a' \mod p$ and the claim follows immediately by observing that the random variables $\pi(x)$ and $\pi(y) - \pi(x) = \sigma(y - x) \mod p$ are independent.

We get back to $h$. Clearly, $\Pr(h(x) = a \land h(y) = b) = \sum_{a', b'} \Pr(\pi(x) = a' \land \pi(y) = b')$, where the sum is over all $a', b' \in [p]$ with $a = a' \bmod m$ and $b = b' \bmod m$. As there are at least $\floor{p/m}$ and at most $\ceil{p/m}$ such values~$a'$ and~$b'$, respectively, we conclude that the desired probability is at least $\floor{p/m}^2 / p^2 \geq (p/m - 1)^2 / p^2 \geq \frac1{m^2} - \frac2{mp}$ and at most $\ceil{p/m}^2 / p^2 \leq (p/m + 1)^2 / p^2 \leq \frac1{m^2} + \frac3{mp}$.

Finally, we prove that $h$ is almost-affine. It is clear that $\pi(x) + \pi(y) = \pi(0) + \pi(x + y) \mod p$. As each side of the equation is a nonnegative integer less than $2p$, it follows that $\pi(x) + \pi(y) = \pi(0) + \pi(x + y) + o$, where $o \in \{-p, 0, p\}$. By taking residues modulo $m$, the claim follows.
\end{proof}

In our reduction we also crucially rely on the following improved concentration bound for linear hashing, which we prove in \autoref{sec:hashing}.

\begin{restatable*}[Overfull Buckets]{corollary}{coroverfullbuckets} \label{cor:overfull-buckets}
Let $X \subseteq [U]$ be a set of $k$ keys. Randomly pick a linear hash function $h$ with parameters $p > 4U^2$ and $m \leq U$, fix a key $x \not\in X$ and buckets $a, b \in [m]$. Moreover, let~$F = \sum_{y \in X} [h(y) = b]$. Then:
\begin{equation*}
    \Ex(F \mid h(x) = a) = \Ex(F) = \frac km \pm \Theta(1),
\end{equation*}
and, for any $\lambda > 0$,
\begin{equation*}
    \Pr(\, |F - \Ex(F)| \geq \lambda \sqrt{\Ex(F)} \mid h(x) = a) \leq \Order\!\left(\frac{U \log U}{\lambda^2 k}\right)\!.
\end{equation*}
\end{restatable*}

\subsection{Algebraic Computations}
On more than one occasion we need to efficiently perform algebraic computations such as computing powers or inverses. The next two lemmas describe how to easily obtain improved algorithms for bulk-evaluation.

\begin{lemma}[Bulk Exponentiation] \label{lem:bulk-exponentiation}
Let $R$ be a ring. Given an element $x \in R$, and a set of nonnegative exponents $e_1, \dots, e_n \leq e$, we can compute $x^{e_1}, \dots, x^{e_n}$ in time $\Order(n \log_n e)$ using $\Order(n \log_n e)$ ring operations.
\end{lemma}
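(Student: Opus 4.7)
The plan is to exploit a base-$n$ expansion of the exponents so that the per-query cost drops from $O(\log e)$ (as in classical repeated squaring) to $O(\log_n e)$, at the price of an $O(n)$ one-time preprocessing factor. Set $L = \lceil \log_n e \rceil + 1$ and write each exponent in base $n$ as $e_i = \sum_{j=0}^{L-1} d_{i,j}\, n^j$ with $d_{i,j} \in [n]$. Then
\begin{equation*}
    x^{e_i} \;=\; \prod_{j=0}^{L-1} \bigl(x^{n^j}\bigr)^{d_{i,j}},
\end{equation*}
so it suffices to have tabulated the values $y_{j,d} := (x^{n^j})^{d}$ for all $j \in [L]$ and $d \in [n]$.

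First I would compute the ``anchor'' powers $y_j := x^{n^j}$ for $j = 0, \dots, L-1$ via the recurrence $y_0 = x$ and $y_{j+1} = y_j^n$, where each $y_j^n$ is obtained by classical repeated squaring in $O(\log n)$ ring operations; this stage uses $O(L \log n) = O(\log e)$ ring operations in total. Next, for every $j \in [L]$, I would build the table $y_{j,0}, y_{j,1}, \dots, y_{j,n-1}$ incrementally via $y_{j,d+1} = y_{j,d} \cdot y_j$, costing $O(n)$ ring operations per level and $O(nL) = O(n \log_n e)$ in total. Finally, for each query $i \in [n]$, I extract the digits $d_{i,0}, \dots, d_{i,L-1}$ (constant time per digit on the Word RAM) and form $x^{e_i}$ as the product of the $L$ precomputed entries $y_{j,d_{i,j}}$, using $L - 1 = O(\log_n e)$ ring operations per query and $O(n \log_n e)$ in aggregate.

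Summing the three stages gives $O(\log e) + O(n \log_n e) + O(n \log_n e) = O(n \log_n e)$ ring operations (using $\log e \le n \log_n e$ for $n \ge 2$); the Word RAM bookkeeping (extracting base-$n$ digits, indexing into the tables) matches this cost term-by-term, establishing the claimed running time. The only subtle point is handling the edge cases $n = 1$ and very small $e$, where one should either invoke the trivial algorithm directly or pad $L$ so that the recurrence is well-defined; these are routine to patch and are the only potential obstacle. Everything else is straightforward base-$n$ representation combined with the precompute-then-query paradigm.
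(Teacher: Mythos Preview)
Your proof is correct and essentially identical to the paper's: both write the exponents in base $n$, precompute the table $x^{d\cdot n^j}$ for all digits $d\in[n]$ and positions $j\in[L]$ via the increment rule $x^{(d+1)n^j}=x^{dn^j}\cdot x^{n^j}$, and then answer each query as a product of $L$ table entries. The only cosmetic difference is that you compute the anchors $x^{n^{j+1}}$ by repeated squaring, whereas in the paper they fall out as the top entry of the level-$j$ table; this does not affect the bound.
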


The naive way to implement exponentiations is via repeated squaring in time $\Order(n \log e)$. There are methods~\cite{Yao76,Pippenger80} improving the dependence on $e$, but for our purposes this simple algorithm suffices. 

\begin{proof}
First, compute the base-$n$ representations of all exponents \raisebox{0pt}[0pt][0pt]{$e_i = \sum_j e_{i,j} n^j$}; then \raisebox{0pt}[0pt][0pt]{$e_{i,j} \in [n]$} where $j = 0, \dots, \ceil{\log_n e}$. We will precompute all powers $x^{in^j}$ for $i = 1, \dots, n$ and $j = 0, \dots, \ceil{\log_n e}$ using the rules $x^{n^{j+1}} = (x^{n^j})^n$ and $x^{(i+1)n^j} = x^{in^j} x^{n^j}$. Finally, every output $x^{e_i}$ can be computed as a product of $\ceil{\log_n e}$ numbers \raisebox{0pt}[0pt][0pt]{$\prod_j x^{e_{i,j} n^j}$}. The correctness is immediate and it is easy to check that every step takes time $\Order(n \log_n e)$.
\end{proof}

\begin{lemma}[Bulk Division] \label{lem:bulk-division}
Let $F$ be a field. Given $n$ field elements $a_1, \dots, a_n \in F$, we can compute their inverses $a_1^{-1}, \dots, a_n^{-1} \in F$ in time $\Order(n)$ using $\Order(n)$ multiplications and a single inversion.
\end{lemma}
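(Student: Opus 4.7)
The plan is to use the standard ``prefix products and one inversion'' trick. First I would compute the prefix products $P_i = a_1 a_2 \cdots a_i$ iteratively using the recurrence $P_i = P_{i-1} \cdot a_i$ (with $P_0 = 1$); this takes $n-1$ multiplications in total.

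Next I would perform the single inversion, computing $Q_n := P_n^{-1}$. From here, I would walk the indices from $n$ down to $1$ and maintain the invariant that $Q_i = P_i^{-1}$. Given $Q_i$, I would recover $a_i^{-1} = P_{i-1} \cdot Q_i$ (one multiplication) and update $Q_{i-1} = Q_i \cdot a_i$ (one multiplication). After $n$ steps all inverses are available. The total cost is $n-1$ multiplications for the forward pass, one inversion at the middle, and $2n$ multiplications for the backward pass, so $O(n)$ multiplications and exactly one inversion, as claimed. Correctness follows from the identities $P_i^{-1} \cdot P_{i-1} = a_i^{-1}$ and $P_i^{-1} \cdot a_i = P_{i-1}^{-1}$, each of which is a straightforward algebraic manipulation inside $F$.

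There isn't really a substantive obstacle: the only subtlety is the implicit assumption that all $a_i$ are invertible (so that $P_n$ is nonzero in $F$), which is necessary for the problem statement to make sense. The point worth emphasizing in the write-up is that the backward pass amortizes the inversion across all $n$ elements, replacing $n$ inversions by a constant number of multiplications per element plus a single inversion, which is exactly what the statement requires for the application in Step~4 to avoid an $\Omega(k \log(n\Delta))$ bottleneck coming from per-element inversions.
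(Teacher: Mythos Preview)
Your proposal is correct and essentially identical to the paper's proof: both compute prefix products, invert the final one, and then walk backwards recovering each $a_i^{-1}$ and the next prefix-inverse via two multiplications per step. The only differences are notational (you write $P_i, Q_i$ where the paper writes $b_i, b_i^{-1}$) and your explicit remark that all $a_i$ must be invertible, which the paper leaves implicit.
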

\begin{proof}
First, we compute the $n$ prefix products $b_j = a_1 \cdots a_i$. It takes a single inversion to compute~$b_n^{-1}$. Then, for $i = n, n-1, \dots, 2$, we compute $a_i^{-1} = b_i^{-1} b_{i-1}$ and $b_{i-1}^{-1} = b_i^{-1} a_i$. Finally, \raisebox{0pt}[0pt][0pt]{$a_1^{-1} = b_1^{-1}$}. As claimed, this algorithm takes time $\Order(n)$ and it uses $\Order(n)$ multiplications and a single inversion.
\end{proof}

Finally, a crucial ingredient to our core algorithm is the following theorem about solving transposed Vandermonde systems.

\begin{restatable}[Transposed Vandermonde Systems]{theorem}{thmvandermonde} \label{thm:vandermonde}
Let $F$ be a field. Let $a, x \in F^n$ be vectors with pairwise distinct entries $a_i$, and let
\begin{equation*}
    V = \begin{bmatrix} 1 & 1 & \cdots & 1 \\ a_1 & a_2 & \cdots & a_n \\ a_1^2 & a_2^2 & \cdots & a_n^2 \\ \vdots & \vdots & \ddots & \vdots \\ a_1^{n-1} & a_2^{n-1} & \cdots & a_n^{n-1} \end{bmatrix}\!.
\end{equation*}
Then $V x$ and $V^{-1} x$ can be computed in time $\Order(n \log^2 n)$ using at most $\Order(n \log^2 n)$ ring operations (additions, subtractions and multiplications) and at most 1 division.
\end{restatable}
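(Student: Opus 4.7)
The plan is to reduce both $Vx$ and $V^{-1}x$ to rational-function manipulations carried out by divide-and-conquer on a balanced subproduct tree, being careful to arrange the polynomials so that every Newton power-series inversion and every multipoint-evaluation remainder is division-free; the $n$ unavoidable field inversions that arise in the $V^{-1}x$ step will be collapsed into a single division at the very end via \autoref{lem:bulk-division}.

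For $Vx$, I will observe that $(Vx)_{j+1} = \sum_i x_i a_i^j$ is the $z^j$-coefficient of the rational function
\[
  R(z) \;=\; \sum_{i=1}^n \frac{x_i}{1 - a_i z} \;=\; \frac{P(z)}{Q(z)}, \qquad Q(z) \;:=\; \prod_i(1-a_i z).
\]
First I would build this fraction bottom-up on the subproduct tree: each leaf stores $(x_i,\,1-a_i z)$ and each internal node combines its children $(P_L,Q_L),(P_R,Q_R)$ as $(P_L Q_R + P_R Q_L,\, Q_L Q_R)$ using fast polynomial multiplication. This costs $\Order(n\log n)$ ring operations per level of the tree and hence $\Order(n\log^2 n)$ overall. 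Then I would extract the first $n$ coefficients of $P(z)/Q(z) \bmod z^n$ by Newton iteration; since $Q(0)=1$, the iteration bootstraps from the unit $1$ and performs only ring operations, contributing $\Order(n\log n)$ extra work and no divisions.

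For $V^{-1}x$, let $y := V^{-1}x$. The defining equations $\sum_i y_i a_i^j = x_j$ for $j=0,\dots,n-1$ say precisely that $\sum_i y_i/(1-a_i z) \equiv x(z) \pmod{z^n}$ with $x(z) := \sum_j x_j z^j$, so $P(z) := Q(z)\,x(z) \bmod z^n$ is the numerator of this rational function as a polynomial of degree $<n$. A standard partial-fraction residue computation then gives the closed form
\[
  y_i \;=\; \frac{\widetilde P(a_i)}{Q_{\mathrm{std}}'(a_i)},
  \qquad Q_{\mathrm{std}}(z) \;:=\; \prod_k (z-a_k),
\]
where $\widetilde P(z) := z^{n-1} P(1/z)$ is the coefficient-reversal of $P$, introduced precisely to avoid ever evaluating a polynomial at $1/a_i$. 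The plan is then: build the subproduct tree for $Q_{\mathrm{std}}$ in $\Order(n\log^2 n)$; differentiate $Q_{\mathrm{std}}$ termwise; compute $P$ via one length-$n$ polynomial product in $\Order(n\log n)$; multipoint-evaluate $\widetilde P$ and $Q_{\mathrm{std}}'$ at every $a_i$ using the classical tree-based algorithm, where each reduction is polynomial remainder against a monic divisor and hence uses only ring operations; and finally combine the $n$ quotients into the $y_i$ via \autoref{lem:bulk-division}, which performs $\Order(n)$ multiplications and a single field inversion.

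The main thing to keep honest throughout is the division count. The two potential sources of field divisions in the subproduct-tree machinery are (i) the bootstrap of Newton inversion of a power series $Q$, and (ii) non-monic leading coefficients encountered during multipoint-evaluation remaindering; both are dodged by the specific choices above ($Q(0)=1$ for the $Vx$ direction, and $Q_{\mathrm{std}}$ monic for the $V^{-1}x$ direction). The $n$ inversions needed to materialize the closed form for $y_i$ are the last remaining divisions, and \autoref{lem:bulk-division} merges them into exactly one field inversion. Summed across all stages this yields $\Order(n\log^2 n)$ ring operations with zero divisions for $Vx$ and a single division for $V^{-1}x$, matching the theorem.
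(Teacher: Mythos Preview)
Your proposal is correct and takes a genuinely different route from the paper. The paper's proof is indirect: it notes that $V=W^T$ where $W$ is a standard Vandermonde matrix, invokes the classical $\Order(n\log^2 n)$ arithmetic circuits for multipoint evaluation ($x\mapsto Wx$) and interpolation ($x\mapsto W^{-1}x$) from~\cite{vonzurGathenG13}, verifies that those circuits use at most one division (bulked via \autoref{lem:bulk-division}), and then applies the \emph{transposition principle} (Baur--Strassen) to mechanically convert these into circuits of the same size for $x\mapsto Vx$ and $x\mapsto V^{-1}x$.

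You instead give explicit direct constructions. For $Vx$ you recognise the entries as the first $n$ power-series coefficients of $\sum_i x_i/(1-a_iz)$, assemble the numerator and denominator on a subproduct tree, and expand by Newton inversion (division-free since $Q(0)=1$). For $V^{-1}x$ you derive the partial-fraction residue formula $y_i=\widetilde P(a_i)/Q_{\mathrm{std}}'(a_i)$, evaluate both numerator and denominator by division-free multipoint evaluation against the monic tree, and merge the resulting $n$ field inversions into one via \autoref{lem:bulk-division}. Each step is standard and your division accounting is sound.

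What each approach buys: the paper's argument is shorter and modular---once the transposition principle is available, the transposed problem is free---and it makes clear that \emph{any} circuit for $Wx$ yields one for $Vx$. Your argument is self-contained (no appeal to Baur--Strassen), produces the algorithm explicitly rather than as the output of a circuit transformation, and makes the location of the single division completely transparent.
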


This algorithm has been discovered several times~\cite{KaltofenY88,Li00,Pan01}. Although none of these sources pays attention to the number of divisions, one can check that applying the bulk division strategy from \autoref{lem:bulk-division} suffices to obtain the claimed bound. For the sake of completeness, we give a full proof of \autoref{thm:vandermonde} in \autoref{sec:vandermonde}.

% !TEX root = ../paper.tex

\section{Set Queries in a Tiny Universe} \label{sec:set-query}
As the first step in our chain of reductions, the goal of this section is to give an efficient algorithm for the \TinyUnivApproxSetQuery{} problem:

\probtinyunivapproxsetquery*{}

\begin{lemma} \label{lem:tiny-univ-set-query}
Let $\log k \leq 1/\gamma \leq \poly(k)$ and let $1/\delta \leq \poly(k)$. There is an algorithm for the \TinyUnivApproxSetQuery{} problem running in time
\begin{equation*}
    \Order(D(k) + k \log^2(1/\gamma) + k \log(1/\delta) + \polylog(k, \norm A_\infty, \norm B_\infty)).
\end{equation*}
\end{lemma}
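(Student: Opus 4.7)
The plan is to hash the support of $A\conv B$ into $m = \Theta(k/T)$ buckets with a random linear hash function $h$, where $T = \Theta(1/\gamma)$, to gather $2T$ Fourier-style coefficients per bucket by convolving modulated copies of $A$ and $B$, and finally to solve one transposed Vandermonde system per bucket using the known approximate support $X \cap h^{-1}(b)$ as Vandermonde nodes. The three ingredients that make this fit the target budget are the improved concentration of linear hashing (Corollary on Overfull Buckets), the fast Vandermonde solver (Theorem on Transposed Vandermonde Systems), and the bulk-division trick (Lemma on Bulk Division).

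First I would fix a prime $q$ large enough that all entries of $A\conv B$ fit in $\Int_q$ and that $\Int_q^\times$ contains an element $\omega$ of order at least $U$; locating $q$ and $\omega$ plus choosing $h$ with modulus $p > 4U^2$ accounts for the additive $\polylog(k, \norm A_\infty, \norm B_\infty)$ term. For each $t \in \{0, 1, \dots, 2T-1\}$ I form the modulated vectors $(\omega^t \bullet A)_i = \omega^{ti} A_i$ and $(\omega^t \bullet B)_i = \omega^{ti} B_i$, producing all $\omega^{ti}$'s via Lemma on Bulk Exponentiation, push them through $h$ to length $m$, and invoke the dense-convolution algorithm to compute $h(\omega^t\bullet A) \conv_m h(\omega^t\bullet B)$. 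By the almost-affinity part of Lemma on Linear Hashing Basics, entry $b$ of the $t$-th convolution equals $F_{t,b} = \sum_{i\in h^{-1}(b)} \omega^{ti} (A\conv B)_i$ up to one of three fixed cyclic shifts, which I track by triplicating bucket bookkeeping. The total cost of this step is $O(T\cdot D(m)) \leq O(D(k))$, using the monotonicity assumption $D(n)/n$ nondecreasing with $m = k/T$.

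Next I solve the buckets independently. For each bucket $b$ let $S_b = X \cap h^{-1}(b)$ (extended to each of the three shift classes); declare $b$ \emph{good} if $|S_b| \leq 2T$. For a good $b$, the vector $(F_{t,b})_{t=0}^{2T-1}$ is the image, under the transposed Vandermonde matrix with nodes $\omega^i$ for $i \in S_b$, of the unknown values $((A\conv B)_i)_{i \in S_b}$, perturbed only by noise from the missing support $\supp(A\conv B)\setminus X$ that happens to fall in $b$. Applying Theorem on Transposed Vandermonde Systems decodes each good bucket in $O(T \log^2 T)$ time, totaling $O(k \log^2(1/\gamma))$ across all $m$ buckets. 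Each solve requires a single $\Int_q$-inversion, and by applying Lemma on Bulk Division across the $k/T$ solves I collapse all of these into a single field inversion plus $O(k/T)$ multiplications, so division work is confined to the $\polylog$ additive term. Setting $\widetilde C_i = c_i$ for good-bucket $i \in X$ and $\widetilde C_i = 0$ elsewhere completes the constant-error algorithm.

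The error analysis combines two contributions. Corollary on Overfull Buckets with $\lambda = \sqrt{T}$ gives expected number $O(U\log U/T^2)$ of overfull buckets; each kills at most $2T$ coordinates, for expected loss $O(U\log U / T)$, which is $o(\gamma k)$ since $U \leq k/\gamma^2$ for a suitable polynomial choice of $T$ in $1/\gamma$. Contaminated buckets (those containing an element of $\supp(A\conv B)\setminus X$) number at most $o(\gamma^2 k)$ and each corrupts $\leq 2T$ coordinates, contributing another $o(\gamma k)$ errors. A Markov bound then yields constant success probability. The main obstacle I anticipate is boosting the success to $1-\delta$ within the advertised \emph{additive} $k \log(1/\delta)$ overhead, since a naive $\log(1/\delta)$-fold repetition would multiply the $D(k)$ cost. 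I would address this by reserving the $\log(1/\delta)$ repetitions for the lightweight per-bucket phase: take a single Fourier batch with $T' = O(T\log(1/\delta))$ coefficients (whose convolution cost is still $O(D(k))$ by monotonicity of $D(n)/n$), split the coefficients into $\Theta(\log(1/\delta))$ disjoint groups of size $2T$, run the Vandermonde decoder independently on each group per bucket, and take a coordinate-wise majority; the voting contributes the additive $O(k \log(1/\delta))$ and the bulk-division strategy extends verbatim across the $\log(1/\delta) \cdot (k/T)$ parallel solves. Rigorously establishing that the independent decodings succeed with probability $1-\delta$ per coordinate is the technical heart of this last step.
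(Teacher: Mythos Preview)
Your overall architecture matches the paper: hash to $m=\Theta(\gamma k)$ buckets, collect $T=\Theta(1/\gamma)$ modulated coefficients per bucket via $T$ dense convolutions, and solve a transposed Vandermonde system per bucket using $X\cap h^{-1}(b)$ as nodes, with bulk division across buckets. The constant-error analysis via the Overfull-Buckets corollary and the running-time bookkeeping for the Fold/Convolve/Unfold phases are essentially what the paper does.

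The genuine gap is your boosting step. Splitting a batch of $T'=\Theta(T\log(1/\delta))$ coefficients into $\Theta(\log(1/\delta))$ groups and taking a majority vote does \emph{not} amplify success, because all groups share the same hash function~$h$: whether a bucket is overfull (or contaminated by $\supp(A\conv B)\setminus X$) is a property of $h$ alone, so every group fails on exactly the same buckets and voting is useless. Moreover, your claim that $T'$ convolutions of length $m=k/T$ still cost $O(D(k))$ is incorrect; monotonicity of $D(n)/n$ only gives $T'\cdot D(m)\le (T'/T)\,D(k)=\Theta(\log(1/\delta))\,D(k)$, a multiplicative blowup. The paper sidesteps both issues by making ``good $h$'' a \emph{checkable} property: it defines the flatness $F_m(\widetilde X)$ of $\widetilde X=\pi(X)+\{0,p\}$, shows via the Overfull-Buckets corollary that a random $h$ is $\gamma k/2$-flat with constant probability, and then simply resamples $(\sigma,\tau)$ until flatness holds. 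Each trial costs $O(k)$ (just count bucket loads), so $O(\log(1/\delta))$ trials give the additive $O(k\log(1/\delta))$. Once a flat $h$ is fixed, the remaining Fold/Convolve/Unfold pipeline is run \emph{once}, its error is bounded deterministically by $F_m(\widetilde X)+T\cdot|\supp(A\conv B)\setminus X|\le \gamma k$, and no further amplification is needed.
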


We proceed in three steps. In \autoref{sec:set-query:sec:folding} we give two important preliminary lemmas. In \autoref{sec:set-query:sec:algorithm} we present and analyze the algorithm which proves \autoref{lem:tiny-univ-set-query}. Finally, in \autoref{sec:algorithm:sec:randomized}, we will strengthen \autoref{lem:tiny-univ-set-query} and show that we can in fact achieve the same running time with $D_{1/3}(k)$ in place of $D(k)$, i.e., it suffices to assume that we only have black-box access to an efficient dense convolution algorithm with constant error probability.

\subsection{Folding \& Unfolding} \label{sec:set-query:sec:folding}
For a vector $A$ and a scalar $\omega$, let $\omega \bullet A$ denote the vector defined by $(\omega \bullet A)_i = \omega^i A_i$. A straightforward calculation reveals that the $\bullet$-product commutes nicely with taking (non-cyclic) convolutions:

\begin{proposition} \label{prop:bullet-product}
Let $A, B$ be vectors and let $\omega$ be a scalar. Then $(\omega \bullet A) \conv (\omega \bullet B) = \omega \bullet (A \conv B)$.    
\end{proposition}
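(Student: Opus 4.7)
The plan is to verify the identity coordinate-wise from the definitions, exploiting the fact that the exponent $\omega^j \cdot \omega^{k-j} = \omega^k$ is independent of the summation index~$j$. Concretely, I would fix an index $k$ and expand the left-hand side using the convolution formula:
\begin{equation*}
    \bigl((\omega \bullet A) \conv (\omega \bullet B)\bigr)_k \;=\; \sum_{0 \le j \le k} (\omega \bullet A)_j \cdot (\omega \bullet B)_{k-j} \;=\; \sum_{0 \le j \le k} \omega^j A_j \cdot \omega^{k-j} B_{k-j}.
\end{equation*}

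Then I would pull out the common factor $\omega^k$ from the sum to obtain $\omega^k \sum_{j} A_j B_{k-j} = \omega^k (A \conv B)_k$, which by definition equals $(\omega \bullet (A \conv B))_k$. Since this holds for every index $k$, the two vectors are equal.

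There is no real obstacle here; the argument is a one-line manipulation relying only on the distributivity of multiplication and the additive property $\omega^j \omega^{k-j} = \omega^k$ of powers in the ring. The only thing to be careful about is that the statement is about non-cyclic convolution, so the summation range is $0 \le j \le k$ and no wrap-around issues arise (the same proof also gives the analogous identity for cyclic convolution, provided $\omega^U = 1$, but that is not needed here).
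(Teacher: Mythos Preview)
Your proof is correct and is essentially identical to the paper's own argument: both verify the identity coordinate-wise by expanding the convolution and factoring out the common $\omega^k$ from the sum.
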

\begin{proof}
For any coordinate $x$:
\begin{equation*}
    ((\omega \bullet A) \conv (\omega \bullet B))_x = \sum_{y + z = x} (\omega \bullet A)_y (\omega \bullet B)_z = \sum_{y + z = x} \omega^{y + z} A_y B_z = (\omega \bullet (A \conv B))_x. \qedhere
\end{equation*}
\end{proof}

The goal of this section is to show that the we can efficiently evaluate, and under certain restrictions also invert, the following map:
\begin{equation*}
    A \;\longrightarrow\; (\omega^0 \bullet A) \bmod m, \dots, (\omega^{T-1} \bullet A) \bmod m.
\end{equation*}
We will vaguely refer to these two directions as \emph{folding} and \emph{unfolding}, respectively. For the remainder of this subsection we assume as before that~$A$ is an arbitrary length-$U$ vector with sparsity~$k$. We further assume that~$A$ is over some finite field~$\Int_q$ in order avoid precision issues in the underlying algebraic machinery. We also need the technical assumption that $\omega \in \Int_q^\times$ has multiplicative order at least $U$.

\begin{lemma}[Folding] \label{lem:folding}
Let $m$ be a parameter and let $T = \ceil{2k/m}$. There is a deterministic algorithm \Fold{} computing $(\omega^0 \bullet A) \bmod m, \dots, (\omega^{T-1} \bullet A) \bmod m$ in time $\Order(k \log^2(k/m) + k \log_k U + \polylog q)$.
\end{lemma}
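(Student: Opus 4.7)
The plan is to express each output coordinate as a transposed Vandermonde matrix--vector product, reducing the task to several applications of Theorem~\ref{thm:vandermonde}, after a single preprocessing step of bulk exponentiation (Lemma~\ref{lem:bulk-exponentiation}) to build the bases. Concretely, writing $\beta_i := \omega^i$ and $S_{i'} := \{ i \in \supp(A) : i \equiv i' \pmod m \}$ for each bucket $i' \in [m]$, one has
\[
\bigl((\omega^t \bullet A) \bmod m\bigr)_{i'} \;=\; \sum_{i \in S_{i'}} A_i \, \beta_i^{\,t}, \qquad t = 0, \dots, T-1,
\]
which is exactly the product of the $T \times |S_{i'}|$ transposed Vandermonde matrix with columns indexed by $\beta_i$, $i \in S_{i'}$, against the vector $(A_i)_{i \in S_{i'}}$.

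The algorithm would proceed in three steps: (i) compute $\beta_i = \omega^i$ for every $i \in \supp(A)$ with a single call to Lemma~\ref{lem:bulk-exponentiation} (with $n = k$, $e = U$), costing $\Order(k \log_k U)$; (ii) partition $\supp(A)$ by residue modulo $m$; and (iii) for each bucket $i'$, split $S_{i'}$ into $\lceil |S_{i'}|/T \rceil$ chunks of size at most $T$ and invoke Theorem~\ref{thm:vandermonde} once per chunk, summing the resulting length-$T$ vectors into the output slot for bucket $i'$. Since $T = \lceil 2k/m \rceil$, the total chunk count is at most $m + k/T = \Order(m)$, each Vandermonde multiplication costs $\Order(T \log^2 T)$ ring operations and at most one division, and hence the ring-operation cost totals $\Order(mT \log^2 T) = \Order(k \log^2(k/m))$.

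The only $\Omega(\log q)$-per-solve cost comes from the divisions, and the main obstacle is to batch them into a single inversion. Since the overall number of pending divisions is $\Order(m) = \Order(k)$, Lemma~\ref{lem:bulk-division} replaces them by $\Order(k)$ additional multiplications and a single inversion in $\Int_q$ of cost $\polylog q$; this requires opening up the Vandermonde algorithm of Theorem~\ref{thm:vandermonde} and deferring its one division to the very end, as already observed in the paragraph following that theorem. Adding everything yields the claimed $\Order(k \log^2(k/m) + k \log_k U + \polylog q)$ bound. A minor technicality I would address in passing is that Theorem~\ref{thm:vandermonde} is stated for square systems, so chunks of size $s < T$ are either padded with zero coefficients attached to fresh distinct bases (which exist since $|\Int_q^\times|$ is far larger than $T$) or handled directly by the rectangular variant of the same divide-and-conquer procedure.
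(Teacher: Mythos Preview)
Your proposal is correct and follows essentially the same approach as the paper: precompute all powers $\omega^i$ via bulk exponentiation, bucket $\supp(A)$ by residue modulo $m$, split each bucket into $\Order(1)$ chunks of size at most $T$ (for a total of $\Order(m)$ chunks), solve one transposed Vandermonde system per chunk via Theorem~\ref{thm:vandermonde}, and batch all divisions across chunks using Lemma~\ref{lem:bulk-division}. The paper glosses over the non-square case with a ``for simplicity assume $|X_{i,j}| = T$'' while you explicitly note the padding/rectangular variant, but otherwise the arguments coincide.
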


Let us postpone the proof of \autoref{lem:folding} and instead outline how to (approximately) invert the folding. A crucial assumption is that we are given a close approximation $X$ of $\supp(A)$. The quality of the recovery is controlled by the following measure: The \emph{flatness of~$X$ modulo~$m$} is defined as
\begin{equation*}
    F_m(X) = \sum_{x \in X} \left[ \sum_{x' \in X} \left[x = x' \mod m\right] > \frac{2|X|}m \right]\!,
\end{equation*}
and we say that $X$ is \emph{$\alpha$-flat modulo $m$} if $F_m(X) \leq \alpha$. Some intuition about this definition: Recall that when hashing a set $X$ into $m$ buckets, the average bucket receives $|X|/m$ elements. Therefore, the flatness is the number of elements falling into overfull buckets under the very simple hash function $x \bmod m$.

\begin{lemma}[Unfolding] \label{lem:unfolding}
Let $m$ be a parameter and let $T = \ceil{2k/m}$. There is a deterministic algorithm \Unfold{} which, given $(\omega^0 \bullet A) \bmod m, \dots, (\omega^{T-1} \bullet A) \bmod m$ and a size-$k$ set $X \subseteq [U]$, computes a vector $\widetilde A$ such that
\begin{equation*}
    \norm{A - \widetilde A}_0 \leq T \mult |\supp(A) \setminus X| + F_m(X).
\end{equation*}
The algorithm runs in time $\Order(k \log^2(k/m) + k \log_k U + \polylog q)$.
\end{lemma}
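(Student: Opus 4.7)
My plan is to decode each residue class modulo~$m$ independently. For $b \in [m]$ and $t \in [T]$, the folded inputs give us the measurements
\begin{equation*}
  Y_{t,b} \;=\; ((\omega^t \bullet A) \bmod m)_b \;=\; \sum_{i \equiv b \,(\mathrm{mod}\, m)} \omega^{t i}\,A_i.
\end{equation*}
I would write $X_b = \{x \in X : x \equiv b \pmod m\}$ and call the bucket $b$ \emph{good} if $|X_b| \le T$. By the definition of flatness the total number of elements of $X$ lying in non-good buckets is at most $F_m(X)$, so the algorithm simply declares $\widetilde A_x = 0$ for every $x$ in a non-good bucket.

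For a good bucket $b$, I would enumerate $X_b = \{x_1, \dots, x_s\}$ with $s = |X_b| \le T$ and solve the transposed $s \times s$ Vandermonde system
\begin{equation*}
  \sum_{j=1}^{s} (\omega^{x_j})^t\, \widetilde A_{x_j} \;=\; Y_{t,b} \qquad (t = 0, \dots, s-1).
\end{equation*}
The nodes $\omega^{x_1}, \dots, \omega^{x_s}$ are pairwise distinct because $\omega$ has multiplicative order at least $U > x_j$, so the transposed Vandermonde solver of Theorem~\ref{thm:vandermonde} applies. Setting $\widetilde A_x = 0$ at every position outside the good $X_b$'s produces the output vector.

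For correctness, if no $i \in \supp(A) \setminus X$ lies in residue class $b$ (and $b$ is good), the true restriction $A|_{X_b}$ exactly solves this system, so the recovery is exact in that bucket. Errors therefore arise from only two sources: (i) overfull buckets, accounting for at most $F_m(X)$ wrong positions by definition; (ii) each missed element $i \in \supp(A)\setminus X$ perturbs only the right-hand side of its own bucket $b = i \bmod m$ and can propagate through the linear system to at most $|X_b| \le T$ positions. Summing these contributions yields the claimed bound $T \cdot |\supp(A) \setminus X| + F_m(X)$.

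For the running time, summing over buckets gives $\sum_b \Order(|X_b|\log^2|X_b|) = \Order(k \log^2 T) = \Order(k\log^2(k/m))$ ring operations in $\Int_q$, and evaluating the nodes $\omega^{x}$ for all $x \in X$ would use the bulk exponentiation of Lemma~\ref{lem:bulk-exponentiation} in time $\Order(k \log_k U)$. The subtlety I expect to be the main obstacle is handling divisions: each Vandermonde solve uses a single division costing $\Theta(\log q)$, and summed naively across the up to $m$ buckets this would overshoot the budget. I would rearrange the internal solver so that its division is deferred to the very end, and then invoke the bulk-division trick of Lemma~\ref{lem:bulk-division} to collapse the $m$ inversions into $\Order(m)$ multiplications plus a single inversion, contributing only the additive $\polylog q$ term.
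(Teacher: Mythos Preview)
Your proposal is correct and follows essentially the same approach as the paper: partition $X$ into residue classes modulo $m$, skip overfull buckets (contributing at most $F_m(X)$ errors), solve a transposed Vandermonde system in each remaining bucket via \autoref{thm:vandermonde}, and charge each element of $\supp(A)\setminus X$ at most $T$ corrupted positions in its bucket. Your running-time analysis, including the use of bulk exponentiation for the nodes $\omega^x$ and the bulk-division trick to amortize the per-bucket inversions down to a single $\polylog q$ term, also matches the paper's treatment (which defers the detailed time analysis to the analogous \Fold{} proof).
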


We will next prove Lemmas~\ref{lem:folding} and~\ref{lem:unfolding}.

\begin{proof}[Proof of \autoref{lem:folding}]
Given a $k$-sparse vector $A$ the goal is to simultaneously compute $A^0 = (A \bullet \omega^0) \bmod m, \dots, A^{T-1} = (A \bullet \omega^{T-1}) \bmod m$. We first precompute all powers $\omega^x$ for $x \in X = \supp(A)$ using the bulk exponentiation algorithm (\autoref{lem:bulk-exponentiation}).

Next, we partition $X$ into several \emph{chunks} $X_{i,j}$. We start with $X_i = \{x \in X : x \bmod m = i\}$ and then greedily subdivide every part $X_i$ into chunks $X_{i,1}, X_{i,2}, \dots$ such that all chunks have size $|X_{i,j}| \leq T$. In fact, all chunks except for the last one have size exactly $T$. We note that in this way we have constructed at most $\Order(m)$ chunks: On the one hand, there can be at most $m$ chunks of size exactly~$T$ since $A$ has sparsity $k = \Theta(m T)$. On the other hand, there can be at most $m$ chunks of size less than~$T$ by the way the greedy algorithm works.

Now focus on an arbitrary chunk $X_{i,j}$; for simplicity assume that $|X_{i,j}| = T$ and let $x_1, \dots, x_T$ denote the elements of $X_{i,j}$ in an arbitrary order. We set up the following transposed Vandermonde system with indeterminate $y^{i,j} \in \Int_q^T$:
\begin{equation*}
    y^{i,j} = \begin{bmatrix} 1 & 1 & \cdots & 1 \\ \omega^{x_1} & \omega^{x_2} & \cdots & \omega^{x_T} \\ \omega^{2x_1} & \omega^{2x_2} & \cdots & \omega^{2x_T} \\ \vdots & \vdots & \ddots & \vdots \\ \omega^{(T-1)x_1} & \omega^{(T-1)x_2} & \cdots & \omega^{(T-1)x_T} \end{bmatrix} \begin{bmatrix} A_{x_1} \\ A_{x_2} \\ \vdots \\ A_{x_T} \end{bmatrix}.
\end{equation*}
Since $\omega$ has multiplicative order at least $U$, the coefficients $\omega^{x_1}, \dots, \omega^{x_T}$ are distinct and we can apply \autoref{thm:vandermonde} to compute $y$. It remains to return the vectors $(\omega^t \bullet A) \bmod m$ for all~$t \in [T]$, computed as \raisebox{0pt}[0pt][0pt]{$((\omega^t \bullet A) \bmod m)_i = \sum_j y^{i,j}_t$}. It is easy to check that \raisebox{0pt}[0pt][0pt]{$y^{i,j}_t$} equals $((\omega^t \bullet A') \bmod m)_i$ for~$A'$ the vector obtained from $A$ by restricting the support to $X_{i,j}$. The correctness of the whole algorithm follows immediately.

Finally, we analyze the running time. Precomputing the powers of $\omega$ using \autoref{lem:bulk-exponentiation} accounts for time $\Order(k \log_k U)$. The construction of the chunks takes time $\Order(m T) = \Order(k)$, and also writing down all vectors $(\omega^t \bullet A) \bmod m$ takes time $\Order(k)$ given the $y^{i,j}$'s. The dominant step is to solve a Vandermonde system for every chunk. Since there are $\Order(m)$ chunks in total and the running time for solving a single system is bounded by $\Order(T \log^2 T)$ (by \autoref{thm:vandermonde}), the total running time is $\Order(m T \log^2 T)$ plus $\Order(m T \log^2 T)$ ring operations and $\Order(m\polylog(T))$ divisions in~$\Int_q$.

Additions, subtractions and multiplications take constant time each on a random-access machine and can therefore by counted into the time bound. However, divisions in a prime field are computationally more expensive. The common way is to implement inversions by Euclid's algorithm in time $\Order(\log q)$ and so the naive time bound becomes $\Order(m \polylog(T) \mult \log q)$. This can be optimized by exploiting \autoref{lem:bulk-division}: Recall that we are executing \autoref{thm:vandermonde} $m$ times in parallel, and each call requires up to $\polylog(T)$ inversions. Therefore, we can apply \autoref{lem:bulk-division} to replace~$m$ inversions by $\Order(m)$ multiplications and a single inversion in time $\Order(m + \log q)$. In that way, it takes time $\Order(\polylog(T) \mult (m + \log q)) = \Order(m \polylog(T) + \polylog(q))$ to deal with all divisions and the total running time is $\Order(m T \log^2 T + \polylog(q)) = \Order(k \log^2 (m/k) + \polylog(q))$.
\end{proof}

\begin{proof}[Proof of \autoref{lem:unfolding}]
Given $A^0 = (A \bullet \omega^0) \bmod m, \dots, A^{T-1} = (A \bullet \omega^{T-1}) \bmod m$, the goal is to recover a good approximation $\widetilde A$ of $A$, provided that an approximation $X$ of $\supp(A)$ is given. As before, we partition $X$ into buckets $X_i = \{ x \in X : x \bmod m = i \}$. We say that the bucket $X_i$ is \emph{overfull} if $|X_i| > T$. In contrast to \Fold{}, we can afford to ignore all overfull buckets here, so focus on an arbitrary bucket $X_i$ with $|X_i| \leq T$. Letting $x_1, \dots, x_T$ denote the elements in $X_i$ in an arbitrary order (and assuming for the sake of simplicity that there are exactly $T$ of these), it suffices to solve the following Vandermonde system with indeterminates $\widetilde A_{x_1}, \dots, \widetilde A_{x_T}$:
\begin{equation*}
    \begin{bmatrix} A^0_i \\ A^1_i \\ A^2_i \\ \vdots \\ A^{T-1}_i \end{bmatrix} = \begin{bmatrix} 1 & 1 & \cdots & 1 \\ \omega^{x_1} & \omega^{x_2} & \cdots & \omega^{x_T} \\ \omega^{2x_1} & \omega^{2x_2} & \cdots & \omega^{2x_T} \\ \vdots & \vdots & \ddots & \vdots \\ \omega^{(T-1)x_1} & \omega^{(T-1)x_2} & \cdots & \omega^{(T-1)x_T} \end{bmatrix} \begin{bmatrix} \widetilde A_{x_1} \\ \widetilde A_{x_2} \\ \vdots \\ \widetilde A_{x_T} \end{bmatrix}.
\end{equation*}
The running time can be analyzed in the same way as before, so let us focus on proving that $\norm{A - \widetilde A}_0$ is small. We say that a bucket $X_i$ is \emph{successful} if
\begin{enumerate*}[font=, label=(\roman*)]
\item\label{lem:folding-unfolding:itm:overfull} it is not overfull, and if
\item\label{lem:folding-unfolding:itm:approx} there exists no support element $x \in \supp(A) \setminus X$ with $x \bmod m = i$.
\end{enumerate*}
The claim is that whenever $X_i$ is successful, then $\widetilde A_x = A_x$ for all $x \in X_i$. Indeed, for any successful bucket one can verify by the definition of the $\bullet$-product that the equation system is valid for $A_x$ in place of $\widetilde A_x$, and as the Vandermonde matrix has full rank this is the unique solution.

Therefore, it suffices to bound the total size of all non-successful buckets: On the one hand, the number of elements in buckets for which condition~\ref{lem:folding-unfolding:itm:overfull} holds but~\ref{lem:folding-unfolding:itm:approx} fails is at most $T \mult |\supp(A) \setminus X|$. On the other hand, the contribution of elements in buckets for which condition~\ref{lem:folding-unfolding:itm:overfull} fails is exactly the flatness of $X$ modulo $m$, by definition. Together, these yield the claimed bound on $\norm{A - \widetilde A}_0$.
\end{proof}

\subsection{The Algorithm} \label{sec:set-query:sec:algorithm}
We are ready to prove \autoref{lem:tiny-univ-set-query} by analyzing the pseudo-code given in \autoref{alg:set-query}. The analysis is split into three parts corresponding to the three parts in \autoref{alg:set-query}. The first step is to prove that the loop in Part 1 quickly terminates.

\begin{algorithm}[t]
\caption{$\TinyUnivApproxSetQuery(A, B, U, k, X)$} \label{alg:set-query}
\begin{algorithmic}[1]
\Input{
\begin{itemize}[topsep=.1ex, noitemsep, leftmargin=1em]
    \item Nonnegative vectors $A, B$ of length $U \leq k / \gamma^2$
    \item An integer $k$ such that $\norm{A \conv B}_0 \leq k$
    \item A set $X \subseteq [U]$ of size $\Order(k)$ such that $|\supp(A \conv B) \setminus X| \leq \order(\gamma^2 k)$
\end{itemize}}
\smallskip
\Output{A vector $\widetilde C$ such that $\norm{A \conv B - \widetilde C}_0 \leq \gamma k$ with probability $1 - \delta$}

\smallskip
\Statex \emph{(Part 1: Find a suitable linear hash function)}
\State Let $m = \Theta(\gamma k)$, let $T = \floor{2|X| / m}$ and let $p \geq 4U^2$ be a prime
\Repeat \label{line:hash-repeat}
    \State Pick $\sigma, \tau \in [p]$ uniformly at random
    \State Let $\pi(x) = (\sigma x + \tau) \bmod p$
    \State $\widetilde X \gets \pi(X) + \{0, p\}$
\Until {\raisebox{0pt}[0pt][0pt]{$\widetilde X$} is $\gamma k/2$-flat modulo $m$} \label{line:hash-until}
\medskip

\Statex \emph{(Part 2: Set up a sufficiently large finite field)}
\State Let $q > U^3 \norm A_\infty \norm B_\infty$ be a prime; the following calculations are over $\Int_q$ \label{line:q}
\State Pick $\omega \in \Int_q^\times$ uniformly at random \label{line:omega}
\medskip

\Statex \emph{(Part 3: Fold -- Convolve -- Unfold)}
\State $A^0, \dots, A^{T-1} \gets \Fold(\pi(A), \omega)$ \label{line:fold-A}
\State $B^0, \dots, B^{T-1} \gets \Fold(\pi(B), \omega)$ \label{line:fold-B}
\For {$t \gets 0, \dots, T-1$}
    \State $C^t \gets A^t \conv_m B^t$ (using the dense convolution algorithm) \label{line:conv}
\EndFor \vspace*{-.7ex}
\State $\widetilde R \gets \Unfold(C^0, \dots, C^{T-1}, \omega, \widetilde X)$ \label{line:unfold}
\State $\widetilde C \gets \pi^{-1}(\widetilde R)$
\State\Return \raisebox{0pt}[0pt][0pt]{$\widetilde C$} (cast back to an integer vector)
\smallskip

\end{algorithmic}
\end{algorithm}

\begin{lemma}[Analysis of Part 1] \label{lem:set-query-part-1}
With probability $1 - \delta/2$, the loop in Lines~\ref{line:hash-repeat}--\ref{line:hash-until} terminates in time $\Order(k \log(1/\delta))$.
\end{lemma}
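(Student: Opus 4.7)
The plan is to show that a single iteration of the loop in Lines~\ref{line:hash-repeat}--\ref{line:hash-until} succeeds (i.e., produces $\pi$ for which $\widetilde X$ is $\gamma k/2$-flat modulo $m$) with probability at least $\tfrac12$. By a standard amplification argument, $\Order(\log(1/\delta))$ independent iterations then drive the overall failure probability down to $\delta/2$. Each iteration runs in $\Order(|X|) = \Order(k)$ time---sampling $(\sigma,\tau)$, writing down $\widetilde X = \pi(X) + \{0,p\}$, and counting bucket sizes to test flatness---so the total running time is $\Order(k \log(1/\delta))$ as claimed.

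The heart of the argument is to bound $\Ex(F_m(\widetilde X))$ and then apply Markov. Write $h(x) := \pi(x) \bmod m$, let $c := p \bmod m$ (which is nonzero since $p > m$ is prime), and for a bucket $b$ let $b^* := (b - c) \bmod m$. With $G_b := |\{x \in X : h(x) = b\}|$, the bucket occupancies of $\widetilde X$ decompose as $\widetilde G_b := |\{y \in \widetilde X : y \equiv b \pmod m\}| = G_b + G_{b^*}$. Using the pairwise near-independence from \autoref{lem:linear-hashing-basics} together with the choice $p \geq 4U^2$, a short second-moment calculation yields $\Ex(\widetilde G_b) = 2|X|/m \pm \Order(1)$ and $\mathrm{Var}(\widetilde G_b) = \Order(|X|/m)$: each individual $\mathrm{Var}(G_b)$ contributes $\Order(|X|/m)$, and the cross-covariance $\mathrm{Cov}(G_b, G_{b^*})$ is small, with its diagonal ($x = x'$) part of magnitude $\Order(|X|/m^2)$ and its off-diagonal part $\Order(|X|^2/(mp)) = \Order(1/m)$ by the choice of $p$.

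Setting the flatness threshold $L := 2|\widetilde X|/m = 4|X|/m$, Chebyshev gives $\Pr(\widetilde G_b > t) \leq \Order((|X|/m)/(t - 2|X|/m)^2)$ for $t > 2|X|/m$. Integrating this tail against itself yields the per-bucket bound
\begin{equation*}
    \Ex(\widetilde G_b \cdot [\widetilde G_b > L]) \;\leq\; L \cdot \Pr(\widetilde G_b > L) + \int_L^{\infty} \Pr(\widetilde G_b > t) \, dt \;=\; \Order(1),
\end{equation*}
and summing over the $m$ buckets gives $\Ex(F_m(\widetilde X)) = \Order(m) = \Order(\gamma k)$. Choosing the hidden constant in $m = \Theta(\gamma k)$ sufficiently large makes this at most $\gamma k/4$, and Markov yields $\Pr(F_m(\widetilde X) > \gamma k/2) \leq \tfrac12$.

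The main subtlety is the variance bound: one must isolate the diagonal cancellation inside $\mathrm{Cov}(G_b, G_{b^*})$ and absorb every $\Order(1/p)$ error from the estimates of \autoref{lem:linear-hashing-basics} via $p \geq 4U^2$. One could instead route the argument through \autoref{cor:overfull-buckets}, but that bound is weaker here---it carries an extra $U \log U / k$ factor, which is non-trivial when $U$ is as large as $k/\gamma^2$---so the direct Chebyshev analysis above is both tighter and simpler.
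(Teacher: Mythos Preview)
Your approach is correct and genuinely different from the paper's. The paper fixes each $x \in X$, conditions on $h(x) = a$, and then invokes \autoref{cor:overfull-buckets} (the heights-based conditional concentration bound) to control the probability that the bucket containing $x$ is overfull; summing over $x$ and applying Markov finishes. By contrast, you rewrite $F_m(\widetilde X) = \sum_b \widetilde G_b \cdot [\widetilde G_b > L]$ as a sum over buckets, which removes the need for any conditioning. This lets you get by with just pairwise independence (\autoref{lem:linear-hashing-basics}) to bound $\Var(\widetilde G_b) = \Order(|X|/m)$, followed by a Chebyshev tail integral. Your argument is more elementary---it avoids the heights machinery entirely for this lemma---and the resulting bound $\Ex(F_m(\widetilde X)) = \Order(m)$ is cleaner than what the paper's route through \autoref{cor:overfull-buckets} delivers (which carries the extra $U\log U / k$ factor you flag).

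Two minor points. First, the direction of the constant is backwards: since $\Ex(F_m(\widetilde X)) \leq C m$ for an absolute $C$, you need the hidden constant in $m = \Theta(\gamma k)$ to be sufficiently \emph{small} (not large) so that $Cm \leq \gamma k/4$; this matches the paper's choice $m = \epsilon\gamma k$ for small~$\epsilon$. Second, your closing remark about \autoref{cor:overfull-buckets} slightly misdiagnoses why the paper uses it: it is not that pairwise independence is unavailable in the paper's setup, but that their per-element formulation forces conditioning on $h(x) = a$, under which the naive second-moment bound no longer applies directly. Your bucket-wise reformulation sidesteps the conditioning altogether---that is the real simplification.
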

\begin{proof}
We prove that a single iteration of the loop succeeds with constant probability. Having established that fact, it is clear that the loop is left after at most $\Order(\log(1/\delta))$ independent iterations with probability at least $1 - \delta/2$. Recall that the loop ends if $\widetilde X$ is $\gamma k/2$-flat modulo $m$, that is,
\begin{equation} \label{eq:event-tilde}
    \sum_{x \in \widetilde X} \left[ \sum_{x' \in \widetilde X} \left[x = x' \mod m\right] > \frac{2|\widetilde X|}m \right] \leq \frac{\gamma k}2.
\end{equation}
Since by definition $\widetilde X = \pi(X) + \{0, p\} $, we may fix offsets $o, o' \in \{0, p\}$ and instead bound
\begin{equation} \label{eq:event}
    \sum_{x \in X} \left[ \sum_{x' \in X} \left[h(x) + o = h(x') + o' \mod m\right] > \frac{2|X|}m \right] \leq \frac{\gamma k}4,
\end{equation}
where $h(x) = \pi(x) \bmod m$ is a linear hash function with parameters $p$ and $m$. Indeed, if the latter event happens (simultaneously for all offsets~$o, o'$), then also the former event happens. Fix $o, o'$ and fix any $x \in X$. Then:
\begin{align*}
    &\Pr\left( \sum_{x' \in X} \left[h(x) + o = h(x') + o' \mod m\right] > \frac{2|X|}m \right) \\
    ={} &\sum_{a \in [m]} \Pr(h(x) = a) \mult \Pr\left( \sum_{x' \in X} \left[h(x') = (a + o - o') \bmod m\right] > \frac{2|X|}m \;\middle|\; h(x) = a \right) \\
\intertext{This is where our concentration bounds come into play: Observe that the conditional probability can be bounded by \autoref{cor:overfull-buckets} with buckets $a$ and $b = (a + o - o') \bmod m$. Let $F = \sum_{x' \in X} [h(x') = b]$, then $\Ex(F) = |X|/m + \Order(1)$. It follows that:}
    ={} &\sum_{a \in [m]} \Pr(h(x) = a) \mult \Pr\left( F > \frac{2|X|}m \;\middle|\; h(x) = a \right), \\
    ={} &\sum_{a \in [m]} \Pr(h(x) = a) \mult \Pr\left( F - \Ex(F) > \frac{|X|}m - \Order(1) \;\middle|\; h(x) = a \right), \\
    \leq{} &\sum_{a \in [m]} \Pr(h(x) = a) \mult \Order\left(\frac{m U \log U}{|X|^2}\right) \\
    ={} &\Order\left(\frac{m U \log U}{|X|^2}\right)
\intertext{where for the inequality we applied \autoref{cor:overfull-buckets} with $\lambda = \sqrt{|X| / m} - \Order(1)$. We choose $m = \epsilon \gamma k$ for some small constant $\epsilon > 0$, then:}
    ={} &\Order\left(\frac{\epsilon \gamma k U \log U}{k^2}\right) = \Order\left(\frac{\epsilon \gamma k (k / \gamma^2) \log (k / \gamma^2)}{k^2}\right) = \Order\left(\frac{\epsilon \log k}{\gamma}\right) = \Order(\epsilon).
\end{align*}
Here we used the assumption $\log k \leq 1/\gamma \leq \poly(k)$. By setting $\epsilon$ small enough, this probability becomes less than~$1/12$. Then, by a union bound over the three possible values of $o - o'$ and by Markov's inequality, we conclude that the event in~\eqref{eq:event} (and thereby the event in~\eqref{eq:event-tilde}) happens with probability at least $1/2$.

As we just proved, with probability $1 - \delta / 2$ the loop in Lines~\ref{line:hash-repeat}--\ref{line:hash-until} runs for at most $\Order(\log(1/\delta))$ iterations. Moreover, each execution of the loop body takes time $\Order(k)$, and thus the loop terminates in time $\Order(k \log(1/\delta))$.
\end{proof}

\begin{lemma}[Analysis of Part 2] \label{lem:set-query-part-2}
With probability $1 - 1/\poly(k)$ and in $\polylog(k, \norm A_\infty, \norm B_\infty)$ time we correctly compute~$q$ and~$\omega$ such that~$\omega$ has multiplicative order at least $U$ in~$\Int_q^\times$.
\end{lemma}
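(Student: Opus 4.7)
The statement has two largely independent components: producing a prime $q > U^3 \|A\|_\infty \|B\|_\infty$ (Line~\ref{line:q}) and sampling an $\omega \in \Int_q^\times$ whose multiplicative order is at least $U$ (Line~\ref{line:omega}). I would handle them in order and combine with a union bound.

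For the prime, set $L := U^3 \|A\|_\infty \|B\|_\infty$ and repeatedly draw uniform integers from $[L+1, 2L]$, running Miller--Rabin with $\Theta(\log k)$ independent witnesses so that each primality verdict is correct with probability $1 - 1/\poly(k)$. By Chebyshev's bounds on $\pi(x)$, this interval contains $\Theta(L/\log L)$ primes, so each candidate is prime with probability $\Omega(1/\log L)$; thus $\Order(\log L \cdot \log k)$ trials suffice to find one with failure probability $1/\poly(k)$. Each trial costs $\polylog(L) = \polylog(k, \|A\|_\infty, \|B\|_\infty)$, giving the desired overall time budget.

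For $\omega$, the key observation is that $\Int_q^\times$ is cyclic of order $q - 1$, so over the field $\Int_q$ the polynomial $x^d - 1$ has at most $d$ roots. An element $\omega \in \Int_q^\times$ of order strictly less than $U$ must satisfy $\omega^d = 1$ for some $d \in \{1, \ldots, U-1\}$, so a union bound gives
\begin{equation*}
\#\{\omega \in \Int_q^\times : \mathrm{ord}(\omega) < U\} \;\leq\; \sum_{d=1}^{U-1} d \;<\; \frac{U^2}{2}.
\end{equation*}
Hence a uniformly random $\omega \in \Int_q^\times$ has order at least $U$ with probability at least $1 - U^2/(2(q-1)) \geq 1 - 1/(U\, \|A\|_\infty \|B\|_\infty)$, which is $1 - 1/\poly(k)$, using that in this algorithmic regime $U \geq \Omega(k)$ (otherwise the output has fewer than $k$ entries and one may as well solve the problem with a direct dense FFT).

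The only subtlety worth flagging is that we never verify the order of $\omega$ after sampling: doing so naively would seem to require factoring $q - 1$. Fortunately the statement is Monte Carlo, so the counting bound above suffices, and should a smaller error be desired, enlarging $q$ by any polynomial factor in $k$ — still only $\polylog(k)$ bits — boosts the success probability accordingly. Combining the two stages by a union bound yields the claimed $1 - 1/\poly(k)$ overall success probability within time $\polylog(k, \|A\|_\infty, \|B\|_\infty)$.
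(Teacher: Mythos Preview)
Your proof is correct and follows essentially the same approach as the paper: the paper also glosses over the prime-finding step as standard, and bounds the number of low-order elements in $\Int_q^\times$ by $\sum_{i \leq U} i \leq U^2$ (phrased via the additive cyclic group $\Int_{q-1}$ rather than via roots of $x^d - 1$, but this is the same count), concluding $\Pr(\mathrm{ord}(\omega) \geq U) \geq 1 - U^2/q \geq 1 - 1/U$. Your extra details on Miller--Rabin and the justification that $U = \Omega(k)$ (else solve by dense FFT) are fine additions the paper omits.
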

\begin{proof}
Computing $q$ takes time $\polylog(k, \norm A_\infty, \norm B_\infty)$ and succeeds with high probability. The interesting part is to show that~$\omega$ is as claimed. It is well-known that $\Int_q^\times$ is isomorphic to the cyclic group $\Int_{q-1}$ and thus our problem is equivalent to finding an element in $\Int_{q-1}$ with (additive) order at least $U$. In a cyclic group there can be at most $i$ elements with order $i$ (the only possible candidates are multiples of $(q - 1) / i$) and thus there are at most $\sum_{i \leq U} i \leq U^2$ elements with order at most $U$. Hence, the probability of sampling $\omega$ as claimed is at least $1 - U^2 / q \geq 1 - 1 / U \geq 1 - 1 / \poly(k)$.
\end{proof}

\begin{lemma}[Analysis of Part 3] \label{lem:set-query-part-3}
With probability $1 - 1/\poly(k)$, Part 3 correctly outputs a vector~$\widetilde C$ with $\norm{A \conv B - \widetilde C}_0 \leq \gamma k$ and runs in time $\Order(D(k) + k \log^2(1/\gamma) + \polylog(k, \norm A_\infty, \norm B_\infty))$.
\end{lemma}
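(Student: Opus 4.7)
\emph{Correctness.} The plan is to reduce the analysis to the Fold/Unfold primitives of \autoref{lem:folding} and \autoref{lem:unfolding}. Set $R = \pi(A) \conv \pi(B)$ viewed over $\Int_q$. By \autoref{prop:bullet-product} and the elementary identity $(U \bmod m) \conv_m (V \bmod m) = (U \conv V) \bmod m$, we have $C^t = (\omega^t \bullet R) \bmod m$ for every $t \in [T]$, which is exactly the input required by Unfold. The almost-affinity of $\pi$ (\autoref{lem:linear-hashing-basics}) yields $\supp(R) \subseteq \pi(\supp(A \conv B)) + \{0,p\}$ after absorbing the additive offset $\pi(0)$ into the reversal step, so $|\supp(R) \setminus \widetilde X| \leq 2\,|\supp(A \conv B) \setminus X| \leq o(\gamma^2 k)$ by hypothesis. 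Plugging this together with $T = O(1/\gamma)$ and the $(\gamma k/2)$-flatness of $\widetilde X$ guaranteed by Part~1 into \autoref{lem:unfolding} gives
\begin{equation*}
    \|R - \widetilde R\|_0 \leq T \cdot o(\gamma^2 k) + \gamma k/2 \leq \gamma k.
\end{equation*}
Since $\sigma \neq 0$ with probability $1 - 1/p = 1 - 1/\poly(k)$ and $p \geq 4U^2$, $\pi$ is injective on $[U]$, so the reversal $\widetilde C = \pi^{-1}(\widetilde R)$ preserves this bound coordinatewise, giving $\|A \conv B - \widetilde C\|_0 \leq \gamma k$. Choosing $q > U^3 \|A\|_\infty \|B\|_\infty$ on Line~\ref{line:q} forces $\|A \conv B\|_\infty < q$, so the final cast to integers is lossless; Part~2 already guarantees that $\omega \in \Int_q^\times$ has multiplicative order at least $U$, which is what the transposed Vandermonde systems inside Fold/Unfold need.

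\emph{Running time.} The $T = \Theta(1/\gamma)$ dense cyclic convolutions of length $m = \Theta(\gamma k)$ on Line~\ref{line:conv} contribute time $T \cdot D(m) = O((k/m) \cdot D(m))$. The assumption that $D(n)/n$ is nondecreasing gives $D(m)/m \leq D(k)/k$ for $m \leq k$, so this is at most $D(k)$. Each call to \Fold{} and \Unfold{} on Lines~\ref{line:fold-A}, \ref{line:fold-B}, \ref{line:unfold} costs $\Order(k \log^2(k/m) + k \log_k U + \polylog q)$, which under $m = \Theta(\gamma k)$, $U \leq k/\gamma^2 \leq \poly(k)$, and $q = \poly(k, \|A\|_\infty, \|B\|_\infty)$ simplifies to $\Order(k \log^2(1/\gamma) + \polylog(k, \|A\|_\infty, \|B\|_\infty))$. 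Summing the three terms yields the claimed running time.

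\emph{Main obstacle.} Mechanically, everything reduces to invoking the previously established primitives; the one delicate point is the bookkeeping of almost-affinity. Because $\pi(y) + \pi(z) = \pi(0) + \pi(y + z) + o$ with $o \in \{-p, 0, p\}$, each $s \in \supp(A \conv B)$ can register a contribution to $R = \pi(A) \conv \pi(B)$ at up to two positions in $[0, 2p)$ that differ by exactly $p$. This is precisely why Part~1 inflates $\pi(X)$ to $\widetilde X = \pi(X) + \{0, p\}$, why the flatness is demanded modulo~$m$ rather than after a single hashing, and why the constants $\gamma k/2$ and $o(\gamma^2 k)$ are tuned so that $T \cdot o(\gamma^2 k) + \gamma k/2 \leq \gamma k$. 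Carefully verifying that the constant $\pi(0)$ shift is consistently handled by $\pi^{-1}$ and that these contributions propagate without amplification through Unfold is the only non-routine part of the proof; all other ingredients are direct appeals to \autoref{prop:bullet-product}, Lemmas~\ref{lem:folding}--\ref{lem:unfolding}, and the monotonicity of $D(n)/n$.
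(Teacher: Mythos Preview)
Your proof is correct and follows essentially the same approach as the paper: you reduce to the Fold/Unfold guarantees via $C^t = (\omega^t \bullet R) \bmod m$ for $R = \pi(A)\conv\pi(B)$, bound $\|R-\widetilde R\|_0$ by $T\cdot|\supp(R)\setminus\widetilde X| + F_m(\widetilde X)$ using the flatness from Part~1 and the inclusion $\supp(R)\subseteq \pi(\supp(A\conv B))+\{0,p\}$, invert $\pi$, and handle the running time via the monotonicity of $D(n)/n$. Your explicit bookkeeping of the $\pi(0)$ offset and the two-position splitting in $[0,2p)$ is in fact slightly more careful than the paper's own write-up of the same step.
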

\begin{proof}
    In the event that the previous parts succeeded the technical condition of Lemmas~\ref{lem:folding} and~\ref{lem:unfolding} is satisfied (namely that $\omega$ has large multiplicative order) and we may apply \Fold{} and \Unfold{}. In Lines~\ref{line:fold-A} and~\ref{line:fold-B} we thus correctly compute $A^t = (\omega^t \bullet \pi(A)) \bmod m$, for all $t \in [T]$, and similarly for~$B$. As we are assuming (for now) that the dense convolution algorithm succeeds with probability~$1$, in \autoref{line:conv} we correctly compute the cyclic convolutions $C^t = A^t \conv_m B^t$.

The interesting step is to analyze the unfolding in \autoref{line:unfold}. By \autoref{prop:bullet-product} and some elementary identities about cyclic convolutions we have
\begin{align*}
    C^t &= A^t \conv_m B^t \\
    &= (\omega^t \bullet \pi(A)) \conv_m (\omega^t \bullet \pi(B)) \\
    &= ((\omega^t \bullet \pi(A)) \conv (\omega^t \bullet \pi(B))) \bmod m \\
    &= (\omega^t \bullet (\pi(A) \conv \pi(B))) \bmod m,
\end{align*}
i.e.\ it holds that $C^t = (\omega^t \bullet R) \bmod m$ for $R = \pi(A) \conv \pi(B)$. For that reason, \autoref{lem:unfolding} guarantees that the call to \Unfold{} will approximately recover $R$ and the approximation quality is bounded by
\begin{equation*}
    \norm{R - \widetilde R}_0 \leq T \mult |\supp(R) \setminus \widetilde X| + F_m(\widetilde X).
\end{equation*}
By the loop guard in \autoref{line:hash-until} we can assume that $F_m(\widetilde X) \leq \gamma k / 2$. We can put the same bound on the term $T \mult |\supp(R) \setminus \widetilde X|$. Indeed, note that since $\supp(R) \subseteq \pi(\supp(A \conv B)) + \{0, p\}$ and $\widetilde X = X + \{0, p\}$, we must have that $|\supp(R) \setminus \widetilde X| \leq 2 |\supp(A \conv B) \setminus X|$. It follows that
\begin{equation*}
    T \mult |\supp(R) \setminus \widetilde X| \leq 2 T \mult |\supp(A \conv B) \setminus X| \leq \frac{\order(\gamma^2 k^2)}m,
\end{equation*}
which becomes $\gamma k / 2$ for sufficiently large $k$ since we picked $m = \Theta(\gamma k)$. All in all, this shows that $\norm{R - \widetilde R}_0 \leq \gamma k$ as claimed.

The remaining steps are easy to analyze: Since $p$ is a prime, the function $\pi(x) = (\sigma x + \tau) \bmod p$ is invertible on $[p]$ (assuming that $\sigma \neq 0$, which happens with high probability). As $\pi(A \conv B) = R$ and as $A \conv B$ is a vector of length $U < p$ it follows that $A \conv B = \pi^{-1}(R)$. In the same way, we obtain for $\widetilde C = \pi^{-1}(\widetilde R)$ that $\norm{A \conv B - \widetilde C}_0 \leq \gamma k$. In the final step we use that $q$ is large enough (larger than any entry in the convolution $A \conv B$), so we can safely cast $\widetilde C$ back to an integer vector.

Let us finally bound the running time of Part 3. The calls to \Fold{} and \Unfold{} take time $\Order(k \log^2(k/m) + k \log_k U + \polylog(q)) = \Order(k \log^2(1/\gamma) + \polylog(k, \norm A_\infty, \norm B_\infty))$. Computing $T = \Order(1/\gamma)$ convolutions of vectors of length $m = \Order(\gamma k)$ takes time at most
\begin{equation*}
    \Order\!\left(\frac1\gamma \mult D(\gamma k)\right) = \Order\!\left(\frac{\gamma k}\gamma \mult \frac{D(\gamma k)}{\gamma k}\right) = \Order\!\left(k \mult \frac{D(k)}{k}\right) = \Order(D(k)),
\end{equation*}
assuming that $D(n) / n$ is nondecreasing. Summing the two contributions yields the claimed running time.
\end{proof}

In combination, Lemmas~\ref{lem:set-query-part-1},~\ref{lem:set-query-part-2} and~\ref{lem:set-query-part-3} show that \autoref{alg:set-query} is correct and runs in the correct running time with probability at least $1 - \delta/2 - 1/\poly(k) \geq 1 - \delta$. This finishes the proof of \autoref{lem:tiny-univ-set-query}.

\subsection{Corrections for Randomized Dense Convolution} \label{sec:algorithm:sec:randomized}
In the previous subsection, we assumed that we have black-box access to a deterministic algorithm computing the dense convolution of two length-$n$ vectors in time $D(n)$. We will now prove that it suffices to assume that the black-box algorithm errs with constant probability, say $1/3$.

\begin{lemma} \label{lem:tiny-univ-set-query-randomized}
Let $\log k \leq 1/\gamma \leq \poly(k)$ and let $1/\delta \leq \poly(k)$. There is an algorithm for the \TinyUnivApproxSetQuery{} problem running in time
\begin{equation*}
    \Order(D_{1/3}(k) + k \log^2(1/\gamma) + k \log(1/\delta) + \polylog(k, \norm A_\infty, \norm B_\infty)).
\end{equation*}
\end{lemma}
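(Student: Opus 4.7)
The plan is to modify only \autoref{line:conv} of \autoref{alg:set-query}: for each $t$ we invoke the randomized black box on $(A^t, B^t)$ to obtain a candidate \emph{non-cyclic} convolution $\widetilde C^t$ of length $2m-1$, then verify it by a Schwartz--Zippel test---draw $x_0 \in \Int_q$ uniformly at random and check the polynomial identity $A^t(x_0)\cdot B^t(x_0) = \widetilde C^t(x_0)$, costing $O(m)$ time via Horner's rule. If the check fails we retry with fresh randomness up to $r = \Theta(\log(T/\delta))$ times; upon acceptance we wrap $\widetilde C^t$ modulo $m$ to recover the cyclic convolution $C^t$ needed by \autoref{alg:set-query}. (If necessary, enlarge the prime $q$ from \autoref{line:q} by a fixed polynomial factor so that the per-test false-positive probability is $\le 1/\poly(k)$; this keeps $\log q$ within $\polylog(k, \norm{A}_\infty, \norm{B}_\infty)$.) The reason for verifying the non-cyclic rather than the cyclic convolution is that a single evaluation over $\Int_q$ cannot reliably test the cyclic identity $A(x)B(x) \equiv C(x) \pmod{x^m-1}$, whereas the non-cyclic identity $A(x)B(x) = C(x)$ is directly testable because $q$ is polynomially larger than the degree $2m-2$.

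For correctness, if $\widetilde C^t$ is wrong then $A^t(x)B^t(x) - \widetilde C^t(x)$ is a nonzero polynomial of degree $\le 2m-2$ over $\Int_q$, so a single test has false-positive probability $\le (2m-2)/q \le 1/\poly(k)$; a union bound over the $\poly(k)$ tests run in Part~3 yields false-positive probability $\le \delta/10$. Independently, each coordinate $t$ fails all $r$ attempts of $D_{1/3}$ with probability $\le (1/3)^r \le \delta/(2T)$, so by a union bound every $C^t$ is recovered correctly with probability $\ge 1 - \delta/2$. Combined with the failure budgets from Lemmas~\ref{lem:set-query-part-1} and~\ref{lem:set-query-part-2}, the overall success probability is $\ge 1 - \delta$, and in the success event the analysis of \autoref{lem:set-query-part-3} applies verbatim to yield $\norm{A \conv B - \widetilde C}_0 \le \gamma k$.

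For the running time, the number of $D_{1/3}$ invocations for a fixed $t$ is a (capped) geometric random variable with mean $3/2$, so by a standard concentration bound for sums of independent sub-exponential variables the total number $N$ of invocations across the $T$ coordinates satisfies $N \le 2T + O(\log(1/\delta))$ with probability $\ge 1 - \delta/10$. Thus the amplified convolutions cost $O(N \cdot D_{1/3}(m) + Nm)$. Since $D_{1/3}(n)/n$ is nondecreasing and $Tm = \Theta(k)$, we get $T \cdot D_{1/3}(m) = O(D_{1/3}(k))$; and since $1/\gamma \ge \log k \ge \log(1/\delta)$ implies $\gamma \log(1/\delta) = O(1)$, we also get $\log(1/\delta) \cdot D_{1/3}(m) \le \gamma \log(1/\delta) \cdot D_{1/3}(k) = O(D_{1/3}(k))$. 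All other parts of \autoref{alg:set-query} are unchanged, so summing the contributions yields the claimed bound. The main anticipated obstacle is to ensure that $q$ is simultaneously large enough for the algebraic machinery of \autoref{sec:set-query:sec:folding} and for the Schwartz--Zippel test, which is addressed by the polynomial enlargement of $q$ mentioned above.
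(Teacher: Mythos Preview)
Your proposal is correct and follows essentially the same approach as the paper: replace each black-box call in \autoref{line:conv} by a retry loop guarded by a Schwartz--Zippel verifier, then bound the total number of retries by a concentration inequality for sums of geometric random variables. The paper states the verifier separately as \autoref{lem:dense-verification} and invokes Janson's tail bound (\autoref{thm:geometric-tailbound}) directly to get $N \le 6T$ with probability $1 - \exp(-\Omega(T)) \ge 1 - 1/\poly(k)$, whereas you cap the retries and appeal to a generic sub-exponential bound; both routes land on $N = O(T)$ once one uses $T = \Theta(1/\gamma) \ge \log k \ge \Omega(\log(1/\delta))$. Two minor remarks: your inequality ``$\log k \ge \log(1/\delta)$'' only holds up to the constant hidden in $1/\delta \le \poly(k)$, but this does not affect the conclusion $\gamma\log(1/\delta)=O(1)$; and your explicit observation that the \emph{non-cyclic} output should be verified (before wrapping mod $m$) is a nice clarification that the paper leaves implicit.
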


The idea is simple: Every call to the randomized dense convolution algorithm is followed by a call to the verifier presented in \autoref{lem:dense-verification}. If a faulty output is detected, then we repeat the convolution (with fresh randomness) and test again. 

\begin{lemma}[Dense Verification] \label{lem:dense-verification}
Given three vectors $A, B, C$ of length $U$, there is a randomized algorithm running in time $\Order(U + \polylog(U, \norm A_\infty, \norm B_\infty))$, which checks whether $A \conv B = C$. The algorithm fails with probability at most $1 / \poly(U)$. 
\end{lemma}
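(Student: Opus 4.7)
\medskip
\noindent\textbf{Proof plan.} The plan is to verify the polynomial identity $A(x) \cdot B(x) = C(x)$ by evaluating both sides at a random point in a suitable finite field, in the spirit of classical Schwartz--Zippel fingerprinting. Concretely, interpret $A$, $B$, $C$ as polynomials of degree less than $2U$, and let $D := A \conv B - C$ (of degree less than $2U$), whose coefficients have absolute value at most $M := U \norm{A}_\infty \norm{B}_\infty + \norm{C}_\infty$. Note that $\log M = O(w)$ since every input entry fits into $O(1)$ machine words. The algorithm will pick a random prime $p$ from a suitable range $[P, 2P]$ with $P = \Theta(U^2 \log M)$, pick a uniformly random $r \in [p]$, compute $\alpha := A(r) \bmod p$, $\beta := B(r) \bmod p$, $\gamma := C(r) \bmod p$ by Horner's rule, and accept iff $\alpha \beta \equiv \gamma \pmod p$.

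For correctness, if $A \conv B = C$ then the test always accepts. Otherwise $D$ is a nonzero integer polynomial of degree less than $2U$ with coefficients bounded by $M$. I would argue in two steps. First, fix any nonzero coefficient $d$ of $D$; since $|d| \le M$, the integer $d$ has at most $\log_2 M = O(w)$ distinct prime divisors, whereas by the prime number theorem there are $\Theta(P/\log P)$ primes in $[P, 2P]$, so a uniformly chosen prime $p$ divides $d$ with probability $O(w \log P / P) = 1/\poly(U)$. Hence, with high probability $D \bmod p$ is a nonzero polynomial over $\Int_p$. Second, conditioned on this event, Schwartz--Zippel gives $\Pr_r[D(r) \equiv 0 \pmod p] \le 2U/p \le 2U/P = 1/\poly(U)$. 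A union bound yields overall failure probability $1/\poly(U)$.

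For the running time, choosing $P = \poly(U, \log M) \le 2^{O(w)}$ ensures that $p$ fits in $O(1)$ machine words, so that modular addition and multiplication cost $O(1)$ time. Sampling a random prime in $[P, 2P]$ and a random $r \in [p]$ takes $\polylog(P) = \polylog(U, \norm A_\infty, \norm B_\infty)$ time via standard primality testing. Reducing the entries of $A$, $B$, $C$ modulo $p$ takes $O(U)$ time (each entry fits in $O(1)$ words). Finally, the three Horner evaluations each take $O(U)$ modular multiplications and additions, for a total of $O(U)$ time. Adding these contributions yields the claimed $O(U + \polylog(U, \norm A_\infty, \norm B_\infty))$ bound.

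The only non-routine point, and the place I would write out most carefully, is the first step of the correctness argument: the probability bound must use that $p$ itself is random, not just $r$, because otherwise nothing prevents $p$ from dividing every coefficient of $D$. Balancing $P$ so that both $O(w \log P/P)$ (bad prime) and $O(U/P)$ (Schwartz--Zippel) drop to $1/\poly(U)$ while keeping $\log p = O(w)$ is the only mildly delicate calculation; everything else is standard.
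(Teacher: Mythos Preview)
Your Schwartz--Zippel approach is the right idea and matches the paper's, but two points deserve correction; one of them is a genuine gap against the stated running time.

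First, you omit a preliminary check that the paper performs: test whether $\norm{C}_\infty > U\,\norm{A}_\infty\,\norm{B}_\infty$ and reject immediately if so (every entry of $A\conv B$ is at most $U\,\norm{A}_\infty\,\norm{B}_\infty$, so in that case certainly $C\neq A\conv B$). Without this check your bound $M = U\norm A_\infty\norm B_\infty + \norm C_\infty$, and hence $P=\Theta(U^2\log M)$, depends on $\norm C_\infty$. Your assertion $\polylog(P)=\polylog(U,\norm A_\infty,\norm B_\infty)$ is then unjustified: if $\norm C_\infty$ dwarfs $U\norm A_\infty\norm B_\infty$, the time spent sampling and testing the prime exceeds the lemma's stated $\polylog(U,\norm A_\infty,\norm B_\infty)$ budget. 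The $O(U)$-time preliminary check removes this dependence and is not optional.

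Second, once $M\le 2U\norm A_\infty\norm B_\infty$ is known, the paper's route is simpler than yours: it picks a single prime $p$ with $p>M$ (and $p>U^c$ for whatever polynomial failure bound you want). Then every nonzero integer coefficient of $D$ has absolute value in $[1,M]\subseteq[1,p-1]$ and is \emph{automatically} nonzero modulo $p$; the ``bad prime'' event you analyze simply cannot occur. Your remark that ``nothing prevents $p$ from dividing every coefficient of $D$'' is exactly what the size choice $p>M$ prevents. So there is no need for $p$ to be random, and the two-stage analysis (random prime plus Schwartz--Zippel) collapses to a single Schwartz--Zippel step over $\Int_p$, giving failure probability at most $2U/p\le 1/\poly(U)$.
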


The proof of \autoref{lem:dense-verification} is by a standard application of the classical Schwartz-Zippel lemma; in \autoref{sec:guess-k} we prove a more general statement about a sparse verifier. We also need the following tail bound on the sum of geometric random variables~\cite{Janson18}:

\begin{theorem}[{{\cite[Theorem~2.1]{Janson18}}}] \label{thm:geometric-tailbound}
Let $X_1, \dots, X_n$ be independent, identically distributed geometric random variables, and let $X = \sum_i X_i$. Then, for any $\lambda \geq 1$:
\begin{equation*}
    \Pr(X > \lambda \Ex(X)) \leq \exp(-n (\lambda - 1 - \ln\lambda)).
\end{equation*}
\end{theorem}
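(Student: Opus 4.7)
The plan is to apply the Chernoff--Cram\'er method. Write $p$ for the common parameter of the geometric summands, so that $\Ex X_1 = 1/p$ and $\mu := \Ex X = n/p$. The standard Chernoff inequality gives
\[
    \Pr(X > \lambda\mu) \;\leq\; e^{-t\lambda\mu}\, M(t)^n
\]
for every $t$ in the convergence region of the moment generating function $M(t) := \Ex(e^{tX_1}) = pe^t / (1-(1-p)e^t)$, namely $0 \leq t < -\ln(1-p)$. The remaining task is to pick $t$ so that the right-hand side simplifies to the claimed $p$-independent expression $\exp(-n(\lambda-1-\ln\lambda))$.

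The hard part is precisely getting rid of the $p$ dependence. A brute-force optimization over $t$ gives a bound that mixes $p$ and $\lambda$, and to conclude one would then need a monotonicity argument in $p$ (with the worst case $p \to 0^{+}$ recovering the familiar Gamma/exponential rate function); this is technically annoying. The cleaner route, which I would take, is to first prove the uniform upper bound
\[
    M(t) \;\leq\; \frac{1}{1-t/p} \qquad \text{for all } 0 \leq t < p,
\]
which already has the shape of the MGF of an exponential of rate $p$ (the continuous analogue of a geometric with parameter $p$). A short calculation shows this is equivalent to $e^{t}(1-t) \leq 1$ for $t \in [0,1)$, and the latter follows termwise from $e^{t} = \sum_{k \geq 0} t^{k}/k! \leq \sum_{k \geq 0} t^{k} = 1/(1-t)$ since $1/k! \leq 1$. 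Intuitively, this step replaces the geometric by its exponential envelope and is exactly what produces the clean, $p$-free rate function.

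Armed with this envelope I would set $t = p(1-1/\lambda)$, which lies in $[0,p)$ for every $\lambda \geq 1$. Then $1 - t/p = 1/\lambda$ yields $M(t) \leq \lambda$, while $t\lambda\mu = t\lambda n/p = n(\lambda-1)$. Plugging into the Chernoff inequality:
\[
    \Pr(X > \lambda\mu) \;\leq\; e^{-n(\lambda-1)} \cdot \lambda^{n} \;=\; \exp\!\bigl(-n(\lambda - 1 - \ln\lambda)\bigr),
\]
which is exactly the stated bound; the boundary case $\lambda = 1$ gives the trivial estimate $1 \leq 1$. No further work is needed, and in particular no delicate optimization or monotonicity argument is required, because the envelope $1/(1-t/p)$ was chosen to make the optimal-$t$ algebra collapse.
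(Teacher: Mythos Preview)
Your proof is correct. The paper itself does not prove this statement; it simply quotes it as \cite[Theorem~2.1]{Janson18} and uses it as a black box. Your Chernoff--Cram\'er argument with the exponential envelope $M(t) \leq 1/(1-t/p)$ is a clean, self-contained derivation that yields exactly the stated $p$-independent bound, so there is nothing to compare against in the paper proper.
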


\begin{proof}[Proof of \autoref{lem:tiny-univ-set-query-randomized}]
The overall proof is exactly as for \autoref{lem:tiny-univ-set-query}, we merely substitute the black-box calls to the dense convolution algorithm. The only place where this algorithm is directly called is in the proof of \autoref{lem:set-query-part-3}, where we compute $T = \Order(1 / \gamma)$ convolutions of length $m = \Order(\gamma k)$. Each such call is replaced by a test-and-repeat loop using the verifier in \autoref{lem:dense-verification}. As the failure probability of the verifier is at most $1 / \poly(m) = 1/ \poly(k)$, we can afford a union bound and assume that the verifier never fails, i.e., we uphold the assumption that dense convolution succeeds.

It remains to bound the running time overhead. A single iteration of the test-and-repeat loop takes time $\Order(D_{1/3}(m) + m) = \Order(D_{1/3}(m))$. To bound the number of iterations $X = \sum_i X_i$, let~$X_i$ model the number of iterations caused by the $i$-th dense convolution call. Observe that $X_i$ is geometrically distributed with parameter $p = 2/3$ and thus $\Ex(X) = 3T/2$. By \autoref{thm:geometric-tailbound} with, say, $\lambda = 4$, it follows that $\Pr(X > 4 \Ex(X)) \leq \exp(-T) \leq \exp(-\Omega(1/\gamma))$. Using that $\log k \leq 1/\gamma$, the number of iterations is bounded by $6T$ with high probability $1 - 1/\poly(k)$ and therefore the total running time to answer all dense convolution queries is bounded by $\Order(T D_{1/3}(m)) = \Order(D_{1/3}(k))$.
\end{proof}
% !TEX root = ../paper.tex

\section{Approximating the Support Set} \label{sec:approx-supp}
This section is devoted to finding a set $X$ which closely approximates $\supp(A\conv B)$. To that end, our goal is to solve the following problem, which is later applied with $Y = \supp(A)$ and $Z = \supp(B)$.

\probtinyunivapproxsupp*{}

\begin{lemma} \label{lem:tiny-univ-approx-supp}
There is an $\Order(k \log(1/\gamma) \log(\frac1{\gamma\delta}))$-time algorithm for the \prob{TinyUniv-ApproxSupp} problem.
\end{lemma}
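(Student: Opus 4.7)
My plan is to build, top-down, successive approximations $X_\ell$ to the sumsets $Y_\ell + Z_\ell$, where $Y_\ell := \{\lfloor y/2^\ell\rfloor : y\in Y\}$ and $Z_\ell := \{\lfloor z/2^\ell\rfloor : z\in Z\}$, for $\ell$ running from $L := \lceil \log_2(U/k)\rceil = \Order(\log(1/\gamma))$ down to $0$, and to return $X_0$ at the end. Since $Y_L, Z_L \subseteq [\Order(k)]$, I would obtain $X_L$ as the union of $s := \Order(\log(1/(\gamma\delta)))$ independent runs of Indyk's algorithm on $(Y_L, Z_L)$, so that $X_L \subseteq Y_L + Z_L$ and every element is missed with probability at most $(1/100)^s$, all in time $\Order(sk)$.

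\paragraph{Recursive filter.} For $\ell < L$, the lifting identity $y_\ell + z_\ell = 2(y_{\ell+1}+z_{\ell+1}) + \alpha + \beta$ with $\alpha,\beta\in\{0,1\}$ ensures that every $i \in Y_\ell + Z_\ell$ whose lift lies in $X_{\ell+1}$ appears in the candidate set $\widetilde{X}_\ell := 2 X_{\ell+1} + \{0,1,2\}$, of size at most $3|X_{\ell+1}|$. To prune $\widetilde{X}_\ell$ in $\Order(sk)$ time rather than via a full Boolean convolution, I would draw a random linear hash $h_\ell:[U/2^\ell]\to[m]$ with $m = Bk$ for a large constant $B$, run Indyk's $s$ times on $(h_\ell(Y_\ell), h_\ell(Z_\ell))$, union the outputs into $\mathcal{O}_\ell$, and define $X_\ell := \{i\in\widetilde{X}_\ell : (h_\ell(i)+h_\ell(0)+\epsilon)\bmod m \in \mathcal{O}_\ell \text{ for some } \epsilon\in\{-p,0,p\}\}$, where the three offsets account for the almost-affinity of $h_\ell$ (\autoref{lem:linear-hashing-basics}). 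Summing the $\Order(sk)$ per-level cost over $L+1 = \Order(\log(1/\gamma))$ levels gives the target $\Order(k\log(1/\gamma)\log(1/(\gamma\delta)))$ running time.

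\paragraph{Analysis sketch and main obstacle.} Let $E_\ell := |(Y_\ell + Z_\ell)\setminus X_\ell|$. Missed elements split into those outside $\widetilde{X}_\ell$, bounded by $3 E_{\ell+1}$ because the lift $i\mapsto s$ with $i = 2s + \{0,1,2\}$ is at most 3-to-1, and those inside $\widetilde{X}_\ell$ killed by the filter, whose expected count is at most $k(1/100)^s$ since all $s$ Indyk's runs on the shared hashed instance must fail. Unrolling $\Ex(E_\ell) \le 3\Ex(E_{\ell+1}) + k(1/100)^s$ yields $\Ex(E_0) = \Order(3^L k(1/100)^s)$, and choosing $s = \Order(\log(1/(\gamma\delta)))$ (which absorbs the $3^L = \poly(1/\gamma)$ factor) together with Markov gives $E_0 \le \gamma k$ with probability $\ge 1-\delta/2$. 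For the size, almost-affinity forces $h_\ell(Y_\ell) + h_\ell(Z_\ell) \bmod m \subseteq h_\ell(Y_\ell + Z_\ell) + h_\ell(0) + \{-p,0,p\} \pmod m$, which has at most $3k$ residues mod $m$; hence the per-element false-positive probability of the filter is at most $9/B$, so inductively $\Ex(|X_\ell|) = \Order(k)$, with concentration via Chebyshev using pairwise independence of $h_\ell$ (or by restarting the level on size overflow). The main obstacle is exactly this interplay between the geometric inherited-error blowup $3^L = \poly(1/\gamma)$ and the Indyk boosting $(1/100)^s$: it forces $s = \Omega(\log(1/\gamma))$, which is what produces the product $\log(1/\gamma)\log(1/(\gamma\delta))$ instead of a single logarithmic factor, and it requires $B$ to be chosen large enough that the false-positive constant is strictly below $1/3$ so the $\Order(k)$ size bound survives all $L$ levels.
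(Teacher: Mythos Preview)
Your overall architecture matches the paper's: both descend through $L = \Order(\log(1/\gamma))$ levels, form the candidate set $2X_{\ell+1}+\{0,1,2\}$, and filter it by hashing with a random linear $h_\ell$ and running Indyk's algorithm on $(h_\ell(Y_\ell),h_\ell(Z_\ell))$. Your error recursion $\Ex(E_\ell)\le 3\,\Ex(E_{\ell+1})+k(1/100)^s$ is correct and gives the right false-negative bound after choosing $s=\Theta(\log(1/(\gamma\delta)))$.

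The genuine gap is in the \emph{size control}, and it stems from the one design choice where you diverge from the paper: you use a \emph{single} hash $h_\ell$ per level and take the \emph{union} of $s$ Indyk runs, whereas the paper draws a \emph{fresh} hash in each of the $R$ rounds and uses \emph{majority voting}. With your scheme the event ``non-witness $x$ passes the filter'' depends only on $h_\ell$ (since Indyk never creates false positives), so its probability is a fixed constant $\Order(1/B)$ that does \emph{not} shrink with $s$. You then need concentration to keep $|X_\ell|=\Order(k)$, but neither of your two suggestions quite works:
\begin{itemize}
\item \textbf{Chebyshev via pairwise independence of $h_\ell$} is not justified: the indicator that $x$ passes involves $h_\ell(x)$ \emph{and} $h_\ell(w)$ for all witnesses $w\in Y_\ell+Z_\ell$, so the pass-indicators for two different non-witnesses $x,x'$ are correlated through the shared $h_\ell(w)$'s. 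Linear hashing is only $2$-wise independent, which is not enough to bound these covariances.
\item \textbf{Restarting on overflow} is correct but costs too much. Each level succeeds with only constant probability, so across $L$ levels the total number of trials is $\Order(L+\log(1/\delta))$ with probability $1-\delta$, and each trial costs $\Order(sk)$. This yields $\Order\!\big(k\,(\log\tfrac1{\gamma\delta})^2\big)$, which exceeds the target $\Order\!\big(k\log\tfrac1\gamma\log\tfrac1{\gamma\delta}\big)$ whenever $\delta<\gamma$.
\end{itemize}

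The paper avoids this by making the $R=\Theta(\log\tfrac1{\gamma\delta})$ rounds \emph{independent} (fresh $h$ each round) and keeping only elements with $\ge 3R/4$ votes. Then each non-witness has voting probability $\le 1/2$ in every round independently, so by Chernoff it survives with probability $2^{-\Omega(R)}\le \delta/(12L)$; Markov then gives $|X_\ell|\le k+\tfrac12|X_{\ell+1}|\le 2k$ at every level with probability $1-\delta/2$, with no restarts. Switching your filter from ``single hash, union'' to ``fresh hash per round, majority vote'' closes the gap and recovers the claimed running time.
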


A key ingredient to the algorithm is the following routine to approximately compute sumsets, which we shall refer to as Indyk's algorithm.

\begin{theorem}[Randomized Boolean Convolution~{{\cite{Indyk98}}}] \label{thm:indyk}
There exists an algorithm which takes as input two sets $Y, Z \subseteq [U]$, and in time $\Order(U)$ outputs a set $\mathcal O \subseteq Y + Z$, such that for all $x \in Y + Z$ we have $\Pr(x \in \mathcal O) \geq \frac{99}{100}$.
\end{theorem}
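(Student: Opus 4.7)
The plan is to construct successive approximations $X_{\ell^*}, X_{\ell^*-1}, \ldots, X_0 = X$ of the scale-shifted sumsets $Y_\ell + Z_\ell$, where
\[
    Y_\ell = \{\lfloor y/2^\ell \rfloor : y \in Y\}, \qquad Z_\ell = \{\lfloor z/2^\ell \rfloor : z \in Z\},
\]
and $\ell^* = \lceil \log(U/k) \rceil = \Order(\log(1/\gamma))$. At the top level we have $Y_{\ell^*}, Z_{\ell^*} \subseteq [2k]$, while at the bottom $Y_0 + Z_0 = Y + Z$ is exactly what we wish to approximate. The crucial structural identity, obtained by writing each summand as $a = 2\lfloor a/2\rfloor + (a \bmod 2)$, is
\[
    Y_\ell + Z_\ell \;\subseteq\; 2(Y_{\ell+1} + Z_{\ell+1}) + \{0,1,2\}.
\]
Given an approximation $X_{\ell+1}$ of $Y_{\ell+1} + Z_{\ell+1}$, the set $X_\ell^0 := 2 X_{\ell+1} + \{0,1,2\}$ therefore contains most of $Y_\ell + Z_\ell$, at the price of at most tripling the size.

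At the base level, compute $X_{\ell^*}$ by running \autoref{thm:indyk} on $Y_{\ell^*}, Z_{\ell^*}$ a total of $T_{\ell^*}$ times independently and taking the union; since the universe is $[2k]$, this costs $\Order(T_{\ell^*}\, k)$. For the inductive step from $\ell+1$ to $\ell$, filter $X_\ell^0$ as follows: draw a random linear hash $h_\ell : [U/2^\ell] \to [m]$ with $m = \Theta(k)$, run $T_\ell$ rounds of \autoref{thm:indyk} on the images $h_\ell(Y_\ell)$, $h_\ell(Z_\ell)$ (which live in $[m]$), and take the union $\mathcal B_\ell$. A candidate $v \in X_\ell^0$ is kept if some almost-affine shift $h_\ell(v) + h_\ell(0) + o$ with $o \in \{-p, 0, p\}$ lies in $\mathcal B_\ell$, as allowed by the almost-affinity part of \autoref{lem:linear-hashing-basics}. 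Hashing the inputs, forming $X_\ell^0$, and filtering each take $\Order(k)$, so the per-level cost is $\Order(T_\ell\, k)$.

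The analysis has two components. First, let $E_\ell = |(Y_\ell + Z_\ell) \setminus X_\ell|$. The structural inclusion yields the recursion $E_\ell \le 3 E_{\ell+1} + F_\ell$, where $F_\ell$ counts elements of $(Y_\ell + Z_\ell) \cap X_\ell^0$ erroneously filtered out by Indyk failures. Unrolling gives $E_0 \le 3^{\ell^*} E_{\ell^*} + \sum_{\ell < \ell^*} 3^\ell F_\ell$, and $\Ex(F_\ell) \le k \cdot (1/100)^{T_\ell}$ (similarly for $E_{\ell^*}$). Choosing $T_\ell = \Theta(\ell + \log(\ell^*/(\gamma\delta)))$ drives the right-hand side below $\gamma k$ in expectation, and Markov then yields $E_0 \le \gamma k$ with probability $\ge 1 - \delta/2$. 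Second, by the pairwise-near-independence of $h_\ell$ from \autoref{lem:linear-hashing-basics}, any $v \notin Y_\ell + Z_\ell$ passes the filter with probability $\Order(|\mathcal B_\ell|/m) = \Order(k/m)$ after a union bound over the three offsets. Hence the expected number of spurious candidates surviving at level $\ell$ is $\Order(|X_{\ell+1}|\, k/m)$, and for a sufficiently large constant in $m = \Theta(k)$ the invariant $|X_\ell| \le 2k$ is preserved at every level with high probability. Summing $T_\ell$ over $\ell$, the total time is $\Order\!\bigl(k(\log^2(1/\gamma) + \log(1/\gamma)\log(1/\delta))\bigr) = \Order(k \log(1/\gamma) \log(1/(\gamma\delta)))$.

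The main obstacle is the factor-of-$3$ blow-up per level in the false-negative recursion, which forces $T_\ell$ to grow linearly with $\ell$ at the deepest levels rather than remain constant. The resulting sum $\sum_\ell T_\ell$ nevertheless fits inside the target budget precisely because $\log(1/(\gamma\delta)) \ge \log(1/\gamma)$ and there are only $\ell^* = \Order(\log(1/\gamma))$ levels, which absorbs the quadratic term $\log^2(1/\gamma)$. A secondary technical nuisance is handling the almost-affinity of linear hashing rather than true affinity, which contributes a harmless factor of three via the offsets in $\{-p, 0, p\}$ and similarly inflates $|\mathcal B_\ell|$, but does not change the asymptotics.
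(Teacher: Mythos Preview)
Your proposal does not prove the stated theorem. \autoref{thm:indyk} is Indyk's linear-time randomized Boolean convolution algorithm, which the paper \emph{cites} from~\cite{Indyk98} without proof and then uses as a black box. What you have written is instead a proof sketch of \autoref{lem:tiny-univ-approx-supp} (the \TinyUnivApproxSupp{} lemma): you invoke \autoref{thm:indyk} repeatedly as a subroutine, you work with the scale-shifted sets $Y_\ell, Z_\ell$, the candidate set $2X_{\ell+1}+\{0,1,2\}$, and the linear-hash filtering step --- all of which belong to \autoref{alg:approx-supp}, not to Indyk's result. So as a proof of the displayed statement, this is simply off-target.

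If one reads your proposal as an alternative argument for \autoref{lem:tiny-univ-approx-supp}, it is close in spirit to the paper's proof but differs in one structurally important way, and that difference creates a real gap. The paper, in the inner loop of \autoref{alg:approx-supp}, draws a \emph{fresh} linear hash in each of the $R$ rounds and uses a \emph{majority vote} (threshold $3R/4$); this lets Chernoff bound the survival probability of any fixed non-witness by $2^{-\Omega(R)}$, so Markov keeps $|X_\ell|=\Order(k)$ with probability $1-\delta/(2L)$ per level. You instead draw a \emph{single} hash $h_\ell$ per level and take the \emph{union} of the Indyk outputs. Under a single hash, whether a non-witness survives is decided entirely by one hash-collision event with probability $\Theta(k/m)$; the $T_\ell$ repetitions of Indyk do nothing to reduce this. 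Consequently you only control the number of surviving non-witnesses in expectation, and Markov gives merely constant success probability per level. A union bound over $\ell^\ast=\Theta(\log(1/\gamma))$ levels then fails to deliver the claimed ``with high probability'' size invariant $|X_\ell|\le 2k$ unless you inflate $m$ (and hence the per-level cost) by a factor depending on $\log(1/\gamma)/\delta$, which breaks your time bound. The paper's voting-with-fresh-hashes mechanism is precisely what sidesteps this, and it also removes the need for your level-dependent $T_\ell$ and the $3^\ell$ blow-up bookkeeping.
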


\begin{algorithm}[t]
\caption{$\TinyUnivApproxSupp(Y, Z, U, k)$} \label{alg:approx-supp}
\begin{algorithmic}[1]
\Input{Sets $Y, Z \subseteq [U]$ over a universe $U \leq k / \gamma$ such that $|Y + Z| \leq k$}
\Output{A set $X \subseteq [U]$ of size $\Order(k)$ such that $|(Y + Z) \setminus X| \leq \gamma k$}

\smallskip
\State Let $m = 40k$ and pick a prime $p \geq U$ 
\State Let $L = \ceil{\log(1/\gamma)}$
\State $X_L \gets \{0,1,\dots,\ceil{U/2^\ell}\}$
\For {$\ell \gets L - 1, \dots, 1, 0$}
	\State $Y_\ell \gets Y \bdiv 2^\ell$
	\State $Z_\ell \gets Z \bdiv 2^\ell$
	\State $M \gets 2X_{\ell+1} + \{0, 1, 2\}$ \label{line:expand}
	\RepeatTimes {$R = \Theta(\log(1/\gamma) + \log (1/\delta))$} \label{line:inner-loop}
		\State Randomly pick a linear hash function $h$ with parameters $p$ and $m$
	    	\State $\mathcal{O} \gets \text{output of Indyk's algorithm (\autoref{thm:indyk}) with input $h(Y_\ell), h(Z_\ell)$}$ \label{line:run-indyk}
		\For {$x \in M$}
			\If {$(h(0) + h(x) + o) \bmod m \in (\mathcal O \bmod m)$ for some $o \in \{-p, 0, p\}$}
				\State \label{line:vote} Give a vote to $x$
			\EndIf
		\EndFor
	\EndRepeatTimes
	\State $X_\ell \gets \text{all elements in $M$ that have gathered at least $3R / 4$ votes}$
\EndFor
\vspace*{-.8ex}
\State\Return $X = X_0$
\smallskip

\end{algorithmic}
\end{algorithm}

The algorithm claimed in \autoref{lem:tiny-univ-approx-supp} is given in \autoref{alg:approx-supp}. For the remainder of this section, we will analyze this algorithm in several steps. We shall call the iterations of the outer loop \emph{levels} and call an element $x$ a \emph{witness at level $\ell$} if $x \in Y_\ell + Z_\ell$. Otherwise, we say that $x$ is a non-witness. Fix a level $\ell$ and consider a single iteration of the inner loop (Lines~\ref{line:inner-loop}--\ref{line:vote}). The \emph{voting probability of~$x$ at level $\ell$} is the probability that $x$ is given a vote in \autoref{line:vote}. Recall that in every such iteration, we pick a random linear hash function $h : [U] \to [m]$ using fresh randomness. The following lemmas prove that witnesses have large voting probability and non-witnesses have small voting probability.

\begin{lemma}[Witnesses have Large Voting Probability] \label{lem:voting-prob-witness}
At any level $\ell$, the voting probability of a witness $x$ is at least $\frac{99}{100}$.
\end{lemma}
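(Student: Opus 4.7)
The plan is to unpack what it means for $x$ to be a witness and trace through the three ingredients that the algorithm combines: Indyk's randomized Boolean convolution (Theorem~\ref{thm:indyk}), the almost-affinity of linear hashing (Lemma~\ref{lem:linear-hashing-basics}), and the explicit three-way check on $o \in \{-p, 0, p\}$ in the voting step.

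Since $x \in Y_\ell + Z_\ell$, we can fix a certificate $(y, z) \in Y_\ell \times Z_\ell$ with $y + z = x$. The image values $h(y) \in h(Y_\ell)$ and $h(z) \in h(Z_\ell)$ then produce an integer $v := h(y) + h(z) \in h(Y_\ell) + h(Z_\ell)$ (the ordinary sumset, with values in $\{0, 1, \dots, 2m-2\}$). Indyk's guarantee, applied in Line~\ref{line:run-indyk} with input $h(Y_\ell), h(Z_\ell)$, ensures that the specific value $v$ lies in the output set $\mathcal{O}$ with probability at least $\tfrac{99}{100}$.

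It remains to connect $v \in \mathcal{O}$ to the voting condition $(h(0) + h(x) + o) \bmod m \in (\mathcal{O} \bmod m)$ for some $o \in \{-p, 0, p\}$. This is exactly what almost-affinity delivers: by Lemma~\ref{lem:linear-hashing-basics}, there exists $o \in \{-p, 0, p\}$ such that
\begin{equation*}
    h(y) + h(z) \;\equiv\; h(0) + h(y + z) + o \;=\; h(0) + h(x) + o \pmod m.
\end{equation*}
Reducing modulo $m$, this gives $v \bmod m = (h(0) + h(x) + o) \bmod m$ for that particular offset $o$. Conditioned on the high-probability event $v \in \mathcal{O}$, we conclude that $(h(0) + h(x) + o) \bmod m \in (\mathcal{O} \bmod m)$, so $x$ receives a vote in Line~\ref{line:vote}. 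Hence the voting probability of $x$ is at least $\tfrac{99}{100}$.

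There is essentially no obstacle here; the only subtlety is being careful that the voting condition is written in terms of the modular image $\mathcal{O} \bmod m$ while Indyk's guarantee is about the non-modular sumset — this is precisely why the algorithm quantifies over all three offsets $o \in \{-p, 0, p\}$, exactly matching the set of offsets permitted by the almost-affinity lemma.
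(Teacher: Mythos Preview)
Your proof is correct and follows essentially the same approach as the paper's: fix a certificate $(y,z)$ for the witness $x$, use almost-affinity to match $h(y)+h(z)$ to $(h(0)+h(x)+o)\bmod m$ for some $o\in\{-p,0,p\}$, and then invoke Indyk's $\tfrac{99}{100}$ guarantee on the specific sumset element $h(y)+h(z)$. Your exposition is in fact slightly more careful than the paper's about the distinction between the non-modular sumset output of Indyk's algorithm and the modular voting condition.
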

\begin{proof}
Recall that if $x$ is a witness at level $\ell$, then $x = y + z$ for some $y \in Y_\ell$ and $z \in Z_\ell$. By the almost-affinity of linear hashing (\autoref{lem:linear-hashing-basics}), it holds that $h(y) + h(z) = h(x) + h(0) + o \mod m$ for some offset $o \in \{-p, 0, p\}$. It follows that $(h(x) + h(0) + o) \bmod m$ is an element of the sumset $(h(Y_\ell) + h(Z_\ell)) \bmod m$. However, in order for $x$ to gain a vote, this condition must be true for the set $\mathcal O$ returned by Indyk's algorithm. By the guarantee of \autoref{thm:indyk}, $\mathcal O$~contains every element of $h(Y_\ell) + h(Z_\ell)$ with probability at least~$\frac{99}{100}$, which yields the claim. 
\end{proof}

\begin{lemma}[Non-Witnesses have Small Voting Probability] \label{lem:voting-prob-nonwitness}
At any level $\ell$, the voting probability of a non-witness $x$ is at most~$1/2$.
\end{lemma}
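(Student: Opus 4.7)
The plan is a union bound. A vote for $x$ in the inner loop requires $(h(0)+h(x)+o) \bmod m \in (\mathcal{O} \bmod m)$ for some $o \in \{-p,0,p\}$, and since $\mathcal{O} \subseteq h(Y_\ell)+h(Z_\ell)$, the voting event is contained in the event
\[
  (h(0)+h(x)+o) \bmod m \;\in\; (h(Y_\ell)+h(Z_\ell)) \bmod m \quad\text{for some } o.
\]
Unfolding the inner sumset via the almost-affinity clause of \autoref{lem:linear-hashing-basics}, any element of $h(Y_\ell)+h(Z_\ell) \bmod m$ can be written as $h(w) + h(0) + o' \bmod m$ for some $w \in Y_\ell + Z_\ell$ and some $o' \in \{-p,0,p\}$. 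Substituting, the voting event is contained in the event that
\[
  h(x) \equiv h(w) + (o'-o) \pmod m \quad\text{for some } w \in Y_\ell+Z_\ell,\ o,o' \in \{-p,0,p\},
\]
which amounts to at most five distinct residue equations $h(x) \equiv h(w)+c \pmod m$ with $c \in \{-2p,-p,0,p,2p\}$ for each candidate $w$.

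Next I would bound $|Y_\ell + Z_\ell| \leq 2k$ using the identity $\lfloor y/2^\ell \rfloor + \lfloor z/2^\ell \rfloor \in \{\lfloor (y+z)/2^\ell \rfloor,\, \lfloor (y+z)/2^\ell \rfloor - 1\}$, which shows that every $y'+z' \in Y_\ell + Z_\ell$ differs by at most $1$ from an element of $(Y+Z) \bdiv 2^\ell$, so $|Y_\ell + Z_\ell| \leq 2|Y+Z| \leq 2k$. Crucially, since $x$ is a non-witness we have $x \notin Y_\ell+Z_\ell$, so every $w$ in the union bound is distinct from $x$, and the universality clause of \autoref{lem:linear-hashing-basics} yields $\Pr(h(x) \equiv h(w)+c \pmod m) \leq 1/m + 3/p$ for each such $w$ and each constant $c$.

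A union bound over $\leq 2k$ values of $w$ and the $5$ values of $c$ then bounds the voting probability by $5 \cdot 2k \cdot (1/m + 3/p) = 10k/m + 30k/p$. With $m = 40k$ the first term equals $1/4$, and provided $p$ is chosen large enough compared to $k$ (which is the case since $p \geq U$, and we may enlarge $p$ at negligible cost), the second term is also at most $1/4$, giving a voting probability bounded by $1/2$ as claimed. The only real bookkeeping subtlety is the double use of almost-affinity---once implicitly to enumerate elements of the sumset $h(Y_\ell)+h(Z_\ell)$, and once through the offset $o$ in the voting rule---which must combine into the five-element set $\{-2p,-p,0,p,2p\}$ rather than a larger collection; beyond that, the argument is a routine application of tools already proved.
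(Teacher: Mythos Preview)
Your argument follows essentially the same route as the paper's: use that Indyk's algorithm never produces false positives, unfold the sumset $h(Y_\ell)+h(Z_\ell)$ via almost-affinity to get the offset set $\{-2p,-p,0,p,2p\}$, and apply universality together with a union bound over witnesses. The only substantive difference is the bound on $|Y_\ell+Z_\ell|$: the paper asserts $|Y_\ell+Z_\ell|\le k$ directly (giving $20k/m = 1/2$ on the nose with $m=40k$), whereas you prove the safer bound $|Y_\ell+Z_\ell|\le 2k$ via the floor identity. Your extra factor of $2$ forces you to keep the $3/p$ term separate rather than absorb it into $4/m$, and then to enlarge $p$; this is harmless but does slightly modify the algorithm. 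Either way the argument is the same union bound, and your version is correct.
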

\begin{proof}
Given the fact that Indyk's algorithm never returns a false positive, it suffices to prove that none of the three values $(h(0) + h(x) + \{-p, 0, p\}) \bmod m$ is contained in the sumset $(h(Y_\ell) + h(Z_\ell)) \bmod m$, with sufficiently large probability. By the almost-affinity of $h$, we have
\begin{equation*}
	h(Y_\ell) + h(Z_\ell) \bmod m \subseteq (h(0) + h(Y_\ell + Z_\ell) + \{-p, 0, p\}) \bmod m.
\end{equation*}
So fix some offsets $o, o' \in \{-p, 0, p\}$ and some witness $x' \in Y_\ell + Z_\ell$. As $x$ is not a witness, we must have $x \neq x'$. It suffices to bound the following bound the probability:
\begin{equation*}
	\Pr(h(0) + h(x) + o = h(0) + h(x') + o' \bmod m) = \Pr(h(x) = (h(x') + o' - o) \bmod m) \leq{} \frac 4m,
\end{equation*}
where in the last step we applied the universality of $h$ (\autoref{lem:linear-hashing-basics}). By a union bound over the five possible values of $o' - o$ and over all witnesses $x'$, we conclude that the voting probability of $x$ is at most $20 |Y_\ell + Z_\ell| / m \leq 20 k / m \leq 1/2$.
\end{proof}

We are now ready to prove \autoref{lem:tiny-univ-approx-supp}. We shall do it in two steps: First we bound the running time and the number of false positives, i.e.\ $|X \setminus(Y+Z)|$, and second the number of false negatives, i.e.\ $|(Y+Z)\setminus X|$.

\begin{lemma}[Running Time of {{\autoref{alg:approx-supp}}}] \label{lem:approx-supp-running-time}
With probability $1 - \delta/2$, \autoref{alg:approx-supp} outputs a set~$X$ of size $\Order(k)$, and its running time is $\Order( k \log(1/\gamma) \log(\frac1{\gamma\delta}) )$.
\end{lemma}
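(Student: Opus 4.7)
My plan is to establish the size bound $|X_\ell| \leq Ck$ by downward induction on $\ell$ for an absolute constant $C$, after which the claimed running time follows routinely. The base case is immediate: since $U \leq k/\gamma$ and $2^L \geq 1/\gamma$, we have $|X_L| \leq \lceil U/2^L\rceil + 1 \leq k + 2$. For the inductive step, assume $|X_{\ell+1}| \leq Ck$, so that the candidate set $M = 2X_{\ell+1} + \{0,1,2\}$ has $|M| \leq 3Ck$. I will split $|X_\ell| = W_\ell + F_\ell$ into the number of witnesses and of false positives that are placed in $X_\ell$, and bound each piece separately.

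For $W_\ell$ I would use the elementary identity $\lfloor y/2^\ell\rfloor + \lfloor z/2^\ell\rfloor \in \{\lfloor(y+z)/2^\ell\rfloor,\,\lfloor(y+z)/2^\ell\rfloor - 1\}$, which gives $|Y_\ell + Z_\ell| \leq 2|Y+Z| \leq 2k$, so $W_\ell \leq 2k$ deterministically. For $F_\ell$ I would invoke \autoref{lem:voting-prob-nonwitness}: for each non-witness $x \in M$, the $R$ rounds of voting use independent randomness and each round contributes a vote with probability at most $1/2$, so a standard Chernoff bound (the threshold $3R/4$ is bounded away from the mean $R/2$) yields $\Pr(x \in X_\ell) \leq e^{-cR}$ for an absolute constant $c > 0$. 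By linearity of expectation, $\Ex[F_\ell \mid X_{\ell+1}] \leq 3Ck\cdot e^{-cR}$; by choosing the hidden constant in $R = \Theta(\log(1/\gamma) + \log(1/\delta))$ large enough, this expectation is at most $k\delta/(2L)$. Markov's inequality then gives $\Pr(F_\ell > k \mid |X_{\ell+1}| \leq Ck) \leq \delta/(2L)$, so setting $C = 3$ and union-bounding over the $L = O(\log(1/\gamma))$ levels preserves the inductive hypothesis throughout, with total failure probability at most $\delta/2$.

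The running time is then immediate: at each level, every iteration of the inner loop runs Indyk's algorithm on subsets of $[m]$ in time $O(m) = O(k)$ and then scans $M$ to collect votes in time $O(|M|) = O(k)$, giving $O(k)$ per iteration. Multiplying by $R$ iterations and $L$ levels yields $O(LRk) = O(k\log(1/\gamma)(\log(1/\gamma) + \log(1/\delta))) = O(k\log(1/\gamma)\log(1/(\gamma\delta)))$, as claimed.

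The one subtle point, which I expect to be the main obstacle, is that within a single iteration the vote indicators for distinct elements $x, y \in M$ are \emph{not} independent: they are both determined by the same linear hash function $h$ and the same (random) output $\mathcal{O}$ of Indyk's algorithm. This precludes applying a Chernoff bound to $F_\ell$ directly. My workaround is to use independence only across the $R$ rounds, for each fixed $x$, in the Chernoff step, and then to swallow the resulting weakening (Markov instead of Chernoff on the sum) by enlarging the constant in $R = \Theta(\log(1/\gamma) + \log(1/\delta))$; this is precisely where the choice of $R$ in Algorithm~\ref{alg:approx-supp} is pinned down.
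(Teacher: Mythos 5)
Your proposal is correct and follows essentially the same route as the paper's proof: a per-element Chernoff bound across the $R$ independent rounds (using \autoref{lem:voting-prob-nonwitness}), Markov's inequality on the number of surviving non-witnesses, a union bound over the $L$ levels, and an induction giving $|X_\ell| = O(k)$, followed by the same $O(kLR)$ running-time accounting. The only differences are cosmetic — you carry an explicit constant $C=3$ and bound the witness count by $2k$ via the floor identity (the paper folds this into the recursion $|X_\ell| \le k + \tfrac12|X_{\ell+1}|$), and your remark about the non-independence of votes across distinct elements within one round correctly identifies why Markov, not Chernoff, is applied to the sum.
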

\begin{proof}
Fix any level $\ell$. By \autoref{lem:voting-prob-nonwitness} we know that the voting probability of any non-witness~$x$ is at most $1/2$. Thus, by an application of Chernoff's bound, the probability that $x$ receives more than $3R/4$ votes over all $R = \Omega(\log L + \log(1/\delta))$ rounds is at most $2^{-\Omega(R)} \leq \delta / (12L)$ by appropriately choosing the constant in the definition of $R$ (in the upcoming \autoref{lem:approx-supp-correctness} we will see why $R$ is even slightly larger). By Markov's inequality, we obtain that with probability $1 - \delta / (2L)$ the number of non-witness elements in $M$ which will be inserted in $X_\ell$ is at most $|M| / 6 \leq 3 |X_{\ell+1}| / 6$. By a union bound over all levels, with probability $1 - \delta / 2$ we get that
\begin{equation*}
	|X_\ell| \leq k + \frac12 |X_{\ell+1}|,
\end{equation*}
for all $\ell \in [L]$. As initially $|X_L| \leq k$ it follows by induction that $|X_\ell| \leq (\sum_{i=0}^\infty 1/2^i) k = 2k$. In particular we have that $|X| = |X_0| = \Order(k)$, as claimed.

The total running time of the algorithm can be split into two parts:
\begin{enumerate*}[font=, label=(\roman*)]
\item the time spent on running Indyk's algorithm in \autoref{line:run-indyk}, and
\item the time needed to iterate over all elements $x \in M$ across all levels and assign them votes (\autoref{line:vote}).
\end{enumerate*}
The former is $\Order(m L R) = \Order(k L R)$ (recall that Indyk's algorithm runs for sets over the universe~$[m]$) and also the latter is
\begin{equation*}
	\sum_{\ell \in [L]} \Order(|X_\ell| R) = \sum_{\ell \in [L]} \Order(k R) = \Order(k L R).
\end{equation*}
Together, we obtain the desired bound on the running time $\Order(k L R) = \Order(k \log(1/\gamma) \log(\frac1{\gamma\delta}))$.
\end{proof}

\begin{lemma}[Correctness of \autoref{alg:approx-supp}] \label{lem:approx-supp-correctness}
With probability $1 - \delta / 2$, \autoref{alg:approx-supp} correctly outputs a set $X$ with $|(Y+Z)\setminus X| \leq \gamma k$.
\end{lemma}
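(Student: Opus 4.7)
The plan is to track, across levels $\ell = L, L-1, \dots, 0$, the number of missing witnesses $K_\ell := |(Y_\ell + Z_\ell) \setminus X_\ell|$, and show that $K_0 \leq \gamma k$ with probability $1-\delta/2$. The base case $K_L = 0$ follows from the initialization: since $U \leq k/\gamma$ and $L = \lceil \log(1/\gamma) \rceil$, we have $Y_L + Z_L \subseteq [0, 2U/2^L] \subseteq X_L$.

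The core of the argument is the one-level recurrence
\begin{equation*}
  K_\ell \;\leq\; 3 K_{\ell+1} + V_\ell,
\end{equation*}
where $V_\ell$ counts the witnesses in $(Y_\ell + Z_\ell) \cap M$ that collect fewer than $3R/4$ votes in the inner loop. The factor $3$ comes from a simple bit-shift identity: every witness $w_\ell = \lfloor y/2^\ell \rfloor + \lfloor z/2^\ell \rfloor$ with $(y,z) \in Y \times Z$ can be written as $w_\ell = 2 w_{\ell+1} + \epsilon$ for a parent $w_{\ell+1} := \lfloor y/2^{\ell+1}\rfloor + \lfloor z/2^{\ell+1}\rfloor \in Y_{\ell+1} + Z_{\ell+1}$ and some $\epsilon \in \{0,1,2\}$. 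Hence $w_\ell \notin M = 2X_{\ell+1} + \{0,1,2\}$ forces the parent to be missing from $X_{\ell+1}$, and each missing parent has at most three affected children at level $\ell$. Witnesses in $M$ can only be missed through voting, which is exactly what $V_\ell$ accounts for.

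To bound $V_\ell$, I would combine \autoref{lem:voting-prob-witness} (voting probability $\geq 99/100$ per round) with a standard Chernoff bound, giving the probability that a fixed witness in $M$ collects fewer than $3R/4$ votes as at most $e^{-\Omega(R)}$. Since $|Y_\ell + Z_\ell| \leq 2|Y+Z| \leq 2k$ (using the carry argument $\lfloor y/2^\ell\rfloor + \lfloor z/2^\ell\rfloor \in \{\lfloor (y+z)/2^\ell\rfloor, \lfloor (y+z)/2^\ell\rfloor - 1\}$), linearity of expectation yields $\Ex[V_\ell] \leq 2k \cdot e^{-\Omega(R)}$. With $R = \Theta(\log(1/\gamma) + \log(1/\delta))$ and a sufficiently large hidden constant, Markov's inequality followed by a union bound over the $L = O(\log(1/\gamma))$ levels gives $V_\ell \leq \gamma k / (L \cdot 3^\ell)$ for all $\ell$ simultaneously, with probability at least $1 - \delta/2$.

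Finally, unrolling the recurrence from $K_L = 0$ yields $K_0 \leq \sum_{\ell=0}^{L-1} 3^\ell V_\ell \leq \gamma k$ under this good event, which is exactly the claim. The main obstacle is the $3^L = \poly(1/\gamma)$ blowup from the propagation factor: it forces $V_\ell$ to shrink geometrically in $\ell$, which is why $R$ must be logarithmic in $1/\gamma$ as well as in $1/\delta$. Once the constant in $R$ is calibrated to absorb $3^L$, everything else is routine bookkeeping.
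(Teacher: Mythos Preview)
Your argument is correct, but the paper takes a somewhat cleaner route. Instead of tracking the aggregate quantity $K_\ell = |(Y_\ell + Z_\ell) \setminus X_\ell|$ and propagating a recurrence with a factor-$3$ fan-out, the paper fixes a single pair $(y,z) \in Y \times Z$ and follows the deterministic chain $x_\ell = \lfloor y/2^\ell\rfloor + \lfloor z/2^\ell\rfloor$ down from level $L$ to level $0$. The same bit-shift identity you use shows $x_\ell \in 2\{x_{\ell+1}\} + \{0,1,2\}$, so conditional on $x_{\ell+1} \in X_{\ell+1}$ we have $x_\ell \in M$; then $x_\ell$ survives into $X_\ell$ unless it fails the vote, which by \autoref{lem:voting-prob-witness} and Chernoff happens with probability at most $2^{-\Omega(R)} \le \delta\gamma/(2L)$. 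A union bound over the $L$ levels gives $\Pr(y+z \notin X) \le \delta\gamma/2$, and a single application of Markov's inequality over the $\le k$ elements of $Y+Z$ finishes.

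Because the per-element chain never multiplies errors across levels, the paper entirely sidesteps the $3^L = \poly(1/\gamma)$ propagation factor that forces you to make $V_\ell$ shrink geometrically and to calibrate the constant in $R$ to absorb it. Both arguments yield the same asymptotic requirement $R = \Theta(\log(1/\gamma) + \log(1/\delta))$ and are equally valid; the paper's is just more direct, while yours gives a slightly more quantitative picture of how missing witnesses accumulate level by level.
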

\begin{proof}
Fix any $y \in Y, z \in Z$ and define $y_\ell = \floor{\frac y{2^\ell}}$, $z_\ell = \floor{\frac z{2^\ell}}$ and $x_\ell = y_\ell + z_\ell$. The first step is to prove that $x_\ell \in 2\{x_{\ell+1}\} + \{0, 1, 2\}$. Indeed, from the basic inequalities $2 \floor a \leq \floor{2a} \leq 2 \floor a + 1$, for all rationals~$a$, it follows directly that
\begin{align*}
	x_\ell - 2x_{\ell+1}
	={} \!\floor*{\frac y{2^\ell}} + \floor*{\frac z{2^\ell}} - 2 \floor*{\frac y{2^{\ell+1}}} - 2 \floor*{\frac z{2^{\ell+1}}}
	\leq 2,
\end{align*}
and in the same way $x_\ell - 2x_{\ell+1} \geq 0$.

Coming back to the algorithm, we claim that with probability $1 - \delta\gamma / 2$, $x = y + z$ will participate in $X$. It suffices to show that with the claimed probability, for all levels $\ell$ the element $x_\ell$ belongs to~$X_\ell$. Note that trivially $x_L \in X_L$. Fix a specific level $\ell$. Conditioning on $x_{\ell+1} \in X_{\ell+1}$, it will be the case that $x_\ell$ is inserted into $M = 2X_\ell + \{0, 1, 2\}$ in \autoref{line:expand}, by the fact that $x_\ell \in 2\{x_{\ell+1}\} + \{0, 1, 2\}$. Moreover, recall that $x_\ell$ is a witness at level $\ell$ and thus, by \autoref{lem:voting-prob-witness}, its voting probability is at least \raisebox{0pt}[0pt][0pt]{$\frac{99}{100}$}. Therefore it receives more than $3R/4$ votes and is inserted into $X_\ell$ with probability at least $1 - 2^{-\Omega(R)} \geq 1 - \delta\gamma / (2L)$. Taking a union bound over all levels we obtain that $x$ is contained in $X$ with probability $1 - \delta\gamma / 2$, and hence we can apply Markov's inequality to conclude that with probability $1 - \delta / 2$ it is the case that $|(Y+Z)\setminus X| \leq \gamma k$.
\end{proof}

This finishes the proof of \autoref{lem:tiny-univ-approx-supp}. Putting together the results from the previous section (\autoref{lem:tiny-univ-set-query-randomized}) and this section (\autoref{lem:tiny-univ-approx-supp} with $\gamma' = \order(\gamma^2)$), we have established an efficient algorithm to approximate convolutions in a tiny universe:

\begin{lemma} \label{lem:tiny-univ-approx-sparse-conv}
Let $\log k \leq 1/\gamma \leq \poly(k)$ and let $1/\delta \leq \poly(k)$. There is an algorithm for the \TinyUnivApproxSparseConv{} problem running in time
\begin{equation*}
	\Order(D_{1/3}(k) + k \log (1/ \gamma) \log(\tfrac1{\gamma\delta}) + \polylog(k, \norm A_\infty, \norm B_\infty)).
\end{equation*}
\end{lemma}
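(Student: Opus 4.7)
The plan is simply to pipeline the two main results proved earlier in this section (and the previous one): first call \autoref{lem:tiny-univ-approx-supp} to compute an approximate support set $X$, then feed $X$ into the set-query algorithm of \autoref{lem:tiny-univ-set-query-randomized}, which completes the job. I would proceed as follows. Given input $A, B$ of length $U \leq k/\gamma^2$ with $\|A \conv B\|_0 \leq k$, set $Y = \supp(A)$ and $Z = \supp(B)$; note that $Y + Z = \supp(A \conv B)$ and in particular $|Y + Z| \leq k$. Choose a parameter $\gamma' = \epsilon \gamma^2$ with $\epsilon$ a sufficiently small constant so that $\gamma' k$ lies below the $o(\gamma^2 k)$ threshold required by the \TinyUnivApproxSetQuery{} problem, and note that the universe bound $U \leq k/\gamma^2 \leq k/\gamma'$ of \autoref{lem:tiny-univ-approx-supp} is satisfied.

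Apply \autoref{lem:tiny-univ-approx-supp} to $Y, Z$ with approximation parameter $\gamma'$ and failure probability $\delta/2$, obtaining with probability $1 - \delta/2$ a set $X$ of size $\Order(k)$ such that $|(Y+Z)\setminus X|\leq \gamma'k = \order(\gamma^2 k)$ in time $\Order(k\log(1/\gamma')\log(\frac{1}{\gamma'\delta}))$. Since $\gamma'= \Theta(\gamma^2)$ we have $\log(1/\gamma') = \Order(\log(1/\gamma))$, so this is $\Order(k\log(1/\gamma)\log(\frac{1}{\gamma\delta}))$. Then invoke \autoref{lem:tiny-univ-set-query-randomized} on $A, B, U, k, X$ with failure probability $\delta/2$ and the same $\gamma$; its preconditions are met by the properties of $X$, and with probability $1 - \delta/2$ it returns $\widetilde C$ with $\|A\conv B - \widetilde C\|_0\leq \gamma k$ in time
\begin{equation*}
  \Order\!\big(D_{1/3}(k) + k\log^2(1/\gamma) + k\log(1/\delta) + \polylog(k, \|A\|_\infty, \|B\|_\infty)\big).
\end{equation*}

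A union bound gives overall failure probability at most $\delta$. For the running time, the two additive contributions combine; since $\log(1/\gamma)\geq 1$ we have $k\log^2(1/\gamma)\leq k\log(1/\gamma)\log(\frac{1}{\gamma\delta})$ and likewise $k\log(1/\delta)\leq k\log(1/\gamma)\log(\frac{1}{\gamma\delta})$, so everything collapses into the desired bound $\Order(D_{1/3}(k) + k\log(1/\gamma)\log(\frac{1}{\gamma\delta}) + \polylog(k, \|A\|_\infty, \|B\|_\infty))$.

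There is essentially no hard step here beyond the two prior lemmas; the only item requiring care is the bookkeeping that links the parameter $\gamma'$ used by \autoref{lem:tiny-univ-approx-supp} with the $\order(\gamma^2 k)$ input requirement of \autoref{lem:tiny-univ-set-query-randomized}, together with verifying that $\log(1/\gamma')$ only costs a constant factor relative to $\log(1/\gamma)$ so that the support-approximation term does not dominate the stated running time.
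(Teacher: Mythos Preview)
Your proposal is correct and follows exactly the paper's approach: the paper's entire ``proof'' is the single sentence ``Putting together the results from the previous section (\autoref{lem:tiny-univ-set-query-randomized}) and this section (\autoref{lem:tiny-univ-approx-supp} with $\gamma' = \order(\gamma^2)$), we have established an efficient algorithm to approximate convolutions in a tiny universe.'' One small quibble: choosing $\gamma' = \epsilon\gamma^2$ for a fixed constant $\epsilon$ gives $\Theta(\gamma^2 k)$ rather than the literally required $o(\gamma^2 k)$; to match the problem specification exactly you should take, e.g., $\gamma' = \gamma^2/\log k$ (which still satisfies $\log(1/\gamma') = \Order(\log(1/\gamma))$ since $1/\gamma \geq \log k$), but this is purely cosmetic and does not affect the argument.
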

% !TEX root = ../paper.tex

\section{Universe Reduction from Small to Tiny} \label{sec:small-to-tiny}

The goal of this section is to prove that approximating convolutions in a small universe (that is, a universe of size $U = \poly(k)$) reduces to approximating convolutions in a tiny universe.

\probsmallunivapproxsparseconv*{}

For the rest of the reduction we will set $\gamma = \delta$.

\begin{lemma} \label{lem:small-to-tiny}
Let $\log k \leq 1 / \delta = 1 / \gamma$. There is an algorithm for the \SmallUnivApproxSparseConv{} problem running in time $\Order(D_{1/3}(k) + k \log^2(1/\delta) + \polylog(k, \norm A_\infty, \norm B_\infty))$.
\end{lemma}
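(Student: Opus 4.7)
The plan is to draw a random linear hash function $h : [U] \to [U']$ with $U' = \Theta(k/\gamma^2)$ and reduce the problem to three calls of \autoref{lem:tiny-univ-approx-sparse-conv}, in the same spirit as the large-to-small reduction outlined in Step~0 of the technical overview. Specifically, I would invoke the tiny-universe algorithm with accuracy parameter $\gamma' = \Theta(\gamma)$ on the hashed pairs $(h(A), h(B))$, $(h(\partial A), h(B))$, and $(h(A), h(\partial B))$ to obtain approximations $\widetilde C_1, \widetilde C_2, \widetilde C_3$ of the corresponding (non-cyclic) convolutions. Cyclic convolution can be reduced to non-cyclic in the nonnegative setting at the cost of at most doubling the sparsity; by almost-affinity (\autoref{lem:linear-hashing-basics}), each support element of $A \conv B$ spreads to at most three positions of $h(A) \conv h(B)$, so $\norm{h(A) \conv h(B)}_0 = O(k)$ and the universe condition $2U' - 1 \leq O(k)/(\gamma')^2$ of \autoref{lem:tiny-univ-approx-sparse-conv} is satisfied.

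For each bucket $b$ with $(\widetilde C_1)_b \neq 0$, I would compute the candidate coordinate $\hat i = ((\widetilde C_2)_b + (\widetilde C_3)_b) / (\widetilde C_1)_b$ using exact integer arithmetic. If $\hat i \in [U]$ and there exists $o \in \{-p, 0, p\}$ with $h(\hat i) + h(0) + o \equiv b \pmod{U'}$, declare $(A \conv B)_{\hat i} \approx (\widetilde C_1)_b$ (summing the up-to-three offset contributions attributable to $\hat i$). By the derivative identity $\partial(A \conv B) = (\partial A) \conv B + A \conv (\partial B)$ together with the almost-affine commutativity of $h$ with convolution, this recovery is exact for every support element $i \in \supp(A \conv B)$ that is \emph{isolated} under $h$---meaning no $i' \in \supp(A \conv B) \setminus \{i\}$ and offsets $o, o'$ satisfy $h(i) + o \equiv h(i') + o' \pmod{U'}$---provided the relevant entries of $\widetilde C_1, \widetilde C_2, \widetilde C_3$ agree with the true convolutions.

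To bound the number of non-isolated elements I would apply the pairwise independence of $h$ (\autoref{lem:linear-hashing-basics}): for fixed $i \neq i'$ and offsets $o, o'$, the probability that $h(i) + o \equiv h(i') + o' \pmod{U'}$ is $O(1/U')$, so the expected number of non-isolated support elements is $O(k^2 / U') = O(\gamma^2 k)$. Markov's inequality then yields at most $\gamma k / 10$ non-isolated elements with constant probability, boostable to $1 - \delta/2$ via $O(\log(1/\delta))$ independent repetitions combined with a voting step analogous to \autoref{alg:approx-supp}. Combining with the $O(\gamma' k)$ errors from each of the three calls to \autoref{lem:tiny-univ-approx-sparse-conv} yields $\norm{A \conv B - \widetilde C}_0 \leq \gamma k$ upon choosing $\gamma'$ to be a sufficiently small constant times $\gamma$. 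The running time is dominated by the three tiny-universe calls, giving the claimed bound $O(D_{1/3}(k) + k \log^2(1/\gamma) + \polylog(k, \norm A_\infty, \norm B_\infty))$; the overhead from $\norm{\partial A}_\infty \leq U \norm A_\infty = \poly(k) \cdot \norm A_\infty$ is absorbed into the $\polylog$ term.

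The main technical obstacle is the almost-affinity of linear hashing: each support element fans out to three offset buckets, inflating the sparsity of the hashed convolutions by a factor of three, forcing a check over three candidate offsets during recovery, and introducing an extra constant factor in the isolation union bound. None of this changes the asymptotics, but it requires careful bookkeeping throughout the reduction (and is exactly the sort of subtlety that, as noted in \autoref{sec:generalconvolution}, makes the general (signed) convolution case genuinely harder).
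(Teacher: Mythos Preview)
Your overall approach is exactly the paper's: hash linearly into a tiny universe of size $\Theta(k/\gamma^2)$, call \autoref{lem:tiny-univ-approx-sparse-conv} on the three pairs $(h(A),h(B))$, $(h(\partial A),h(B))$, $(h(A),h(\partial B))$, and recover isolated support elements via the quotient $\widetilde W_i/\widetilde V_i$, handling the three almost-affine offsets. The correctness argument (isolated elements recovered exactly, errors bounded by non-isolated elements plus the $\gamma' k$ errors from the tiny-universe calls) also matches.

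There is one genuine gap in your running-time analysis. You write that Markov gives at most $\gamma k/10$ non-isolated elements \emph{with constant probability}, and propose to boost this to $1-\delta/2$ by $O(\log(1/\delta))$ independent repetitions with voting. But each repetition uses a fresh hash function $h$, so you must re-run the three tiny-universe calls each time; this multiplies the $D_{1/3}(k)$ term by $\log(1/\delta)$ and breaks the claimed bound. The paper avoids this entirely: since $U' = \Theta(k/\gamma^2)$ and $\gamma = \delta$, the expected number of non-isolated elements is $O(k^2/U') = O(\delta^2 k)$, so Markov's inequality already gives at most $\delta k/8$ non-isolated elements with probability $1 - O(\delta)$ \emph{in a single shot}. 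You actually have all the ingredients for this (you set $U' = \Theta(k/\gamma^2)$ and compute the expectation correctly); you just undersold what Markov buys you. Drop the boosting and apply Markov with threshold $\Theta(\delta k)$ instead of $\gamma k/10$ at constant probability, and your argument goes through with the stated running time.
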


Our goal is to drastically reduce the universe size from $\poly(k)$ to $k / \delta^2$, while being granted to introduce up to $\delta k$ errors in the output. The idea is to use linear hashing to reduce the universe size. Then, to recover a vector in the original universe, we use a trick first applied by Huang~\cite{Huang19}. For a vector $A$, let~$\partial A$ denote the vector with $(\partial A)_i = i A_i$ (there is an analogy to derivatives of polynomials, hence the symbol). In essence we exploit the following familiar identity.

\begin{proposition}[Product Rule] \label{prop:product-rule}
Let $A, B$ be vectors. Then $\partial(A \conv B) = \partial A \conv B + A \conv \partial B$.
\end{proposition}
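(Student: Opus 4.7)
The plan is to verify the identity coordinate-wise by direct computation, since both sides are vectors and the operator $\partial$ together with $\star$ are defined entry-wise. The key observation to exploit is that the scalar factor $i$ appearing in $(\partial(A \star B))_i = i (A\star B)_i$ can be split as $i = j + (i-j)$ for each summation index $j$ in the convolution sum, which exactly produces the two terms on the right-hand side.

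Concretely, I would fix an index $i$ and write
\begin{equation*}
(\partial(A \star B))_i \;=\; i \sum_{0 \le j \le i} A_j B_{i-j} \;=\; \sum_{0 \le j \le i} \bigl(j + (i-j)\bigr) A_j B_{i-j}.
\end{equation*}
Splitting the sum into two pieces and using the definitions of $\partial A$ and $\partial B$ gives
\begin{equation*}
\sum_{0 \le j \le i} j A_j \cdot B_{i-j} + \sum_{0 \le j \le i} A_j \cdot (i-j) B_{i-j} \;=\; (\partial A \star B)_i + (A \star \partial B)_i,
\end{equation*}
which matches the right-hand side at coordinate $i$. Since $i$ was arbitrary, the two vectors are equal.

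There is essentially no obstacle here: the identity is a one-line algebraic manipulation, completely analogous to the Leibniz rule for derivatives of polynomials (under the identification of a vector $A$ with the polynomial $\sum_i A_i x^i$, the operator $\partial$ corresponds to $x \frac{d}{dx}$, and the statement is then the standard product rule applied to $x \frac{d}{dx}(fg) = x f' g + x f g'$). The only minor care needed is to make sure the summation ranges agree on both sides, which they do because $(\partial A)_j = 0$ whenever $A_j = 0$, and in particular outside the range of valid convolution indices the corresponding contributions vanish.
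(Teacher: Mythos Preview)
Your proof is correct and is essentially the same as the paper's: both verify the identity coordinate-wise by splitting the index $i = j + (i-j)$ inside the convolution sum. The only cosmetic difference is that the paper starts from the right-hand side and simplifies to the left, whereas you expand from the left-hand side to the right.
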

\begin{proof}
For any coordinate $x$:
\begin{equation*}
    (\partial A \conv B + A \conv \partial B)_x = \sum_{y + z = x} (\partial A)_y B_z + A_y (\partial B)_z = \sum_{y + z = x} (y + z) A_y B_z = (\partial (A \conv B))_x. \qedhere
\end{equation*}
\end{proof}

\begin{algorithm}[t]
\caption{$\SmallUnivApproxSparseConv(A, B, U, k)$} \label{alg:small-to-tiny}
\begin{algorithmic}[1]
\Input{Nonnegative vectors $A, B$ of length $U$, an integer $k$ such that $\norm{A \conv B}_0 \leq k$ and $U = \poly(k)$}
\Output{A vector $\widetilde C$ with $\norm{A \conv B - \widetilde C}_0 \leq \delta k$, with probability $1 - \delta$}

\smallskip
\State Let $p > U$ be a prime and let $m = \ceil{320 k / \delta^2}$
\State Randomly pick a linear hash function with parameters $p$ and $m$

\smallskip
\Statex \emph{(Approximate $V = h(A) \conv_m h(B)$)}
\State \raisebox{0pt}[0pt][0pt]{$\widetilde V^1 \gets \TinyUnivApproxSparseConv(h(A), h(B), m, k)$} with parameter $\delta / 6$
\State \raisebox{0pt}[0pt][0pt]{$\widetilde V \gets \widetilde V^1 \bmod m$}

\smallskip
\Statex \emph{(Approximate $W = h(\partial A) \conv_m h(B) + h(A) \conv_m h(\partial B)$)}
\State \raisebox{0pt}[0pt][0pt]{$\widetilde W^1 \gets \TinyUnivApproxSparseConv(h(\partial A), h(B), m, k)$} with parameter $\delta / 6$
\State \raisebox{0pt}[0pt][0pt]{$\widetilde W^2 \gets \TinyUnivApproxSparseConv(h(A), h(\partial B), m, k)$} with parameter $\delta / 6$
\State \raisebox{0pt}[0pt][0pt]{$\widetilde W \gets (\widetilde W^1 + \widetilde W^2) \bmod m$}

\medskip
\State $\widetilde C \gets 0$
\For {$i \in \supp(\widetilde V)$}
    \State $x \gets \widetilde W_i / \widetilde V_i$
    \If {$x$ is an integer}
        \State \raisebox{0pt}[0pt][0pt]{$\widetilde C_x \gets \widetilde C_x + \widetilde V_i$} \label{line:update-tilde-C}
    \EndIf
\EndFor \vspace{-.2ex}
\State\Return $\widetilde C$
\smallskip

\end{algorithmic}
\end{algorithm}

For the remainder of this section we analyze the procedure in \autoref{alg:small-to-tiny}. Let $h$ be a random linear hash function with parameters $p$ and $m$. We say that an index $x \in \supp(A \conv B)$ is \emph{isolated} if there is no other index $x' \in \supp(A \conv B)$ with $h(x') \in (h(x) + \{-2p, -p, 0, p, 2p\}) \bmod m$.

\begin{lemma}[Most Indices are Isolated] \label{lem:isolation-most}
With probability $1 - \delta / 2$, the number of non-isolated indices $x \in \supp(A \conv B)$ is at most $\delta k / 8$.
\end{lemma}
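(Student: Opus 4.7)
The plan is to bound the expected number of non-isolated indices by pairwise (universality-style) arguments on the linear hash function, and then convert this expectation bound into a high-probability bound via Markov's inequality.

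Let $S = \supp(A \conv B)$, so $|S| \le k$. Fix an arbitrary index $x \in S$ and any other index $x' \in S \setminus \{x\}$. By definition, $x'$ witnesses that $x$ is non-isolated iff $h(x') \in (h(x) + O) \bmod m$, where $O = \{-2p,-p,0,p,2p\}$. First I would observe that the set $O \bmod m$ has at most $5$ distinct elements, and that for each fixed offset $a \in [m]$, the universality property from \autoref{lem:linear-hashing-basics} gives
\begin{equation*}
  \Pr(h(x') = h(x) + a \bmod m) \le \tfrac{1}{m} + \tfrac{3}{m} = \tfrac{4}{m}.
\end{equation*}
(Here I am using that pairwise independence implies the same bound after marginalizing over $h(x)$.) A union bound over the at most five relevant offsets then yields that $x'$ causes $x$ to be non-isolated with probability at most $20/m$.

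Next I would sum over the choices of $x'$. Letting $N$ be the number of non-isolated indices in $S$, linearity of expectation gives
\begin{equation*}
  \Ex(N) \;=\; \sum_{x \in S} \Pr(x \text{ is non-isolated}) \;\le\; \sum_{x \in S} (|S| - 1) \cdot \tfrac{20}{m} \;\le\; \tfrac{20 k^2}{m}.
\end{equation*}
Plugging in $m = \lceil 320 k/\delta^2 \rceil$ gives $\Ex(N) \le \delta^2 k / 16$. Finally, Markov's inequality yields
\begin{equation*}
  \Pr\!\left(N > \tfrac{\delta k}{8}\right) \;\le\; \frac{\delta^2 k / 16}{\delta k / 8} \;=\; \tfrac{\delta}{2},
\end{equation*}
which is exactly the claim.

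I do not expect a real obstacle here: the only subtle point is making sure the constants work out and, in particular, that the five offsets in $O$ are handled via a union bound rather than needing any joint-distribution argument. The choice $m = \Theta(k/\delta^2)$ in \autoref{alg:small-to-tiny} is calibrated precisely to make this computation go through with room to spare.
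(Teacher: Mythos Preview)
Your proof is correct and follows essentially the same approach as the paper: both bound the probability that a fixed index $x$ is non-isolated by $20k/m$ via universality and a union bound over the five offsets, then apply Markov's inequality using $m = \lceil 320k/\delta^2\rceil$ to get the $\delta/2$ failure bound. The only cosmetic difference is that you compute $\Ex(N)$ directly as a double sum, whereas the paper first bounds $\Pr(x\text{ non-isolated}) \le \delta^2/16$ per index and then multiplies by $k$.
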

\begin{proof}
For any fixed integer $o$, the probability that $h(x') = h(x) + o \mod m$ is at most~$4/m$ by the universality of linear hashing (\autoref{lem:linear-hashing-basics}). By taking a union bound over the five values $o \in \{-2p, -p, 0, p, 2p\}$ and the $|\supp(A \conv B)| \leq k$ values of $x'$, the probability that $x$ is isolated is at least $1 - 20k/m$. As $m = 320 k / \delta^2$, any index $x \in \supp(A \conv B)$ is isolated with probability at least $1 - \delta^2 / 16$. The statement follows by an application of Markov's inequality.
\end{proof}

Recall that $V = h(A) \conv_m h(B)$ and $W = h(\partial A) \conv_m h(B) + h(A) \conv_m h(\partial B)$.

\begin{lemma}[Isolated Indices are Recovered] \label{lem:isolation-recovery}
Let $x$ be isolated. Then $\sum_i V_i = (A \conv B)_x$, where $i$ runs over $(h(0) + h(x) + \{-p, 0, p\}) \bmod m$. Furthermore, for any such $i$ it holds that $W_i = x V_i$.
\end{lemma}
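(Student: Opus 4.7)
The plan is to expand $V_i$ and $W_i$ coordinate-wise as sums over pairs from $\supp(A) \times \supp(B)$, and then use almost-affinity together with the isolation hypothesis to argue that only pairs with $y+z = x$ contribute to the buckets of interest.

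First I would unroll the definitions. Using $h(A)_a = \sum_{y : h(y) = a} A_y$ and the definition of cyclic convolution, a direct calculation gives
\[
V_i \;=\; \sum_{h(y)+h(z) \,\equiv\, i \!\!\pmod m} A_y B_z,
\qquad
W_i \;=\; \sum_{h(y)+h(z) \,\equiv\, i \!\!\pmod m} (y+z)\, A_y B_z,
\]
where the formula for $W$ uses the pointwise identity $(\partial A)_y B_z + A_y (\partial B)_z = (y+z)\, A_y B_z$.

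Next I would fix $i \in (h(0)+h(x)+\{-p,0,p\}) \bmod m$ and consider any pair $(y,z)$ with $A_y B_z \neq 0$ contributing to $V_i$ (equivalently, to $W_i$). By the almost-affinity property in \autoref{lem:linear-hashing-basics}, we have $h(y)+h(z) \equiv h(0) + h(y+z) + o \pmod m$ for some $o \in \{-p,0,p\}$, and also $i \equiv h(0) + h(x) + o' \pmod m$ for some $o' \in \{-p,0,p\}$. Subtracting gives $h(y+z) \equiv h(x) + (o'-o) \pmod m$ with $o'-o \in \{-2p,-p,0,p,2p\}$. Since $A_y B_z \neq 0$ forces $y+z \in \supp(A \conv B)$, the isolation of $x$ forces $y+z = x$.

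Finally I would close both claims. For the sum: conversely, every pair with $y+z = x$ satisfies $h(y)+h(z) \equiv h(0)+h(x)+o \pmod m$ for some $o \in \{-p,0,p\}$, so it lies in exactly one of the three candidate buckets; summing $V_i$ over the set $(h(0)+h(x)+\{-p,0,p\}) \bmod m$ thus counts each such pair exactly once, yielding $\sum_i V_i = \sum_{y+z=x} A_y B_z = (A \conv B)_x$. For the second claim: within any single bucket $i$ in this set, every contributing pair has $y+z = x$, so the factor $(y+z)$ can be pulled out of $W_i$ as the constant $x$, giving $W_i = x \cdot V_i$.

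There is no real difficulty here — the proof is a bookkeeping exercise. The only point that needs a little care is that the three values $(h(0)+h(x)+\{-p,0,p\}) \bmod m$ may collide (since $m \leq p$); reading the index range as a \emph{set} of distinct buckets, each pair with $y+z=x$ still contributes to exactly one bucket, so the sum over $i$ really reconstructs $(A \conv B)_x$ without double counting.
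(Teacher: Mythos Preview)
Your proof is correct and follows essentially the same approach as the paper: expand $V_i$ and $W_i$, use almost-affinity plus isolation to show only pairs with $y+z=x$ contribute to the relevant buckets, then read off both claims. Your remark about possible collisions among the three bucket indices is a nice point that the paper glosses over; handling the index range as a set resolves it exactly as you describe.
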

\begin{proof}
Let $x \in \supp(A \conv B)$ be isolated and let $y \in \supp(A)$ and $z \in \supp(B)$ be such that $x \neq y + z$. We claim that $h(y) + h(z) \neq h(0) + h(x) + o \mod m$ for all $o \in \{-p, 0, p\}$. Assume the contrary, then by almost-affinity (\autoref{lem:linear-hashing-basics}) we have that $h(y) + h(z) = h(0) + h(y + z) + o' \mod m$ for some $o' \in \{-p, 0, p\}$ and thus $h(y + z) = h(x) + o - o' \mod m$. This is a contradiction as $y + z \in \supp(A \conv B)$ and $o - o' \in \{-2p, -p, 0, p, 2p\}$ but $x$ is assumed to be isolated.

Recall that $V = h(A) \conv_m h(B)$, and let $i \in (h(0) + h(x) + \{-p, 0, p\}) \bmod m$. For convenience, we write $\equiv$ to denote equality modulo $m$. From the previous paragraph it follows that
\begin{equation*}
    V_i = \sum_{\substack{y, z\\h(y) + h(z) \equiv i}} A_y B_z = \sum_{\substack{y + z = x\\h(y) + h(z) \equiv i}} A_y B_z.
\end{equation*}
By another application of almost-affinity it is immediate that $\sum_i V_i = \sum_{y + z = x} A_x B_y = (A \conv B)_x$. Moreover, we can express $W_i$ in a similar way: By repeating the previous argument twice, once with $\partial A$ in place of $A$ and once with $\partial B$ in place of $B$, we obtain that
\begin{equation*}
    W_i = \sum_{\substack{y + z = x\\h(y) + h(z) \equiv i}} (\partial A)_y B_z + \sum_{\substack{y + z = x\\h(y) + h(z) \equiv i}} A_y (\partial B)_z = \sum_{\substack{y + z = x\\h(y) + h(z) \equiv i}} (y + z) A_y B_z = x V_i. \qedhere
\end{equation*}
\end{proof}

\begin{lemma}[Correctness of \autoref{alg:small-to-tiny}]
With probability $1 - \delta$, \autoref{alg:small-to-tiny} correctly outputs a vector $\widetilde C$ with $\norm{A \conv B - \widetilde C}_0 \leq \Order(\delta k)$.
\end{lemma}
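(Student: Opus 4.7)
The plan is to union-bound over four high-probability events. By Lemma~\ref{lem:isolation-most}, with probability at least $1 - \delta/2$ the number of non-isolated indices of $\supp(A \conv B)$ is at most $\delta k / 8$, and by \autoref{lem:tiny-univ-approx-sparse-conv} each of the three calls to \TinyUnivApproxSparseConv returns a good approximation (each failing with probability at most $\delta/6$); union-bounding gives overall failure probability at most $\delta$, and I condition on all four events holding. Let $B$ denote the set of \emph{bad} buckets $i \in [m]$ where $\widetilde V_i \neq V_i$ or $\widetilde W_i \neq W_i$. Since taking $V \bmod m$ from the non-cyclic convolution can only decrease $\ell_0$-distance, and since $\widetilde W$ is the sum of two independent approximations, we obtain $|B| = \Order(\delta k)$.

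Call an index $x \in \supp(A \conv B)$ \emph{clean} if it is isolated and none of its three associated buckets $i_1, i_2, i_3 := (h(0) + h(x) + \{-p, 0, p\}) \bmod m$ lies in $B$. Each bucket is associated with at most three isolated indices (one per offset in $\{-p, 0, p\}$, using that isolation forbids distinct elements of $\supp(A \conv B)$ from sharing a bucket), so the number of non-clean indices is bounded by $\delta k / 8 + 3 |B| = \Order(\delta k)$.

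For any clean $x$, I claim that the buckets $i_1, i_2, i_3$ jointly contribute exactly $(A \conv B)_x$ to $\widetilde C_x$. Indeed, by \autoref{lem:isolation-recovery} combined with cleanliness, $\widetilde V_{i_j} = V_{i_j}$ and $\widetilde W_{i_j} = x V_{i_j}$ for each $j$, so the quotient $\widetilde W_{i_j} / \widetilde V_{i_j}$ evaluates to the integer $x$ whenever $V_{i_j} \neq 0$, and the update in \autoref{line:update-tilde-C} contributes $V_{i_j}$ to $\widetilde C_x$. Summing over $j$ gives $\sum_j V_{i_j} = (A \conv B)_x$. The only remaining source of error is that some spurious bucket $i' \notin \{i_1, i_2, i_3\}$ might also satisfy $\widetilde W_{i'} / \widetilde V_{i'} = x$, inflating $\widetilde C_x$ beyond the correct value.

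To bound the number of indices polluted by spurious contributions, I observe that any non-bad bucket which is not the associated bucket of some isolated $x$ must be \emph{mixed}, meaning it receives convolution pairs $(y, z)$ from at least two distinct sumset elements $y + z$. By the isolation condition, two isolated elements of $\supp(A \conv B)$ cannot share a bucket, so every mixed bucket is associated with at least one non-isolated index, bounding the number of mixed non-bad buckets by $3 \cdot \delta k / 8$. Since each bucket produces at most one spurious contribution (to the unique integer $\widetilde W_{i'} / \widetilde V_{i'}$, when it happens to be an integer), the total pollution is at most $|B| + 3 \delta k / 8 = \Order(\delta k)$, and combined with the bound on non-clean indices this yields $\|A \conv B - \widetilde C\|_0 = \Order(\delta k)$. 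The main bookkeeping subtlety will be verifying $|B| = \Order(\delta k)$: one must check that $\|h(A) \conv h(B)\|_0$ and the $\partial$-twisted analogues are $\Order(k)$ even though almost-affinity of $h$ can split each coordinate of $A \conv B$ threefold, so that \autoref{lem:tiny-univ-approx-sparse-conv} indeed yields $\Order(\delta k)$-close approximations after applying the $\bmod\, m$ step.
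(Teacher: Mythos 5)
Your proof is correct and follows essentially the same route as the paper: condition on isolation plus the three successful calls to \TinyUnivApproxSparseConv{}, observe that buckets associated with an isolated index and having correct $\widetilde V_i, \widetilde W_i$ reconstruct that index exactly, and charge every remaining error to one of $\Order(\delta k)$ bad or non-isolated-associated buckets, each of which perturbs at most one coordinate of $\widetilde C$. (One cosmetic slip: a non-bad bucket associated with no isolated index need not be ``mixed''---it may receive pairs from a single non-isolated sumset element---but your actual count, via association with some non-isolated index, is exactly the paper's and goes through unchanged.)
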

\begin{proof}
We call an iteration $i \in \supp(\widetilde V)$ \emph{good} if
\begin{enumerate*}[font=, label=(\roman*)]
\item $V_i = \widetilde V_i$,\label{lem:small-to-tiny-correctness:item:V}
\item $W_i = \widetilde W_i$, and\label{lem:small-to-tiny-correctness:item:W}
\item for some isolated element $x \in \supp(A \conv B)$ it holds that $i \in (h(0) + h(x) + \{-p, 0, p\}) \bmod m$.\label{lem:small-to-tiny-correctness:item:isolated}
\end{enumerate*}
Otherwise, $i$ is \emph{bad}. We start analyzing the algorithm with the unrealistic assumption that all iterations are good.

Focus on an arbitrary iteration $i$. By assumption~\ref*{lem:small-to-tiny-correctness:item:isolated} there exists some element $x \in \supp(A \conv B)$ such that $i \in (h(0) + h(x) + \{-p, 0, p\}) \bmod m$. Moreover, by \autoref{lem:isolation-recovery} and assumptions~\ref*{lem:small-to-tiny-correctness:item:V},~\ref*{lem:small-to-tiny-correctness:item:W} we have the algorithm correctly detects \raisebox{0pt}[0pt][0pt]{$x = \widetilde W_i / \widetilde V_i = W_i / V_i$}, and hence, over the course of the at most three iterations with $i \in (h(0) + h(x) + \{-p, 0, p\}) \bmod m$ we correctly assign $\widetilde C_x \gets \sum_i V_i = (A \conv B)_x$ in \autoref{line:update-tilde-C}. Under the unrealistic assumption it follows that \raisebox{0pt}[0pt][0pt]{$\widetilde C = A \conv B$} after all iterations.

We will now remove the unrealistic assumption. Clearly there is no hope of recovering the non-isolated elements, but \autoref{lem:isolation-most} proves that there are at most $\delta k / 8$ non-isolated elements with probability $1 - \delta / 2$. Any isolated element will be recovered if it happens to show up in a good iteration as shown in the previous paragraph. It suffices to prove that the number of bad iterations is at most $7 \delta k / 8$. Then, as any iteration modifies $\widetilde C$ in at most one position, it follows that $\norm{A \conv B - \widetilde C}_0 \leq \delta k$.

On the one hand, with probability $1 - \delta/2$, all three calls to the \TinyUnivApproxSparseConv{} algorithm succeed and we have that \raisebox{0pt}[0pt][0pt]{$\norm{V - \widetilde V}_0 \leq \delta k / 6$} and similarly \raisebox{0pt}[0pt][0pt]{$\norm{W - \widetilde W}_0 \leq \delta k / 3$}. So there can be at most $\delta k / 2$ iterations for which either assumption~\ref*{lem:small-to-tiny-correctness:item:V} or~\ref*{lem:small-to-tiny-correctness:item:W} fails. On the other hand, we assumed that there are at most $\delta k / 8$ non-isolated indices and any non-isolated index leads to at most three iterations for which~\ref*{lem:small-to-tiny-correctness:item:isolated} fails. It follows that in total there are at most $\delta k / 2 + 3 \delta / 8 = 7 \delta / 8$ bad iterations.
\end{proof}

It is easy to see that the running time of \autoref{alg:small-to-tiny} is dominated by the calls to \TinyUnivApproxSparseConv{} and thus bounded by $\Order(D_{1/3}(k) + k \log^2(1/\delta) + \polylog(k, \norm A_\infty, \norm B_\infty))$, see \autoref{lem:tiny-univ-approx-sparse-conv}. This completes the proof of \autoref{lem:small-to-tiny}.

% !TEX root = ../paper.tex

\section{Error Correction} \label{sec:error_correction}
In the previous sections, the goal was design algorithms to \emph{approximate} convolutions. In this step, we show how to clean up the errors and turn the approximations into \emph{exact} convolutions. Formally, we give an algorithm for the following problem:

\probsmallunivsparseconv*{}

\begin{lemma} \label{lem:error-corr}
Let $\log^2 k \leq 1/\delta$. There is an algorithm for the \SmallUnivSparseConv{} problem running in time $\Order(D_{1/3}(k) + k \log^2(1/\delta) + \polylog(k, \norm A_\infty, \norm B_\infty))$.
\end{lemma}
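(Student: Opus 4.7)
The plan is to bootstrap Lemma~\ref{lem:small-to-tiny} with an iterative error-correction loop that uses the affine hash $g(x) = x \bmod p$ together with the derivative trick of Proposition~\ref{prop:product-rule}. First, I invoke \SmallUnivApproxSparseConv{} with approximation parameter $\gamma = \delta$, obtaining in time $\Order(D_{1/3}(k) + k\log^2(1/\delta) + \polylog(k, \|A\|_\infty, \|B\|_\infty))$ a vector $\widetilde C^0$ with $\|A \conv B - \widetilde C^0\|_0 \le \delta k$ and failure probability $\delta/2$. Note that $\|A\conv B\|_0 \le k$ forces $\|A\|_0, \|B\|_0 \le k$ via the sumset bound $|\supp(A)+\supp(B)| \ge \|A\|_0 + \|B\|_0 - 1$, and every $\widetilde C^\ell$ produced below has sparsity $\Order(k)$ by induction.

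I then iteratively refine $\widetilde C^\ell$ for $\ell = 0, 1, \ldots, L-1$ with $L = \Theta(\log\log k)$. At iteration $\ell$, pick a fresh random prime $p_\ell$ of size $\Theta(k/(\ell+1)^2)$ and let $g_\ell(x) = x \bmod p_\ell$. Since $g_\ell$ is affine, I can form the hashed residual $V := g_\ell(A) \conv_{p_\ell} g_\ell(B) - g_\ell(\widetilde C^\ell) = g_\ell(A\conv B - \widetilde C^\ell)$ and, by Proposition~\ref{prop:product-rule}, the hashed weighted residual $W := g_\ell(\partial A) \conv_{p_\ell} g_\ell(B) + g_\ell(A) \conv_{p_\ell} g_\ell(\partial B) - g_\ell(\partial \widetilde C^\ell) = g_\ell(\partial(A\conv B - \widetilde C^\ell))$. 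For each bucket $b$ with $V_b \ne 0$, compute $x := W_b/V_b$, and if $x$ is a nonnegative integer in $[U]$ satisfying $g_\ell(x) = b$, update $\widetilde C^{\ell+1}_x \gets \widetilde C^\ell_x + V_b$ (otherwise leave $\widetilde C^{\ell+1}$ unchanged at $x$). Whenever the residual element at position $x$ is the unique preimage of its bucket, this recovers $x$ exactly and $V_b$ equals the true residual there; any spurious updates at non-isolated buckets (where $W_b/V_b$ happens to coincide with an integer position) are cancelled once they reappear in a later iteration's $V$.

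A Markov argument on the pair-collision count under a random-prime hash---which is $\Order(1)$-universal because $U = \poly(k)$---shows that with probability $1 - \Order(1/(\ell+1)^2)$, at most $\Order((\ell+1)^4 k_\ell^2 / k)$ residual elements fall in non-isolated buckets, where $k_\ell := \|A\conv B - \widetilde C^\ell\|_0$. Counting both missing recoveries and spurious updates yields the recurrence $k_{\ell+1} \le C(\ell+1)^4 k_\ell^2 / k$ for a constant $C$. Starting from $k_0 \le \delta k$, this decays doubly exponentially, driving $k_L < 1$ after $L = \Theta(\log\log k)$ iterations, so that $\widetilde C^L = A\conv B$ exactly. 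Each iteration hashes six sparse length-$U$ vectors of sparsity $\Order(k)$ in $\Order(k)$ time and performs three length-$p_\ell$ dense cyclic convolutions at cost $\Order(D_{1/3}(p_\ell))$; summing, and using the monotonicity of $D_{1/3}(n)/n$, the iteration phase costs $\Order(k\log\log k + D_{1/3}(k))$.

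The main obstacle is boosting each iteration's success probability from $1 - \Order(1/\ell^2)$ to the cumulative target $1 - \delta/L$ without multiplying the dominant $D_{1/3}(k)$ cost by $\Theta(\log(1/\delta))$; naive repetition of every iteration with fresh randomness fails this constraint. My plan is to extend the iteration count to $L' = \Theta(\log\log k + \log(1/\delta))$, perform a single convolution-trial per iteration, and cheaply verify the outcome via a secondary lightweight hash to detect and discard ``bad'' iterations before they contaminate $\widetilde C^{\ell+1}$. A Chernoff-type bound on the number of failed iterations then ensures that enough of the $L'$ iterations succeed to drive $k_{L'} < 1$ with probability $1 - \delta$. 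Crucially, $\sum_\ell \Order(D_{1/3}(p_\ell)) = \Order(D_{1/3}(k))$ is independent of $L'$ by the geometric decay of $p_\ell$, and the extra $\Order(kL') = \Order(k\log(1/\delta))$ bookkeeping cost is absorbed into the $\Order(k\log^2(1/\delta))$ target through the hypothesis $\log^2 k \le 1/\delta$.
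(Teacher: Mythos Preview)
Your high-level strategy matches the paper's: call \SmallUnivApproxSparseConv{} once, then iteratively clean up the residual using the affine hash $g(x)=x\bmod p$ together with the derivative trick. However, several of the concrete details diverge from the paper's proof and at least one of them is an actual error.

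First, the claim that the random-prime hash is ``$\Order(1)$-universal because $U=\poly(k)$'' is wrong. By \autoref{lem:prime-hashing-basics} the collision probability is $\Order(\log U/m)$, not $\Order(1/m)$; the small-universe assumption only turns $\log U$ into $\Order(\log k)$, it does not remove it. Your recurrence should therefore read $k_{\ell+1}\le C(\ell+1)^4 k_\ell^2\log k/k$. This does not kill the doubly-exponential decay (you still reach $k_L<1$ in $\Order(\log\log k)$ levels), but the analysis as written is off, and it is precisely this $\log U$ loss that forces the paper to budget $m=\Theta(k\log U/\log^2 k)$ buckets rather than $\Theta(k)$.

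Second, the paper does \emph{not} shrink the bucket count with $\ell$; it keeps $m$ fixed at $\Theta(k\log U/\log^2 k)$ throughout. Each inner-loop trial then consists of six $\Order(k)$-time hashings plus three deterministic FFTs of length $\Order(k/\log k)$, which is again $\Order(k)$ --- there is no need to invoke the $D_{1/3}$ black box or worry about its failures inside the loop. The $D_{1/3}(k)$ term in the final bound comes solely from the single call to \SmallUnivApproxSparseConv{}.

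Third, and most importantly, the paper's boosting mechanism is quite different from your ``verify and discard'' sketch. At level $\ell$ the paper runs $\lceil 2\log(2L/\delta)/1.5^{\ell-1}\rceil$ independent trials and \emph{keeps the trial that maximizes $\|V\|_0$}. The correctness of this selection rule is the content of the third claim in \autoref{lem:isolation-most-error-corr}: $\|V\|_0$ is within an additive constant factor of the number of isolated elements, so maximizing it is a proxy for minimizing non-isolation. Because the per-trial failure probability $\sqrt{2^{-1.5^{\ell-1}}}$ drops doubly-exponentially, the number of trials per level drops geometrically, and the total number of trials across all levels is only $\Order(\log(1/\delta))$. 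Your alternative --- extending the number of levels and using a ``secondary lightweight hash'' to detect bad iterations --- is not fleshed out enough to assess: you do not specify the verifier, and with bucket sizes tied to the level index $\ell$ rather than to the actual residual size, discarding iterations desynchronizes $p_\ell$ from $k_\ell$.
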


This is the only part of the reduction for which we cannot use linear hashing, as the recovery loop crucially relies on certain cancellations to take place. The problem is that linear hashing is only almost -- not perfectly -- affine. Instead, we use a simpler hash function: Let $g(x) = x \bmod p$, where $p$ is a random prime in some specified range. The following basics are well-known and much simpler in comparison to linear hashing (cf.~\autoref{lem:linear-hashing-basics}).

\begin{lemma}[Hashing Modulo a Random Prime] \label{lem:prime-hashing-basics}
Let $g(x) = x \bmod p$ where $p$ is a random prime in the range $[m, 2m]$. Then the following properties hold:
\begin{description}
\item[Universality:] For distinct keys $x, y \in [U]$: $\Pr(g(x) = g(y)) \leq 2 \log(U) / m$.
\item[Affinity:] For arbitrary keys $x, y$: $g(x) + g(y) = g(x + y) \mod p$.
\end{description}
\end{lemma}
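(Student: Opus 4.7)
The plan is to prove the two properties separately, with affinity following from elementary modular arithmetic and universality reducing to a counting argument about primes and divisors.

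For \textbf{affinity}, I would simply unfold the definition. We have $g(x)+g(y) = (x\bmod p) + (y\bmod p) \equiv x+y \pmod{p}$, while by definition $g(x+y) = (x+y)\bmod p \equiv x+y \pmod{p}$, so the two sides agree modulo $p$. This is a one-line calculation, in sharp contrast to the delicate almost-affinity analysis for linear hashing in \autoref{lem:linear-hashing-basics}.

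For \textbf{universality}, the key observation is that for distinct $x,y\in[U]$, the event $g(x)=g(y)$ is equivalent to $p \mid (x-y)$, where $0<|x-y|<U$. I would then bound the probability by
\begin{equation*}
\Pr(g(x)=g(y)) \;=\; \frac{|\{p \text{ prime} : p\in[m,2m],\; p\mid(x-y)\}|}{|\{p \text{ prime}: p\in[m,2m]\}|}.
\end{equation*}
The numerator is at most the number of distinct prime divisors of $|x-y|$ that are $\geq m$; since any such prime contributes a factor of at least $m$ to $|x-y|<U$, their count is at most $\lfloor \log_m U\rfloor \leq \log U/\log m$. The denominator is the number of primes in $[m,2m]$, which by a Bertrand/Chebyshev-type bound is $\Omega(m/\log m)$. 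Dividing yields $O(\log U / m)$, and the explicit constant $2$ in the statement follows from using a sufficiently sharp explicit form of the prime-counting estimate (e.g.\ Rosser--Schoenfeld), valid for all $m$ large enough to make this step meaningful.

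The main (and only) non-trivial ingredient is the lower bound on $\pi(2m)-\pi(m)$. For the proof proposal I would simply cite an explicit version of the prime number theorem sufficient for the constant $2$; everything else is bookkeeping. One subtlety worth flagging is that the lemma implicitly requires $m$ to be large enough that a prime exists in $[m,2m]$ and that the constants work out, which is consistent with how the lemma will be applied in \autoref{sec:error_correction} (where $m=\Theta(k/\polylog)$ and $k$ is assumed large).
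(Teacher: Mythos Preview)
Your proposal is correct and matches the paper's proof essentially line for line: affinity is dismissed as immediate, and universality is obtained by observing $g(x)=g(y)\iff p\mid(x-y)$, bounding the number of prime divisors of $|x-y|$ in $[m,2m]$ by $\log_m U$, and dividing by the Rosser--Schoenfeld lower bound $\pi(2m)-\pi(m)\ge \tfrac{3m}{5\ln m}$. Even the caveat about needing $m$ sufficiently large is present in the paper's version.
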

\begin{proof}
The affinity property is obvious, so focus on universality and fix two distinct keys $x, y \in [U]$. It holds that $g(x) = g(y)$ if and only if $p$ divides $x - y$. Since $|x - y| \leq U$, $x - y$ has at most $\log_m U$ distinct prime factors in the range $[m, 2m]$. On the other hand, by a quantitative version of the Prime Number Theorem~\cite[Corollary~3]{RosserS62}, there are at least $\frac{3m}{5 \ln m}$ primes in the range $[m, 2m]$ (for sufficiently large~$m$). Hence, the probability that $p$ divides $x - y$ is at most $5 \log_m(U) \ln(m) / (3m) \leq 2 \log(U) / m$.
\end{proof}

\begin{algorithm}[t]
\caption{$\SmallUnivSparseConv(A, B, U, k)$} \label{alg:small-univ-error-corr}
\begin{algorithmic}[1]
\Input{Nonnegative vectors $A, B$ of length $U$, an integer $k$ such that $\norm{A \conv B}_0 \leq k$ and $U = \poly(k)$}
\Output{$C = A \conv B$ with probability $1 - \delta$}

\smallskip
\State Let \raisebox{0pt}[0pt][0pt]{$m = \frac{8k \log(U)}{\log^2(k)}$}
\State $C^0 \gets \SmallUnivApproxSparseConv(A, B, U, k)$ with parameter $\delta/2$
\For {$\ell \gets 1, \dots, L = \Order(\log \log k)$}
    \RepeatTimes {$\ceil{2\log(2L/\delta) / 1.5^{\ell-1}}$} \label{line:inner-boosting-start}
        \State Randomly pick a prime $p \in [m, 2m]$ and let $g(x) = x \bmod p$
        \State $V \gets g(A) \conv_p g(B) - g(C^{\ell - 1})$ using FFT
        \State $W \gets g(\partial A) \conv_p g(B) + g(A) \conv_p g(\partial B) - g(\partial C^{\ell-1})$ using FFT \label{line:inner-boosting-end}
    \EndRepeatTimes \vspace*{-.7ex}
    \State Keep $g, V, W$ for which $\norm V_0$ is maximized \label{line:keep}
    \State $C^\ell \gets C^{\ell-1}$
    \For {$i \in \supp(V)$} \label{line:loop-Cell}
        \State $x \gets W_i / V_i$ \label{line:detect-x}
        \If {$x$ is an integer}
            \State $C^\ell_x \gets C^\ell_x + V_i$ \label{line:update-Cell}
        \EndIf
    \EndFor \vspace{-.2ex} 
\EndFor \vspace{-.4ex}
\State\Return $C = C^L$
\smallskip

\end{algorithmic}
\end{algorithm}

We analyze \autoref{alg:small-univ-error-corr}. We shall refer to iterations of the outer loop as \emph{levels $\ell$}. An element $x \in \supp(A \conv B - C^{\ell-1})$ is \emph{isolated at level~$\ell$} if there exists no $x' \in \supp(A \conv B - C^{\ell-1})$ with $x \neq x'$ and $g(x) = g(x')$, where $g$ is the function picked at the $\ell$-th level.

\begin{lemma}[Most Indices are Isolated] \label{lem:isolation-most-error-corr}
Let $\ell$ be any level. If
\begin{equation*}
    \norm{A \conv B - C^{\ell-1}}_0 \leq \frac{2^{-1.5^{\ell-1}} k}{\log^2(k)},
\end{equation*}
then with probability $1 - \delta / (2L)$, there will be at most
\begin{equation*}
    \frac{2^{-1.5^\ell} k}{2\log^2(k)}
\end{equation*}
non-isolated elements at level $\ell$.
\end{lemma}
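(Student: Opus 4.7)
My plan is to analyze a single inner-loop iteration, amplify across the $t := \lceil 2\log(2L/\delta)/1.5^{\ell-1}\rceil$ repetitions, and then bridge between the algorithm's selection criterion ($\arg\max \norm V_0$) and the quantity I actually need to bound, namely the number $N$ of non-isolated elements.

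The first step is to set up a key two-sided sandwich relating $\norm V_0$ and $N$. Let $S = \supp(A\conv B - C^{\ell-1})$, so $k_{\ell-1} := |S| \leq 2^{-1.5^{\ell-1}}k/\log^2(k)$ by hypothesis. For a fixed trial with hash $g$, let $I$ and $N$ count the isolated and non-isolated elements of $S$ under $g$; then $I + N = k_{\ell-1}$. By affinity of $g$ (\autoref{lem:prime-hashing-basics}) we have $V = g(A\conv B - C^{\ell-1})$, so isolated elements contribute $I$ distinct nonzero entries to $V$, while each non-isolated bucket contains at least two elements of $S$, giving at most $N/2$ such buckets. This yields
\begin{equation*}
  I \;\leq\; \norm V_0 \;\leq\; I + \tfrac{N}{2} \;=\; k_{\ell-1} - \tfrac{N}{2},
\end{equation*}
which rearranges to $N \leq 2(k_{\ell-1} - \norm V_0)$. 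This is the relation that will ultimately translate ``maximize $\norm V_0$'' into ``minimize $N$''.

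Next, for the per-trial bound, I will use universality of $g$ (\autoref{lem:prime-hashing-basics}) to estimate $\Pr(g(x)=g(x')) \leq 2\log(U)/m$ for distinct $x,x'\in S$; a union bound over pairs gives $\Ex[N] \leq 2k_{\ell-1}^2 \log(U)/m$. Plugging in the hypothesis on $k_{\ell-1}$ and $m = 8k\log(U)/\log^2(k)$ simplifies this to $\Ex[N] \leq 2^{-2\cdot 1.5^{\ell-1}}\,k/(4\log^2 k)$. Setting the target $\tau := 2^{-1.5^\ell}k/(2\log^2 k)$ and applying Markov with threshold $\tau/2$ gives
\begin{equation*}
  \Pr(N > \tau/2) \;\leq\; \frac{2\Ex[N]}{\tau} \;\leq\; 2^{-2\cdot 1.5^{\ell-1} + 1.5^\ell} \;=\; 2^{-0.5\cdot 1.5^{\ell-1}},
\end{equation*}
using $1.5^\ell = 1.5 \cdot 1.5^{\ell-1}$. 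Amplifying across the $t$ independent trials, the probability that every trial yields $N > \tau/2$ is at most $2^{-0.5\cdot 1.5^{\ell-1}\,t}$, and the calibrated choice $t = \lceil 2\log(2L/\delta)/1.5^{\ell-1}\rceil$ makes this at most $\delta/(2L)$.

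Finally, I bridge from ``some trial has small $N$'' to ``the kept trial has small $N$''. On the above success event some trial $i^*$ satisfies $N^{(i^*)} \leq \tau/2$, and hence $\norm{V^{(i^*)}}_0 \geq k_{\ell-1} - \tau/2$ by the lower half of the sandwich. Since the algorithm keeps the trial with maximum $\norm V_0$, the kept trial satisfies $\norm{V^{\mathrm{kept}}}_0 \geq \norm{V^{(i^*)}}_0 \geq k_{\ell-1} - \tau/2$, and applying the upper half of the sandwich to the kept trial gives $N^{\mathrm{kept}} \leq 2(k_{\ell-1} - \norm{V^{\mathrm{kept}}}_0) \leq \tau$, as required. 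The only non-routine step is exactly this bridging: because the algorithm does not directly minimize $N$, I must pay a factor of $2$ (absorbed into the Markov threshold $\tau/2$) when converting ``large $\norm V_0$'' back into ``small $N$''. Everything else is routine chasing of the doubly exponential schedule $k_\ell \sim 2^{-1.5^\ell}k/\log^2 k$ against the parameters $m$ and $t$ chosen in \autoref{alg:small-univ-error-corr}.
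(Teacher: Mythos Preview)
Your proof is correct and follows essentially the same three-part structure as the paper: a per-trial Markov bound via universality of $g$, amplification over the $\lceil 2\log(2L/\delta)/1.5^{\ell-1}\rceil$ trials, and a bridging argument relating $\norm V_0$ to the number of non-isolated elements. Your sandwich inequality $I \leq \norm V_0 \leq k_{\ell-1} - N/2$ is a slightly cleaner packaging of the paper's ``third claim'', but the constants and the overall logic coincide.
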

\begin{proof}
We will prove the statement in three steps. First: \emph{A random hash function $g$ achieves that there are at most $2^{-1.5^\ell} k \log^{-2}(k) / 4$ non-isolated elements with probability at least $1 - \sqrt{2^{-1.5^{\ell-1}}}$.} Indeed, for any fixed $x \in \supp(A \conv B - C^{\ell-1})$ there are at most $\norm{A \conv B - C^{\ell-1}}_0$ other elements~$x'$ that $x$ could collide with. For fixed $x, x'$ the collision probability is at most
\begin{equation*}
    \Pr(g(x) = g(x')) \leq \frac{2 \log(U)}m \leq \frac{2 \log(U) \log^2(k)}{8 k \log(U)} = \frac{\log^2(k)}{4k}
\end{equation*}
by universality, see \autoref{lem:prime-hashing-basics}. By taking a union bound over all elements $x'$, we obtain that $x$ is non-isolated with probability at most $\norm{A \conv B - C^{\ell-1}}_0 \mult \log^2(k) / 4k$ and we thus expect at most $\norm{A \conv B - C^{\ell-1}}_0^2 \mult \log^2(k) / 4k$ non-isolated elements. By Markov's inequality, the probability that there are there are more than $2^{-1.5^\ell} k \log^{-2}(k) / 4$ non-isolated elements is at most
\begin{equation*}
    \frac{\norm{A \conv B - C^{\ell-1}}_0^2 \mult \frac{\log^2(k)}{4k}}{\frac{2^{-1.5^\ell} k}{4\log^2(k)}} \leq \frac{(2^{-1.5^{\ell-1}})^2}{2^{-1.5^\ell}} = \sqrt{2^{-1.5^{\ell-1}}}.
\end{equation*}

Second: \emph{With probability $1 - \delta / (2L)$, running $\ceil{2\log(2L/\delta) / 1.5^{\ell-1}}$ independent trials will result in at least one function $g$ under which there are at most $2^{-1.5^\ell} k \log^{-2}(k) / 4$ non-isolated elements} I.e.,~at some time during the execution of the inner loop in Lines~\ref{line:inner-boosting-start}--\ref{line:inner-boosting-end} we find a good hash function~$g$. Indeed, the failure probability is at most
\begin{equation*}
    \left(\sqrt{2^{-1.5^{\ell-1}}}\right)^{2 \log(2L/\delta) / 1.5^{\ell-1}} = \frac\delta{2L}.
\end{equation*}

Third: \emph{If there are at most $r$ non-isolated elements then $\norm V_0 \geq \norm{A \conv B - C^{\ell-1}}_0 - r$ and conversely, if $\norm V_0 \geq \norm{A \conv B - C^{\ell-1}}_0 - r$ then there can be at most $2r$ non-isolated elements.} Recall that by affinity (\autoref{lem:prime-hashing-basics}) the algorithm exactly computes $V = g(A \conv B - C^{\ell-1})$. As every isolated element $x$ is the unique element in its bucket $i = g(x)$ it follows directly that $\norm V_0 \geq \norm{A \conv B - C^{\ell-1}}_0 - r$ without accounting for the non-isolated elements. For the converse direction we note that there is a way of ``ignoring'' $r$ elements $x \in \supp(A \conv B - C^{\ell-1})$ such that all other elements become isolated. The number of non-isolated elements is thus at most $r$ (the ignored elements) plus $r$ (the number elements colliding with one of the ignored elements).

The statement follows by combining the second and third intermediate claims: By the second claim, the inner loop (Lines~\ref{line:inner-boosting-start}--\ref{line:inner-boosting-end}) will eventually discover some hash function $g$ under which we have at most $2^{-1.5^\ell} k \log^{-2}(k) / 4$ non-isolated elements and thus, by the third claim, $\norm V_0 \geq \norm{A \conv B - C^{\ell-1}}_0 - 2^{-1.5^\ell} k \log^{-2}(k) / 4$. As the algorithm selects the function which maximizes $\norm V_0$, the third claim proves that whatever function is kept in \autoref{line:keep} leads to at most $2^{-1.5^\ell} k \log^{-2}(k) / 2$ non-isolated elements.
\end{proof}

\begin{lemma}[Isolated Indices are Recovered] \label{lem:isolation-recovery-error-corr}
Denoting the umber of non-isolated elements at level~$\ell$ by~$r$, we have $\norm{A \conv B - C^\ell}_0 \leq 2r$.
\end{lemma}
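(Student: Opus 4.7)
The plan mimics \autoref{lem:isolation-recovery} but in the now-simpler setting where $g$ is \emph{perfectly} affine (\autoref{lem:prime-hashing-basics}), so no $\{-p,0,p\}$ offset terms appear. Set $D := A \conv B - C^{\ell-1}$ and $S := \supp(D)$, and partition $S = S_{\mathrm{iso}} \sqcup S_{\mathrm{non}}$ where $|S_{\mathrm{non}}| = r$. By iterated affinity one has $g(A) \conv_p g(B) = g(A \conv B)$, and combined with \autoref{prop:product-rule} also $g(\partial A) \conv_p g(B) + g(A) \conv_p g(\partial B) = g(\partial(A \conv B))$; subtracting the corresponding $C^{\ell-1}$ terms yields the clean identities
\begin{equation*}
    V = g(D), \qquad W = g(\partial D),
\end{equation*}
so that $V_i = \sum_{x:\, g(x) = i} D_x$ and $W_i = \sum_{x:\, g(x) = i} x\, D_x$ for every bucket $i$.

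Next I handle isolated positions: for $x \in S_{\mathrm{iso}}$, its bucket $i^\star := g(x)$ receives exactly one element of $S$, so $V_{i^\star} = D_x \neq 0$ and $W_{i^\star} = x\, D_x$. Hence $W_{i^\star}/V_{i^\star} = x$ is an integer and \autoref{line:update-Cell} performs the correct update $C^\ell_x \mathrel{+}= D_x$. Call a bucket \emph{dirty} if it contains at least two elements of $S$ and \emph{clean} if it contains exactly one; buckets containing no element of $S$ satisfy $V_i = 0$ and trigger no update at all. Since every element of a dirty bucket is non-isolated, there are at most $r/2$ dirty buckets.

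Finally I assemble the bound. Let $J$ index the updates actually performed in the loop of \autoref{line:loop-Cell}, and write $X_j := W_{i_j}/V_{i_j}$. Then $C^\ell - C^{\ell-1} = \sum_{j \in J} V_{i_j}\, e_{X_j}$, so
\begin{equation*}
    (A \conv B - C^\ell)_x = D_x - \sum_{j \in J:\, X_j = x} V_{i_j}.
\end{equation*}
For $x \in S_{\mathrm{iso}}$ the clean-bucket contribution already cancels $D_x$, and for $x \notin S$ there is nothing to cancel; in both cases only dirty buckets can leave a nonzero residual at $x$. Therefore
\begin{equation*}
    \supp(A \conv B - C^\ell) \;\subseteq\; S_{\mathrm{non}} \cup \{\, X_j : i_j \text{ is dirty}\,\},
\end{equation*}
a set of cardinality at most $r + r/2 \leq 2r$. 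The main subtlety is exactly this last step: a dirty bucket may coincidentally choose $X_j$ to be an already-correct isolated position (overwriting it) or a brand-new position outside $S$ (introducing a fresh error), so one cannot merely bound the residual by $|S|$. Both failure modes must be charged somewhere, and the cleanest accounting is the support-containment above, which charges at most one spurious position per dirty bucket.
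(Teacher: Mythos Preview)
Your proof is correct and follows essentially the same approach as the paper: derive $V=g(D)$ and $W=g(\partial D)$ from affinity and the product rule, show isolated positions are recovered exactly, and then charge residual errors to the non-isolated positions plus the at most one spurious write per bad bucket. Your accounting is in fact slightly sharper (you use $\le r/2$ dirty buckets to get $3r/2$ before relaxing to $2r$, whereas the paper bounds the number of bad iterations more loosely by $r$), but the structure of the argument is the same.
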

\begin{proof}
Focus on arbitrary $\ell$, and assume that we already picked a hash function $g$ in Lines~\ref{line:inner-boosting-start}--\ref{line:keep}. By the affinity of $g$ it holds that $V = g(A \conv B - C^{\ell-1})$ and, by additionally using the product rule (\autoref{prop:product-rule}), $W = g(\partial(A \conv B - C^{\ell-1}))$.

Now focus on an arbitrary iteration $i \in \supp(V)$ of the second inner loop in Lines~\ref{line:loop-Cell}--\ref{line:update-Cell}. There must exist some $x \in \supp(A \conv B - C^{\ell-1})$ with $g(x) = i$. If $x$ is isolated then we will correctly set $C^\ell_x = (A \conv B)_x$. Indeed, from the isolation of $x$ it follows that $V_i = (A \conv B - C^{\ell-1})_x$ and $W_i = (\partial (A \conv B - C^{\ell-1}))_x = x V_i$. Thus $x$ is correctly detected in \autoref{line:detect-x} and in \autoref{line:update-Cell} we correctly assign $C^\ell_x \gets C^{\ell-1}_x + V_i = (A \conv B)_x$.

The previous paragraph shows that if at level $\ell$ all elements were isolated, we would compute $C^\ell = A \conv B$. We analyze how this guarantee is affected by the \emph{bad} iterations $i$ for which there exist several (non-isolated) elements $x \in \supp(A \conv B - C^{\ell-1})$ with $g(x) = i$. Clearly we cannot hope to correctly assign the $r$ entries $C^\ell_x$ for which $x$ is non-isolated. Additionally, there are at most~$r$ bad iterations, each of which potentially modifies $C^\ell$ in at most one position. We conclude that $\norm{A \conv B - C^\ell}_0 \leq 2r$.
\end{proof}

\begin{lemma}[Correctness of \autoref{alg:small-univ-error-corr}]
With probability $1 - \delta$, \autoref{alg:small-univ-error-corr} correctly outputs $C = A \conv B$.
\end{lemma}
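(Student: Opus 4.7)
The plan is to establish the loop invariant
\[
    \norm{A \conv B - C^\ell}_0 \leq \frac{2^{-1.5^\ell} k}{\log^2(k)}
\]
for every $\ell = 0, 1, \dots, L$ by induction on $\ell$, and then observe that this invariant forces $C^L = A \conv B$ once $L$ is chosen large enough.

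For the base case, the initial call to \SmallUnivApproxSparseConv{} with parameter $\delta/2$ produces $C^0$ satisfying $\norm{A \conv B - C^0}_0 \leq (\delta/2) k$ with probability $1 - \delta/2$ by \autoref{lem:small-to-tiny}. Using the standing hypothesis $\log^2 k \leq 1/\delta$, this yields $\norm{A \conv B - C^0}_0 \leq k/(2 \log^2 k) = 2^{-1.5^0} k / \log^2 k$, matching the invariant at $\ell = 0$. For the inductive step, assume the invariant holds at level $\ell - 1$. Then \autoref{lem:isolation-most-error-corr} applies and guarantees, with probability at least $1 - \delta/(2L)$, that the hash function kept in \autoref{line:keep} leaves at most $r = 2^{-1.5^\ell} k / (2 \log^2 k)$ non-isolated elements at level $\ell$. \autoref{lem:isolation-recovery-error-corr} then gives $\norm{A \conv B - C^\ell}_0 \leq 2r = 2^{-1.5^\ell} k / \log^2 k$, preserving the invariant.

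Choosing $L = \Theta(\log \log k)$ large enough that $1.5^L \geq \log k + 2 \log \log k$, we obtain $2^{-1.5^L} k / \log^2(k) < 1$. Since the left-hand side of the invariant is a nonnegative integer, it must be zero, and therefore $C^L = A \conv B$. A union bound over the initial approximation and the $L$ inductive steps shows that every invocation succeeds except with total probability at most $\delta/2 + L \cdot \delta/(2L) = \delta$, completing the correctness argument.

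The main obstacle to be careful about is the initialization: we must verify that the output of the Step~1 procedure (which only guarantees at most $\delta k$ errors) is already good enough to kick off the doubly-exponential shrinkage analyzed in \autoref{lem:isolation-most-error-corr}. This is where the assumption $\log^2 k \leq 1/\delta$ is essential, since it precisely converts the additive error $\delta k / 2$ from \SmallUnivApproxSparseConv{} into the first invariant value $k/(2 \log^2 k)$. Given that alignment, the rest is a clean induction using the two lemmas already proved, and the running-time bound follows because the work at level $\ell$ is dominated by $O(2 \log(2L/\delta) / 1.5^{\ell-1})$ length-$O(k)$ FFTs, summing to $O(k \log k \log(L/\delta))$ plus the $\Order(D_{1/3}(k) + k \log^2(1/\delta) + \polylog(k, \norm A_\infty, \norm B_\infty))$ cost of the initial approximate convolution.
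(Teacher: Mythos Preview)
Your correctness argument is essentially identical to the paper's: the same loop invariant $\norm{A\conv B - C^\ell}_0 \le 2^{-1.5^\ell}k/\log^2 k$, the same base case via $\log^2 k \le 1/\delta$, the same inductive step via \autoref{lem:isolation-most-error-corr} and \autoref{lem:isolation-recovery-error-corr}, and the same union bound $\delta/2 + L\cdot\delta/(2L)=\delta$.

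One small remark on the running-time aside (which is outside this lemma): the inner-loop convolutions are of length $m=\Theta(k\log U/\log^2 k)=\Theta(k/\log k)$, not $\Theta(k)$, so each iteration costs $\Order(k)$ rather than $\Order(k\log k)$; the paper's separate running-time lemma uses this to get $\Order(k\log(1/\delta))$ for the loop rather than your $\Order(k\log k\log(L/\delta))$.
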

\begin{proof}
We show that with probability $1 - \delta$ it holds that $\norm{A \conv B - C^\ell}_0 \leq 2^{-1.5^\ell} k \log^{-2}(k)$ for all levels~$\ell$. In particular, at the final level $L = \log_{1.5} \log k = \Order(\log \log k)$ we must have $\norm{A \conv B - C^\ell}_0 = 0$ and thus $A \conv B = C^\ell = C$.

The proof is by induction on $\ell \in [L + 1]$. For $\ell = 0$, the statement is true assuming that the \SmallUnivApproxSparseConv{} algorithm with parameter $\delta / 2 \leq \log^{-2}(k) / 2$ succeeds. For~\raisebox{0pt}[0pt][0pt]{$\ell > 1$}, we appeal to the previous lemmas: By the induction hypothesis we assume that $\norm{A \conv B - C^{\ell-1}}_0 \leq 2^{-1.5^{\ell-1}} k \log^{-2}(k)$. Hence, by \autoref{lem:isolation-most-error-corr}, the algorithm picks a hash function $g$ under which only $2^{-1.5^\ell} k \log^{-2}(k) / 2$ elements are non-isolated at level $\ell$. By \autoref{lem:isolation-recovery-error-corr} it follows that $\norm{A \conv B - C^\ell}_0 \leq 2^{-1.5^\ell} k \log^{-2}(k)$, which is exactly what we intended to show.

Let us analyze the error probability: For $\ell = 0$, the error probability is $\delta / 2$. For any other level (there are at most $L$ such), the error probability is $1 - \delta/(2L)$ by \autoref{lem:isolation-most-error-corr}. Taking a union bound over these contributions yields the claimed error probability of $1 - \delta$.
\end{proof}

\begin{lemma}[Running Time of \autoref{alg:small-univ-error-corr}]
The running time of \autoref{alg:small-univ-error-corr} is bounded by $\Order(D_{1/3}(k) + k \log^2(1/\delta) + \polylog(k, \norm A_\infty, \norm B_\infty))$.
\end{lemma}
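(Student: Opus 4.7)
The running time splits into two parts: the initial call in Line~2 to the algorithm of \autoref{lem:small-to-tiny} (with parameter $\delta/2$), and the outer loop. The former costs $O(D_{1/3}(k) + k \log^2(1/\delta) + \polylog(k, \norm A_\infty, \norm B_\infty))$ directly by \autoref{lem:small-to-tiny}, so I only need to show that the outer loop fits inside this same bound.

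Each iteration of the inner loop body incurs: selecting a prime $p \in [m, 2m]$ in $\polylog(k)$ time; computing the six hashed vectors $g(A), g(\partial A), g(B), g(\partial B), g(C^{\ell-1}), g(\partial C^{\ell-1})$ by direct hashing in $O(k)$ time; and performing two cyclic convolutions of length $p = O(m)$. For the $O(k)$ hashing cost I need $|\supp(C^{\ell-1})| = O(k)$ throughout, which follows inductively since $|\supp(C^0)| = O(k)$ by \autoref{lem:small-to-tiny} and each subsequent level changes $C^{\ell}$ in at most $|\supp(V)| \le p = O(m)$ positions; since $m = \Theta(k \log(U)/\log^2 k) = \Theta(k/\log k)$ using $U = \poly(k)$, this accumulates only $O(L m) = O(k)$ new support positions across the $L = O(\log\log k)$ levels. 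For the cyclic convolutions, since we are given only a randomized dense-convolution oracle I plan to wrap each call in a test-and-repeat loop using the verifier of \autoref{lem:dense-verification}, exactly as in the proof of \autoref{lem:tiny-univ-set-query-randomized}; by \autoref{thm:geometric-tailbound}, with probability $1 - 1/\poly(k)$ the total number of retries is within a constant factor of the intended number of calls, so each intended call contributes $O(D_{1/3}(m))$ to the running time.

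The inner loop at level $\ell$ runs $R_\ell = O(\log(L/\delta)/1.5^{\ell-1})$ times, and the geometric series telescopes to $\sum_\ell R_\ell = O(\log(L/\delta))$. Consequently the cumulative FFT cost is $O(\log(L/\delta) \cdot D_{1/3}(m))$ and the cumulative hashing/subtraction cost is $O(\log(L/\delta) \cdot k)$, with an additional $O(L m) = O(k)$ charged to the per-level scan of $\supp(V)$. By the monotonicity assumption we have $D_{1/3}(m) \le (m/k)\, D_{1/3}(k) = O(D_{1/3}(k)/\log k)$, and since $L = O(\log\log k)$ together with the hypothesis $1/\delta \ge \log^2 k$ gives $\log(L/\delta) = O(\log(1/\delta))$, the outer-loop cost simplifies to
\begin{equation*}
  O\!\left(\frac{\log(1/\delta)}{\log k} \cdot D_{1/3}(k) + k \log(1/\delta) + \polylog(k)\right)\!.
\end{equation*}

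To absorb the first summand I case-split on $\log(1/\delta)$: if $\log(1/\delta) \le \log k$, it is at most $D_{1/3}(k)$; otherwise the trivial dense bound $D_{1/3}(k) = O(k \log k)$ turns it into $O(k \log(1/\delta))$. Either way the outer loop fits inside $O(D_{1/3}(k) + k \log^2(1/\delta) + \polylog(k))$, and adding the initial call proves the lemma. The main technical care point is this one-logarithmic-factor slack between $m$ and $k$---it is precisely what lets the $O(\log(1/\delta))$ overhead of the error-correction loop be swallowed into $D_{1/3}(k)$ rather than multiplying it.
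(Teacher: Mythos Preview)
Your argument is correct, but it takes a more complicated route than the paper's. The key difference is how the inner-loop convolutions are handled. The algorithm explicitly specifies that $V$ and $W$ are computed \emph{using FFT}, not via the black-box oracle $D_{1/3}$. Since $m = \Theta(k\log U/\log^2 k) = \Theta(k/\log k)$ (using $U = \poly(k)$), a single FFT on length-$p \le 2m$ vectors runs deterministically in time $O(m\log m) = O(k)$. Hence each execution of the inner-loop body costs $O(k)$ outright, and the total loop cost is simply $O(k)\cdot\sum_{\ell}\lceil 2\log(2L/\delta)/1.5^{\ell-1}\rceil = O(k\log(1/\delta))$. There is no need for the test-and-repeat wrapper, the geometric tail bound, the monotonicity of $D_{1/3}(n)/n$, or the case split on $\log(1/\delta)$ versus $\log k$.

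Your detour through $D_{1/3}(m)$ still works out numerically, but it introduces two avoidable wrinkles. First, it turns a deterministic running-time bound into a probabilistic one; your claim that the retries stay bounded with probability $1 - 1/\poly(k)$ is not quite right, since the geometric tail bound with $n = O(\log(1/\delta))$ trials only gives failure probability $\exp(-\Omega(\log(1/\delta))) = \poly(\delta)$, and we only assume $1/\delta \ge \log^2 k$. Second, in the case $\log(1/\delta) > \log k$ you fall back on the bound $D_{1/3}(k) = O(k\log k)$, which is of course true (FFT witnesses it) but is exactly the observation that lets you bypass the oracle in the first place. The paper's approach is cleaner precisely because it recognizes that the $\log k$ slack between $m$ and $k$ makes plain FFT already fast enough here.
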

\begin{proof}
We invoke \SmallUnivApproxSparseConv{} a single time with parameter $\delta/2$ which takes time exactly as claimed, see \autoref{lem:small-to-tiny}. After that, the running is $\Order(k \log (1/\delta))$ mostly due to the inner loop in Lines~\ref{line:inner-boosting-start}--\ref{line:inner-boosting-end}: A single execution of the loop body takes time $\Order(k)$ for hashing the six vectors $A, \partial A, B, \partial B, C^{\ell-1}, \partial C^{\ell-1}$ and for computing three convolutions of vectors of length $m = \Order(k / \log k)$ using FFT. It remains to bound the number of iterations:
\begin{equation*}
    \sum_{\ell = 1}^L \ceil*{\frac{2 \log(2L / \delta)}{1.5^{\ell - 1}}} \leq L + \sum_{\ell = 1}^L \frac{2 \log(2L / \delta)}{1.5^{\ell - 1}} = \Order(L + \log(L/\delta)) = \Order(\log(1/\delta)). \qedhere
\end{equation*}
\end{proof}

This finishes the proof of \autoref{lem:error-corr}.
% !TEX root = ../paper.tex

\section{Estimating \texorpdfstring{$k$}{k}} \label{sec:guess-k}

In this section we remove the assumption that an estimate $k \geq \norm{A \conv B}_0$ is given as part of the input. Let us redefine the meaning of $k$ as $k = \norm{A \conv B}_0$ and refer to the estimate as $k^* \geq k$.

\begin{lemma} \label{lem:guess-k}
Let $\log^2 k \leq 1/\delta$. There is an algorithm for the restricted \SmallUnivSparseConv{} problem which does not expect a bound $k^* \geq \norm{A \conv B}_0$ as part of the input, running in time $\Order(D_{1/3}(k) + k \log^2(1/\delta) + \polylog(k, \norm A_\infty, \norm B_\infty))$.
\end{lemma}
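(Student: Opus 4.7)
The plan is a doubling search for $k$ combined with a sparse Schwartz--Zippel verifier. I would invoke \autoref{lem:error-corr} with successive guesses $k^* = 2, 4, 8, \dots$, verify the produced output $C$, and halt as soon as verification succeeds. Once $k^* \geq k = \norm{A \conv B}_0$, \autoref{lem:error-corr} correctly outputs $A \conv B$ with probability at least $1 - \delta/2$ and the verifier accepts, so the procedure terminates in at most $\lceil \log k \rceil + 1$ rounds.

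The key technical ingredient is a sparse convolution verifier, which I would state more generally: given three integer vectors $A, B, C$ of length $U$, one can decide whether $A \conv B = C$ in time $\Order(\norm A_0 + \norm B_0 + \norm C_0 + \polylog(U, \norm A_\infty, \norm B_\infty))$ with failure probability $\delta'$, for any $\delta' \geq 1/\poly(U)$. This is a direct Schwartz--Zippel check: pick a sufficiently large prime $q$ and a uniformly random $r \in [q]$, then test whether $A(r) \cdot B(r) \equiv C(r) \pmod q$, where $A(x) = \sum_i A_i x^i$ (and analogously for $B, C$). Evaluating each polynomial takes $\Order(\norm A_0 + \polylog(U))$ ring operations in $\Int_q$ via \autoref{lem:bulk-exponentiation} applied to the exponents $i \in \supp(A)$, plus $\polylog(q)$ additive overhead for arithmetic. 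If $A \conv B = C$ the test always accepts; otherwise $A(x)B(x) - C(x)$ is a nonzero polynomial of degree at most $2U$, so the test fails with probability at most $2U/q$, and picking $q = \Theta(U/\delta')$ (large enough to also bound all involved integer values) suffices.

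For the time analysis, the round with guess $k^*$ costs $\Order(D_{1/3}(k^*) + k^* \log^2(1/\delta) + \polylog(k^*, \norm A_\infty, \norm B_\infty))$ for the call to \autoref{lem:error-corr} plus an additive $\Order(k^* + \polylog)$ for the verifier, using $\norm A_0, \norm B_0 \leq k$ by nonnegativity. Since $D_{1/3}(n)/n$ is assumed nondecreasing, summing the geometric sequence over $k^* \leq 2k$ yields exactly the claimed total. For the failure probability, I would invoke \autoref{lem:error-corr} with parameter $\delta/2$ at the final round and instantiate the verifier with per-round error $\delta/(4 \log k)$; a union bound over the $\Order(\log k)$ earlier rounds caps the cumulative false-accept probability at $\delta/4$, matching the $\delta/2$ budget reserved for the final round's correctness. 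The main subtlety to control is precisely this tuning: the verifier must survive a union bound over all rounds while keeping $\log q$ inside the $\polylog$ budget, which is fine because boosting its precision costs only $\Order(\log\log k + \log(1/\delta))$ extra bits, absorbed by $\polylog(k, \norm A_\infty, \norm B_\infty)$.
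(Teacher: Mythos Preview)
Your overall strategy---doubling search for $k^*$ plus a Schwartz--Zippel sparse verifier---matches the paper's. But there is a genuine gap in the running-time accounting, caused by starting the search at $k^* = 2$.

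The verifier must evaluate $A(r)$ and $B(r)$, which costs $\Omega(\norm{A}_0 + \norm{B}_0)$ per round regardless of the current guess $k^*$. Since $\norm{A}_0 + \norm{B}_0$ can be as large as $\Theta(k)$ (all you have is $\norm{A}_0, \norm{B}_0 \leq k$), each of your $\Theta(\log k)$ rounds pays $\Omega(k)$ just for verification, totalling $\Omega(k \log k)$. This is \emph{not} absorbed by the claimed bound when $\delta$ is near the allowed large end $1/\log^2 k$: then $k \log^2(1/\delta) = \Theta(k (\log\log k)^2)$, and nothing in the statement permits assuming $D_{1/3}(k) = \Omega(k \log k)$. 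Your sentence ``an additive $\Order(k^* + \polylog)$ for the verifier, using $\norm A_0, \norm B_0 \leq k$'' is exactly where the argument slips: the inequality you cite gives $\leq k$, not $\leq k^*$.

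The paper's fix is to start the search at $k_0 = \norm{A}_0 + \norm{B}_0$ rather than at $2$. Then in every round $k^* \geq k_0 \geq \norm{A}_0 + \norm{B}_0$, so the per-round verifier cost really is $\Order(k^*)$ and the geometric sum goes through. Starting at $k_0$ also repairs two smaller issues in your write-up. First, you tune the per-round verifier error to $\delta/(4\log k)$, but $k$ is the unknown you are searching for; the paper instead observes $k \leq \norm{A}_0 \cdot \norm{B}_0 \leq k_0^2$, so a verifier error of $1/\poly(k^*) \leq 1/\poly(k_0) \leq 1/\poly(k)$ works uniformly without knowing $k$. Second, you should impose an explicit time budget on each call to \autoref{lem:error-corr} (abort if it exceeds the stated bound for the current $k^*$), since when $k^* < k$ that lemma's running-time guarantee is void.
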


The idea is to use exponential search to guess some $k^* \geq \norm{A \conv B}_0$. We need the following subroutine:

\begin{lemma}[Sparse Verification] \label{lem:sparse-verification}
Given three vectors $A, B, C$ of length $U = \poly(k)$ and sparsity at most $k$, there is a randomized algorithm running in time $\Order(k + \polylog(k, \norm A_\infty, \norm B_\infty))$, which checks whether $A \conv B = C$. The algorithm fails with probability at most $1/\poly(k)$.
\end{lemma}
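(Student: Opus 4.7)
The plan is to mimic the dense verification procedure from \autoref{lem:dense-verification}, applying polynomial identity testing via Schwartz--Zippel, but with a \emph{sparse} evaluation subroutine. Identify $A, B, C$ with the generating polynomials $A(x) = \sum_i A_i x^i$, $B(x) = \sum_i B_i x^i$, $C(x) = \sum_i C_i x^i$, each of degree less than $U$. Then $A \conv B = C$ if and only if the polynomial identity $A(x) B(x) \equiv C(x)$ holds. I would test this identity by drawing a uniformly random point $r \in \Int_q$ in a sufficiently large prime field and accepting iff $A(r)\,B(r) = C(r)$ in $\Int_q$.

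\textbf{Choice of field.} First I would locate a prime $q$ with $q \geq 2U \cdot k \cdot \norm A_\infty \cdot \norm B_\infty \cdot k^c$ for a constant $c$ tuning the error probability. Such a prime can be found in time $\polylog(k, \norm A_\infty, \norm B_\infty)$ by a standard randomized Miller--Rabin based prime search, which accounts for the additive $\polylog$ term in the stated running time. As a trivial preprocessing step I reject immediately if $\norm C_\infty > k \norm A_\infty \norm B_\infty$ (since then certainly $C \neq A \conv B$), so afterwards every integer coefficient of $P(x) := A(x)B(x) - C(x)$ has absolute value strictly less than $q$; in particular a coefficient vanishes modulo $q$ iff it vanishes over $\Int$. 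Furthermore $\deg(P) \leq 2U - 2$, so by Schwartz--Zippel a uniform $r \in \Int_q$ is a root of a nonzero $P$ with probability at most $2U/q \leq k^{-c}$.

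\textbf{Sparse evaluation in $\Order(k)$ time.} To evaluate $A(r), B(r), C(r)$ simultaneously I would collect the at most $3k$ exponents $i \in \supp(A) \cup \supp(B) \cup \supp(C)$, each bounded by $U \leq \poly(k)$. By \autoref{lem:bulk-exponentiation} with $n = 3k$ and $e \leq U$, the required powers $r^i$ can be computed in $\Order(k \log_k U) = \Order(k)$ ring operations, because $U = \poly(k)$ forces $\log_k U = \Order(1)$. Forming the three sparse inner products then takes another $\Order(k)$ ring operations. Since the paper's standing assumption (that $n$ and entries fit in a constant number of words), combined with $U = \poly(k)$ and the coefficient bound above, makes $q$ fit in a constant number of words, every operation in $\Int_q$ is $\Order(1)$ on the Word RAM. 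The total running time is therefore $\Order(k + \polylog(k, \norm A_\infty, \norm B_\infty))$ and the failure probability is $k^{-c} = 1/\poly(k)$.

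\textbf{Main obstacle.} There is no serious obstacle; the content is essentially Freivalds' check ported to sparse evaluation. The only things requiring mild care are (i) the preprocessing that enforces the coefficient bound on $P$, which keeps $q = \poly(k, \norm A_\infty, \norm B_\infty)$ both small enough to fit in $\Order(1)$ words and large enough to make reduction modulo $q$ lossless and Schwartz--Zippel quantitative; and (ii) verifying that \autoref{lem:bulk-exponentiation} indeed produces an \emph{additive} $\log_k U = \Order(1)$ overhead rather than a multiplicative $\log U$ factor in the $\Order(k)$ evaluation term, which is precisely where the hypothesis $U = \poly(k)$ gets used.
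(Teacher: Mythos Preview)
Your proposal is correct and matches the paper's own proof essentially step for step: reject if $\norm C_\infty$ is too large, pick a prime large enough that no coefficient of $AB-C$ can vanish modulo it, evaluate the three sparse polynomials at a random point via \autoref{lem:bulk-exponentiation} in $\Order(k \log_k U)=\Order(k)$ time, and apply Schwartz--Zippel. The only cosmetic difference is that the paper takes $p > kU + k\norm A_\infty\norm B_\infty$ and is content with error $U/p < 1/k$, whereas you pad by an extra $k^c$ factor to get $k^{-c}$.
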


The idea is standard: We can view $A$, $B$ and $C$ as polynomials via $A = \sum_i A_i x^i$ and similarly for $B$ and $C$. In that viewpoint, the role of the convolution operator is taken by polynomial multiplication, i.e.\ it suffices to check whether $A B = C$. This is a polynomial identity testing problem which can be classically solved by the Schwartz-Zippel lemma.

\begin{proof}
First check whether $\norm C_\infty > k \norm A_\infty \norm B_\infty$ and reject in this case. Otherwise compute a prime $p > k U + k \norm A_\infty \norm B_\infty$. We view $A$, $B$ and $C$ as polynomials over~$\Int_p$, by interpreting $A = \sum_i A_i x^i$ and similarly for $B$ and $C$. Next, sample a random point $x \in \Int_p$. We use the bulk exponentiation algorithm (\autoref{lem:bulk-exponentiation}) to precompute all relevant powers $x^i$ and then evaluate $A(x)$, $B(x)$ and $C(x)$ at $x$. If $A(x) B(x) = C(x)$, then we accept (confirming that $A \conv B = C$), otherwise we reject.

If $A \conv B = C$, then this algorithm is always correct. So suppose that $A \conv B \neq C$, and let $D = A B - C$. Over the integers it is clear that $D$ is not the zero polynomial, and since $p$ is large enough it also holds that $D$ is nonzero over $\Int_p$. The algorithm essentially evaluates $D$ at a random point $x \in \Int_p$ and accepts if and only if $D(x) = 0$. The error event is that $D(x) = 0$ despite $D$ being nonzero as a polynomial. Recall that $D$ has degree at most $U$, so it has at most $U$ zeros. Therefore, the probability of hitting a zero is at most $U/p < 1/k$.

Finally, we analyze the running time. Computing $p$ takes time $\polylog(k, \norm A_\infty, \norm B_\infty)$. Precomputing the powers of $x$ takes time $\Order(k \log_k U) = \Order(k)$ by \autoref{lem:bulk-exponentiation}, and also evaluating~$A$,~$B$ and~$C$ at~$x$ takes time $\Order(k)$. Note that all arithmetic operations carried out in these steps are over $\Int_p$ and since a single element of $\Int_p$ can be written down using a constant number of machine words, each ring operation takes constant time. As claimed, the total time is $\Order(k + \polylog(k, \norm A_\infty, \norm B_\infty))$.
\end{proof}

\begin{proof}[Proof of \autoref{lem:guess-k}]
The algorithm is simple: Let $k_0 = \norm A_0 + \norm B_0$ and loop through all estimates $k^* \gets 2^0 k_0, 2^1 k_0, 2^2 k_0, \dots$. For every such $k^*$, run the \prob{SmallUniv-SparseConv} algorithm with estimate $k^*$ to compute a vector $C$, followed by a call to the sparse verifier (\autoref{lem:sparse-verification}) which checks whether indeed $A \conv B = C$.

We write $k = \norm{A \conv B}_0$ for the actual sparsity of the convolution. After $\ceil{\log(k / k_0)}$ iterations we cross the threshold $k \leq k^*$. At this point, the \prob{SmallUniv-SparseConv} algorithm is guaranteed to be correct with probability, say, $\delta / 2$. We claim that also the verifier is correct in every iteration: By the trivial bound $k \leq \norm A_0 \mult \norm B_0 \leq k_0^2$ it follows that the verifier fails with probability at most $1/\poly(k^*) \leq 1/\poly(k_0) \leq 1/\poly(k)$. By a union bound over the $\Order(\log k)$ iterations it follows that with probability $1/\poly(k)$ the verifier never errs. The total error probability is thus $1 - \delta/2 - 1/\poly(k) \leq 1 - \delta$.

It remains to bound the running time. We can assume that each call to our sparse convolution algorithm runs in time $\Order(D_{1/3}(k^*) + k^* \log^2(1/\delta) + \polylog(k^*, \norm A_\infty, \norm B_\infty))$ even if we provide a wrong estimate $k^*$. This assumption can be guaranteed by counting the number of computation steps and aborting after the algorithm exceeds this time budget. Therefore, the total running time is
\begin{align*}
    &\sum_{i = 0}^{\ceil{\log(k / k_0)}} \Order(D_{1/3}(k / 2^i) + k / 2^i \log^2(1/\delta) + \polylog(k / 2^i, \norm A_\infty, \norm B_\infty)) \\
    \leq{} &\Order\!\left(\sum_{i = 0}^\infty D_{1/3}(k / 2^i)\right) + \Order(k \log^2 (1/\delta) + \polylog(k, \norm A_\infty, \norm B_\infty)).
\end{align*}
Recall that we assume that $D_{1/3}(n) / n$ is a nondecreasing function, hence the first term can be bounded by
\begin{equation*}
    \Order\!\left(\sum_{i = 0}^\infty \frac{k}{2^i} \frac{D_{1/3}(k / 2^i)}{k / 2^i}\right) = \Order\!\left(\sum_{i = 0}^\infty \frac{k}{2^i} \frac{D_{1/3}(k)}{k}\right) = \Order(D_{1/3}(k)). \qedhere
\end{equation*}
\end{proof}
% !TEX root = ../paper.tex

\section{Universe Reduction from Large to Small} \label{sec:large-to-small}

The final step in our chain of reductions is to reduce from an arbitrarily large universe to a small universe (that is, a universe of size $U = \poly(k)$). We thereby solve the \SparseConv{} problem and complete the proof of \autoref{thm:core}.

\probsparseconv*{}

\begin{lemma} \label{lem:large-univ-sparse-conv}
Let $\log^2 k \leq 1/\delta$. There is an algorithm for the \SparseConv{} problem running in time $\Order(D_{1/3}(k) + k \log^2(1/\delta) + \polylog(U, \norm A_\infty, \norm B_\infty))$.
\end{lemma}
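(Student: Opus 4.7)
The plan is to compress the universe from $[n]$ to $[m]$ with $m = \poly(k)$ via linear hashing, and then invoke \autoref{lem:guess-k} three times, using the derivative trick of \autoref{prop:product-rule} to recover the original support positions. The approach mirrors the small-to-tiny reduction of \autoref{sec:small-to-tiny}, but now the target is exact recovery in a polynomial-sized universe rather than approximate recovery in a tiny one.

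Pick a prime $p \geq 4n^2$ and a random linear hash $h : [n] \to [m]$ with parameters $p, m$, where $m = \Theta(k^2/\delta)$. By the universality of $h$ (\autoref{lem:linear-hashing-basics}), a union bound over all $\binom{k}{2}$ pairs of distinct support elements of $A \conv B$ and the five possible offsets in $\{-2p, -p, 0, p, 2p\}$ shows that with probability at least $1 - \delta/4$ the hash $h$ is \emph{isolating}: distinct $x, x' \in \supp(A \conv B)$ never satisfy $h(x) \equiv h(x') + o \pmod m$ for any such $o$. Next, compute the non-cyclic convolutions $U = h(A) \conv h(B)$, $V = h(\partial A) \conv h(B)$, and $W = h(A) \conv h(\partial B)$ via three calls to \autoref{lem:guess-k} with inner failure probability $\delta/12$ each. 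Each input lives in a universe of size $m = \poly(k)$ with entries bounded by $\poly(n, \norm A_\infty, \norm B_\infty)$, and each output has sparsity $\Order(k)$ because, by almost-affinity, the corresponding cyclic convolution has support size at most $3k$. Reduce the outputs modulo $m$; then for every bucket $i$ with $(U \bmod m)_i \neq 0$, compute $x = ((V + W) \bmod m)_i / (U \bmod m)_i$, verify $x \in [n]$ and that $h(x) + h(0) \equiv i + o \pmod m$ for some $o \in \{-p, 0, p\}$, and set $C_x \gets (U \bmod m)_i$. Exactly as in \autoref{lem:isolation-recovery}, when $h$ is isolating this procedure correctly reconstructs $A \conv B$. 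Finally, run the sparse verifier (\autoref{lem:sparse-verification}) on $C$, outputting an arbitrary vector if it rejects.

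The total failure probability is at most $\delta/4$ (hashing) plus $3 \cdot \delta/12$ (three \SmallUnivSparseConv{} calls) plus $1/\poly(k)$ (verifier false positive), which is at most $\delta$. Each call to \autoref{lem:guess-k} takes $\Order(D_{1/3}(k) + k \log^2(1/\delta) + \polylog(m, \norm A_\infty, \norm B_\infty))$ time, and since $\log m = \Order(\log(k/\delta))$ this absorbs into the $k \log^2(1/\delta)$ and $\polylog(U, \norm A_\infty, \norm B_\infty)$ terms of the target. Prime-finding, the hashing itself, and the verifier add only $\polylog(n, \norm A_\infty, \norm B_\infty)$. The main obstacle is keeping $m = \poly(k)$ while achieving hashing failure probability $O(\delta)$: the choice $m = \Theta(k^2/\delta)$ is polynomial in $k$ precisely when $\delta \geq 1/\poly(k)$, which covers the regime relevant to \autoref{thm:core} and \autoref{thm:corollary}. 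For super-polynomially small $\delta$, one can instead boost by performing $\Order(\log(1/\delta)/\log k)$ independent trials with $m = \Theta(k^3)$ and accepting the first output passing the verifier; using the hypothesis that $D_{1/3}(n)/n$ is nondecreasing keeps the amortized running time within the target budget.
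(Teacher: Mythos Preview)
Your approach is essentially the paper's: hash $[n]$ down to $[m]$ with $m = \poly(k)$ via a linear hash function, compute three small-universe sparse convolutions (one plain and two involving $\partial$), and recover each support index as a ratio $W_i/V_i$. The paper's Algorithm~5 does exactly this.

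There is one real gap. Your choice $m = \Theta(k^2/\delta)$, and later $m = \Theta(k^3)$, presupposes knowledge of $k = \norm{A \conv B}_0$, but the \SparseConv{} problem does not supply it; the paper explicitly notes at the start of this section that no upper bound $k^*$ is available here. The paper's fix is to set $m = (\norm A_0 \cdot \norm B_0)^3$, which is computable from the input and satisfies $k^3 \leq m \leq k^6$ because for nonnegative vectors $\max(\norm A_0, \norm B_0) \leq k \leq \norm A_0 \cdot \norm B_0$. This yields isolation with probability $1 - \Order(1/k)$ independently of $\delta$, and it obviates your case split on whether $\delta \geq 1/\poly(k)$.

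Two smaller points. First, ``set $C_x \gets (U \bmod m)_i$'' is a bug: by almost-affinity an isolated index $x$ can contribute to up to three buckets $i \in (h(0)+h(x)+\{-p,0,p\}) \bmod m$, each with $W_i/V_i = x$, so one must \emph{accumulate} $C_x \gets C_x + V_i$ (as the paper does) to recover $(A\conv B)_x = \sum_i V_i$. Second, your boosting argument for super-polynomially small $\delta$ does not stay within the stated budget in the reduction form of the lemma: $\Theta(\log(1/\delta)/\log k)$ repetitions multiply the $D_{1/3}(k)$ term, and the hypothesis that $D_{1/3}(n)/n$ is nondecreasing controls geometric sums over \emph{different} input sizes, not repeated calls at the same size.
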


We will prove \autoref{lem:large-univ-sparse-conv} by analyzing \autoref{alg:large-to-small}, which in essence is a simpler version of \autoref{alg:small-to-tiny}. For that reason, we will be brief in this section. Recall that we cannot assume to have an estimated upper bound $k^*$ on $k = \norm{A \conv B}_0$.

As in \autoref{sec:small-to-tiny}, we call $x \in \supp(A \conv B)$ \emph{isolated} if there exists no other index $x' \in \supp(A \conv B)$ with $h(x') \in (h(x) + \{-2p, -p, 0, p, 2p\}) \bmod m$.

\begin{algorithm}[t]
\caption{$\alg{SparseConv}(A, B, U)$} \label{alg:large-to-small}
\begin{algorithmic}[1]
\Input{Nonnegative vectors $A, B$ of length $U$}
\Output{$C = A \conv B$, with probability $1 - \delta$}

\smallskip
\State Let $m = (\norm A_0 \mult \norm B_0)^3$ and let $p > U m$ be a prime
\State Randomly pick a linear hash function with parameters $p$ and $m$

\smallskip
\Statex \emph{(Compute $V = h(A) \conv_m h(B)$)}
\State $V^1 \gets \prob{SparseUniv-SparseConv}(h(A), h(B), m)$ with parameter $\delta / 6$
\State $V \gets V^1 \bmod m$

\smallskip
\Statex \emph{(Compute $W = h(\partial A) \conv_m h(B) + h(A) \conv_m h(\partial B)$)}
\State $W^1 \gets \prob{SmallUniv-SparseConv}(h(\partial A), h(B), m)$ with parameter $\delta / 6$
\State $W^2 \gets \prob{SmallUniv-SparseConv}(h(A), h(\partial B), m)$ with parameter $\delta / 6$
\State $W \gets (W^1 + W^2) \bmod m$

\medskip
\State $C \gets (0, \dots, 0) \in \Int^U$
\For {$i \in \supp(V)$} \label{line:loop-C}
    \State $x \gets W_i / V_i$ \label{line:divide-WV}
    \If {$x$ is an integer}
        \State $C_x \gets C_x + V_i$ \label{line:update-C}
    \EndIf
\EndFor \vspace{-.4ex}
\State\Return $C$
\smallskip

\end{algorithmic}
\end{algorithm}

\begin{lemma}[All Indices are Isolated] \label{lem:isolation-all}
With probability $1 - 1 / \poly(k)$, all indices $x \in \supp(A \conv B)$ are isolated.
\end{lemma}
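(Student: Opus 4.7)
The plan is to bound the failure probability via a straightforward universality and union bound argument, exploiting that the hash table size $m$ is chosen polynomially larger than $k$.

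First, I would observe that, although the algorithm does not know $k = \norm{A \conv B}_0$ in advance, the trivial bound $k \leq \norm A_0 \cdot \norm B_0$ implies $m = (\norm A_0 \cdot \norm B_0)^3 \geq k^3$. This oversizing is exactly what buys us the polynomially small failure probability.

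Next, I would fix two distinct indices $x, x' \in \supp(A \conv B)$ and a single offset $o \in \{-2p, -p, 0, p, 2p\}$. By the universality clause of \autoref{lem:linear-hashing-basics},
\begin{equation*}
    \Pr\bigl(h(x') \equiv h(x) + o \pmod m\bigr) \leq \tfrac{1}{m} + \tfrac{3}{p} \leq \tfrac{4}{m},
\end{equation*}
where the last step uses $p > Um \geq m$. Taking a union bound over the at most $k-1$ choices of $x' \neq x$ and the five offsets, the probability that $x$ is \emph{not} isolated is at most $5(k-1)\cdot \tfrac{4}{m} = O(k/m)$.

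Finally, a second union bound over the at most $k$ support indices $x$ gives total failure probability $O(k^2/m)$. Since $m \geq k^3$, this is $O(1/k) = 1/\poly(k)$, as claimed. The argument is entirely routine once one notices the $k^3$-vs-$k^2$ slack built into the choice of $m$; no step poses a real obstacle.
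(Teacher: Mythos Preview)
Your proof is correct and follows essentially the same approach as the paper: use universality of linear hashing to bound each collision event by $4/m$, then take a union bound over the five offsets and the $O(k^2)$ pairs $(x,x')$ to get failure probability $20k^2/m \leq 20/k$ using $m \geq k^3$. The only cosmetic difference is that the paper union-bounds directly over pairs, whereas you split it into two nested union bounds.
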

\begin{proof}
The probability that $h(x') = h(x) + o \mod m$ is at most~$4/m$, for any distinct keys $x, x'$ and any fixed integer $o$, by the universality of linear hashing (\autoref{lem:linear-hashing-basics}). By taking a union bound over the five values $o \in \{-2p, -p, 0, p, 2p\}$ and the $k^2$ values of $(x, x')$, the probability that all indices~$x$ are isolated is at least $1 - 20k^2/m$. The statement follows since $m = (\norm A_0 \mult \norm B_0)^3 \geq k^3$.
\end{proof}

\begin{proof}[Proof of \autoref{lem:large-univ-sparse-conv}]
To use the \SmallUnivSparseConv{} algorithm, we only have to guarantee that the hashed vectors have length at most $\poly(k)$, which is true by $m = (\norm A_0 \mult \norm B_0)^3 \leq k^6$. Therefore, with probability $1 - \delta/2$ it holds that $V$ and $W$ are correctly computed. And, by \autoref{lem:isolation-all}, with probability $1 - 1/\poly(k)$ we have that all indices~$x$ are isolated. Both events happen simultaneously with probability at least $1 - \delta$, so for the rest for the proof we condition on both these events. By exactly the same proof as \autoref{lem:isolation-recovery} we get that
\begin{equation*}
    (A \conv B)_x = \sum_i V_i,
\end{equation*}
where $i$ runs over $(h(0) + h(x) + \{-p, 0, p\}) \bmod m$, and, for any such $i$ it holds that $W_i = x V_i$. In particular, in \autoref{line:divide-WV} we correctly identify $x = W_i / V_i$ and thus correctly assign $C_x = \sum_i V_i$ over the course of the at most three iterations $i$.

As the recovery loop in Lines~\ref{line:loop-C}--\ref{line:update-C} takes time $\Order(k)$, the total running time is dominated by the convolutions in a small universe taking time $\Order(D_{1/3}(k) + k \log^2(1/\delta) + \polylog(U, \norm A_\infty, \norm B_\infty))$, as shown in \autoref{lem:error-corr}. Here, in contrast to before, we cannot replace $U$ by $k$ in the additive term $\polylog(U, \norm A_\infty, \norm B_\infty)$, since the entries of $\partial A$ are as large as $U \norm A_\infty$.
\end{proof}
% !TEX root = ../paper.tex

\section{Concentration Bounds for Linear Hashing} \label{sec:hashing}
In this section we sharpen the best-known concentration bounds for the most classic textbook hash function
\begin{equation*} \label{eq:linear-hashing}
    h(x) = ((\sigma x + \tau) \bmod p) \bmod m,
\end{equation*}
for a certain range of parameters. Here, $p$ is some (fixed) prime, $m \leq p$ is the (fixed) number of buckets and $\sigma, \tau \in [p]$ are chosen uniformly and independently at random. We say that $h$ is a \emph{linear hash function} with parameters $p$ and $m$.

Our main goal is to prove the following \autoref{thm:threewise-independence}, which is essential for the analysis of our sparse convolution algorithm and which we believe to be of independent interest. The result is based on the machinery established by Knudsen~\cite{Knudsen16}. In that work~\cite{Knudsen16}, Knudsen gives the following improved concentration bounds for $h$, similarly (but also crucially different from) the ones achieved by three-wise independent hash functions. For completeness we repeat some of Knudsen's proofs.

\begin{theorem}[Close to Three-Wise Independence {{\cite[Theorem 5]{Knudsen16}}}] \label{thm:knudsen-threewise-independence}
Let $X \subseteq [U]$ be a set of~$k$ keys. Randomly pick a linear hash function $h$ with parameters $p > 4U^2$ and $m \leq U$, fix a key $x \not\in X$ and buckets $a, b \in [m]$. Moreover, let $y, z \in X$ be chosen independently and uniformly at random. Then:
\begin{equation*}
    \Pr(h(y) = h(z) = b \mid h(x) = a) \leq \frac1{m^2} + \frac{2^{\Order(\sqrt{\log k \log\log k})}}{m k}.
\end{equation*}
\end{theorem}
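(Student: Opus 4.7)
Let $C := |\{y \in X : h(y) = b\}|$. Since $y, z$ are drawn independently and uniformly from $X$, the desired inequality is equivalent to
\[ \Ex(C^2 \mid h(x) = a) \leq \frac{k^2}{m^2} + \frac{k \cdot 2^{O(\sqrt{\log k \log \log k})}}{m}. \]
Pairwise independence (\autoref{lem:linear-hashing-basics}) already delivers $\Ex(C \mid h(x) = a) = k/m \pm O(k/p)$, so the task reduces to controlling the off-diagonal correlations
\[ \sum_{\substack{y, z \in X \\ y \neq z}} \Bigl( \Pr(h(y) = h(z) = b \mid h(x) = a) - \tfrac{1}{m^2} \Bigr), \]
bounding them by $k \cdot 2^{O(\sqrt{\log k \log \log k})}/m$. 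I would follow the framework of Knudsen~\cite{Knudsen16}.

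I would first condition on $\sigma$ and re-parametrise using $t := (\sigma x + \tau) \bmod p$; conditioning on $h(x) = a$ makes $t$ uniform on $\{v \in [p] : v \equiv a \pmod m\}$, and for every $y \in X$ we have $(\sigma y + \tau) \bmod p = d_y + t - c_y p$ where $d_y := \sigma(y - x) \bmod p$ and $c_y \in \{0, 1\}$ is a carry determined by whether $d_y + t < p$. The event $h(y) = b$ thus becomes $d_y \equiv b - a + c_y p \pmod m$ together with a range condition on $t$ that forces the carry; expanding the pair-collision probability over the four patterns $(c_y, c_z) \in \{0,1\}^2$ writes it as a combination of affine-in-$\sigma$ conditions modulo $p$, each weighted by the density of the compatible $t$-range.

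The main obstacle is the following structural estimate. For each target offset and each threshold $L \geq 1$, bound the number of pairs $(y, z) \in X^2$ admitting at least $L$ multipliers $\sigma$ that put both $h(y)$ and $h(z)$ in bucket $b$ given $h(x) = a$. The crucial arithmetic input is that $p > 4U^2$ while $X - X \subseteq (-U, U)$, so any congruence $\sigma(y - z) \equiv \sigma(y' - z') \pmod p$ involving distinct differences is actually an equality in~$\Int$, eliminating wrap-around. A dyadic decomposition over $L$, combined with a pigeonhole/additive argument on the exceptional $\sigma$'s, yields a bound of roughly $\min(k^2,\; L \cdot 2^{O(\sqrt{\log k \log\log k})})$ at level $L$. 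Balancing the two regimes by choosing the cross-over threshold $L^\star = 2^{\Theta(\sqrt{\log k \log\log k})}$ equalises them and produces the advertised excess probability per pair; summing gives the theorem. The $\sqrt{\log k \log\log k}$ exponent arises as the optimum of this balance and, by \autoref{thm:linearhashingexample}, is essentially tight.
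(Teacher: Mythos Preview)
Your plan departs substantially from the route taken in~\cite{Knudsen16} (and reproduced in this paper for the small-universe variant, \autoref{thm:threewise-independence}), and the departure is at exactly the point where the real work happens.

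The paper's argument runs as follows. After conditioning and using independence of $\residue h(\residue x)$ and $\residue h(\residue y)-\residue h(\residue x)$, one reduces to bounding $\Pr\bigl((\residue\sigma(\residue y-\residue x),\residue\sigma(\residue z-\residue x))\in \residue J^2\bigr)$ for an arithmetic progression $\residue J$ of length $\approx p/m$. The key tool is the \emph{height} $H_p(\residue u\residue v^{-1})$ together with a discrepancy estimate (\autoref{lem:discrepancy-inverses}, \autoref{lem:progression-independence}) giving
\[
\Pr\bigl((\residue\sigma\residue u,\residue\sigma\residue v)\in \residue J^2\bigr)\le \frac{|\residue J|^2}{p^2}+O\!\Bigl(\frac{1}{H_p(\residue u\residue v^{-1})}\cdot\frac{|\residue J|}{p}\Bigr).
\]
Averaging over $y,z\in X$, the excess over $1/m^2$ is controlled by $\tfrac{1}{mk^2}\sum_{y,z}1/H\!\bigl(\tfrac{y-x}{z-x}\bigr)$. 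The entire $2^{O(\sqrt{\log k\log\log k})}$ factor then comes from Knudsen's combinatorial bound $\sum_{y,z\in X}1/H(y/z)\le k\,2^{O(\sqrt{\log k\log\log k})}$ (his Corollary~4), which is proved by a delicate argument counting, for each dyadic height level, how many pairs can have that height.

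Your proposal never introduces heights or the discrepancy lemma, and your substitute --- counting pairs $(y,z)$ with at least $L$ ``good'' multipliers $\sigma$ and appealing to a ``pigeonhole/additive argument'' --- is left entirely unspecified. More seriously, the exponent already appears inside your asserted level-$L$ bound $\min(k^2,\,L\cdot 2^{O(\sqrt{\log k\log\log k})})$ \emph{before} any balancing, so your claim that the exponent ``arises as the optimum of this balance'' is circular: you have assumed precisely the quantity you need to derive. The genuine source of $\sqrt{\log k\log\log k}$ is the number-theoretic structure of the inverse-height sum, not a threshold trade-off, and that is the missing idea in your outline.
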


Unfortunately, as we will prove later, there is no hope of improving the $2^{\Order(\sqrt{\log k \log\log k})}$-overhead by much in general. Fortunately, we prove that there is a loop hole: In small universes $U$, tighter bounds are possible:

\begin{theorem}[Closer to Three-Wise Independence in Small Universes] \label{thm:threewise-independence}
Let $X \subseteq [U]$ be a set of~$k$ keys. Randomly pick a linear hash function $h$ with parameters $p > 4U^2$ and $m \leq U$, fix a key $x \not\in X$ and buckets $a, b \in [m]$. Moreover, let $y, z$ be chosen independently and uniformly at random. Then:
\begin{equation*}
    \Pr(h(y) = h(z) = b \mid h(x) = a) \leq \frac1{m^2} + \Order\!\left(\frac{U \log U}{m k^2}\right)\!.
\end{equation*}
\end{theorem}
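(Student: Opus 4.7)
My plan is to adapt Knudsen's proof of \autoref{thm:knudsen-threewise-independence}, replacing the worst-case divisor-regularity lemma by an elementary harmonic divisor count that becomes available once $X \subseteq [U]$ with $U$ not too large. First, expand by averaging over $y,z\in X$:
\[
\Pr(h(y)=h(z)=b \mid h(x)=a) \,=\, \frac{1}{k^{2}}\sum_{y,z\in X}\Pr(h(y)=h(z)=b \mid h(x)=a).
\]
The diagonal $y=z$ contributes $\Order(1/(mk))\leq\Order(U/(mk^{2}))$ by universality (\autoref{lem:linear-hashing-basics}) and is absorbed into the target error. For a pairwise distinct triple $(x,y,z)$, set $D_y=y-x$ and $D_z=z-x$: both are nonzero integers with $|D_y|,|D_z|\leq U$, hence invertible modulo $p>4U^{2}$. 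Writing $\pi(u)=(\sigma u+\tau)\bmod p$ and conditioning on $\pi(x)=a'$ keeps $\sigma$ uniform in $[p]$; the substitution $\rho=\sigma D_y\bmod p$, $\gamma=D_zD_y^{-1}\bmod p$ reduces the per-triple probability to the probability (over uniform $\rho\in[p]$) that both $(\rho+a')\bmod p$ and $(\gamma\rho+a')\bmod p$ have residue $b$ modulo $m$.

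Exactly as in~\cite{Knudsen16}, one decomposes this event along the two wrap counts that arise when reducing $\rho+a'$ and $\gamma\rho+a'$ modulo $p$. Working through the bookkeeping yields a per-triple bound of the form $1/m^{2}+E(D_y,D_z)$, where the excess $E$ is expressed as a sum over divisors $d\leq 2U$ of indicators that $d$ divides one (or both) of $D_y$, $D_z$, $D_y-D_z$, weighted by factors like $1/(md)$ or $1/(mp)$ inherited from the wrap geometry. This part of the argument is borrowed from~\cite{Knudsen16}; the small universe plays no role yet.

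The small universe enters in the aggregate step. Swapping summation orders, $\sum_{y,z\in X}E(D_y,D_z)$ takes the schematic form $\sum_{d\leq 2U} w_{d}\cdot N_y(d)\cdot N_z(d)$, where $N_y(d)=|\{y\in X:d\mid y-x\}|$ and similarly $N_z(d)$, with weights $w_{d}$ inherited from the per-triple bound. Knudsen must handle adversarial $X$ for which $N_y(d)=\Theta(k)$ at many scales $d$ simultaneously, which is precisely the scenario addressed by his divisor-regularity lemma and produces the $2^{\Order(\sqrt{\log k\log\log k})}$ factor. In our regime the containment $\{y\in X:d\mid y-x\}\subseteq(x+d\mathbb{Z})\cap[U]$ gives $N_y(d)\leq\min(k,\lceil U/d\rceil)$ \emph{uniformly} in $X$. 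The elementary harmonic estimate
\[
\sum_{d=1}^{2U}\frac{\min(k,U/d)^{2}}{d} \,=\, \Order(k^{2}\log(U/k)+k^{2})
\]
then replaces the regularity lemma at a cost of only an additional $\log U$, delivering the target $\Order(U\log U/(mk^{2}))$ bound after the weights $w_{d}$ (together with the factor $p\geq 4U^{2}$) are carried through the calculation.

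The delicate point — which I expect to be the main obstacle — is confirming that Knudsen's wrap-count decomposition produces an excess of the right shape, namely one in which the divisibility constraints factor as $\mathbf{1}[d\mid D_y]\cdot\mathbf{1}[d\mid D_z]$ (or, in the heavier wrap cases, additionally tied to a divisor of $D_y-D_z$) rather than being coupled through a single condition like $d\mid\gcd(D_y,D_z,m)$ that would foreclose the product bound. The guiding intuition is that the two wrap counts concern $\rho+a'$ and $\gamma\rho+a'$ independently, so the natural divisibility structure on the "bad" set is also a product, and the small-universe harmonic count then applies without modification. Once this factorisation is verified, the proof reduces to a routine arithmetic check that the resulting excess is at most $\Order(U\log U/(mk^{2}))$.
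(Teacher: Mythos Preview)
Your proposal diverges from the paper's argument and also leaves the central step unverified. The paper does not revisit Knudsen's ``wrap-count decomposition'' at the level of divisibility indicators; instead it uses Knudsen's machinery already packaged via the \emph{height} function. Concretely, \autoref{lem:progression-independence} (Knudsen's Lemma~4) gives, for each fixed pair $(y,z)$, a per-pair excess of order $1/(m\,H_p((\residue y-\residue x)(\residue z-\residue x)^{-1}))$, and since $|y-x|,|z-x|\le U<\sqrt{p/2}$, \autoref{lem:heights-equivalence} identifies this with $1/(m\,H((y-x)/(z-x)))$. The only new input is then the elementary \autoref{lem:heights-small-universe}: for any set of nonzero integers in $[-U,U]$,
\[
  \sum_{y,z\in X}\frac{1}{H\!\big(\tfrac{y-x}{z-x}\big)}\;=\;\Order(U\log U),
\]
proved in three lines by writing $1/H(\tfrac uv)=\gcd(u,v)/\max(u,v)$, bounding $\sum_{v\le u}\gcd(u,v)\le u\,d(u)$, and summing the divisor function over $[U]$. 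Averaging over $y,z$ and dividing by $k^2$ gives exactly $1/m^2+\Order(U\log U/(mk^2))$.

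Your plan tries to re-derive a product structure $\mathbf{1}[d\mid D_y]\cdot\mathbf{1}[d\mid D_z]$ out of the two wrap events and then apply $N_y(d)\le\min(k,\lceil U/d\rceil)$. Even if such a factorisation can be extracted (you yourself flag this as the main obstacle and do not carry it out), it is a detour: Knudsen's Lemma~4 already absorbs the wrap bookkeeping into the single height term, and the divisor counting you want is precisely what goes into bounding the height sum. In short, the factorisation you are worried about is unnecessary --- the correct organising quantity is $1/H(D_y/D_z)$, not a sum of separate divisibility indicators --- and once you use it the proof becomes the short argument above rather than the schematic one you outline.
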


Our result improves upon \autoref{thm:knudsen-threewise-independence} when $U \leq k \mult 2^{\order(\sqrt{\log k \log\log k})}$. For $U \leq k \polylog k$ and $m \approx k$ (which is the relevant case for us), \autoref{thm:threewise-independence} provides a bound which is worse only by a poly-logarithmic factor in comparison to a truly three-wise independent hash function. (Indeed, for a truly three-wise independent hash function, the above probability is exactly $1/m^2$.) We apply \autoref{thm:threewise-independence} by means of the following corollary.

\coroverfullbuckets{}

We remark that the same concentration bounds can be obtained for the related family of hash functions $x \mapsto \big\lfloor\frac{((\sigma x + \tau) \bmod p) m}p\big\rfloor$.

The remainder of this section is structured as follows: In \autoref{sec:heights} we recap some basic definitions from~\cite{Knudsen16} and prove -- as the key step -- a certain number-theoretic bound. In \autoref{sec:hashing-proof} we then give proofs of \autoref{thm:threewise-independence} and \autoref{cor:overfull-buckets} closely following Knudsen's proof outline. Finally, in \autoref{sec:hashing-lower-bound} we prove that the concentration bound in \autoref{thm:knudsen-threewise-independence} is almost optimal by giving an almost matching lower bound.

\subsection{Heights} \label{sec:heights}
\def\residue#1{\boldsymbol{#1}}
We start with some definitions. Let $p$ be a prime. There is a natural way to embed $\Int$ into $\Int_p$ via $\iota(x) = x \bmod p$. As we often switch from $\Int$ to $\Int_p$, we introduce some shorthand notation: For $x \in \Int$, we use boldface symbols to abbreviate $\residue x = \iota(x)$.

The central concept of this proof is an arithmetic measure called the \emph{height $H(x)$} of a nonzero rational number $x \in \Rat$, which is defined as $\max(|a|, |b|)$ if $x$ can be written as $\frac ab$ for coprime integers~$a, b$. We also define a similar height measure for $\Int_p$: The height $H_p(\residue x)$ of $\residue x \in \Int_p^\times$ is defined as
\begin{equation*}
    H_p(\residue x) = \min\big\{ \max(\,|a|, |b|\,) : a, b \in \Int, \residue x = \residue a \residue b^{-1} \big\}.
\end{equation*}

\begin{lemma}[Equivalence of Heights {{\cite[Lemma 2]{Knudsen16}}}] \label{lem:heights-equivalence}
\hspace{0cm}
\begin{itemize}
\item Let $x, y$ be nonzero integers with $|x|, |y| < \sqrt{p/2}$. Then $H(\frac xy) = H_p(\residue x \residue y^{-1})$.
\item For all $x$, $H_p(\residue x) < \sqrt p$.
\end{itemize}
\end{lemma}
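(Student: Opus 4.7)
The plan is to handle the two bullets separately, using (i) a small-integer lifting argument for the first equivalence and (ii) a standard pigeonhole for the second.

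For the first bullet, the upper bound $H_p(\residue x\residue y^{-1})\le H(x/y)$ is essentially free: writing $x/y$ in lowest terms as $a/b$ with $\gcd(a,b)=1$, the pair $(a,b)$ is a valid representation of $\residue x\residue y^{-1}$ in $\mathbb Z_p$ (since $|a|,|b|\le\max(|x|,|y|)<\sqrt{p/2}$ ensures $\residue b\in\mathbb Z_p^\times$), so $H_p\le\max(|a|,|b|)=H(x/y)$. For the reverse bound, suppose $(a,b)$ achieves $H_p(\residue x\residue y^{-1})$, so $\residue x\residue b\equiv \residue y\residue a\pmod p$ and $\max(|a|,|b|)\le H_p<\sqrt{p/2}$ (using the trivial representation to see $H_p<\sqrt{p/2}$). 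Then in the integers $|xb-ya|\le |x||b|+|y||a|<2\cdot\sqrt{p/2}\cdot\sqrt{p/2}=p$, so the congruence forces $xb=ya$ in $\mathbb Z$. Thus $x/y=a/b$ as rationals, which means $(a,b)$ is an integer multiple of the reduced form $(a',b')$ of $x/y$; hence $H_p=\max(|a|,|b|)\ge\max(|a'|,|b'|)=H(x/y)$.

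For the second bullet, a pigeonhole argument on the set $S=\{(a,b):0\le a,b\le\lfloor\sqrt p\rfloor\}$ of size $(\lfloor\sqrt p\rfloor+1)^2>p$ does the job. Consider the map $(a,b)\mapsto \residue x\residue b-\residue a\in\mathbb Z_p$; since $|S|>p$, two distinct pairs $(a_1,b_1)\ne(a_2,b_2)$ collide. Setting $a=a_1-a_2$ and $b=b_1-b_2$ gives $\residue x\residue b=\residue a$ with $|a|,|b|\le\lfloor\sqrt p\rfloor<\sqrt p$ and $(a,b)\ne(0,0)$. If $b=0$ then $\residue a=0$ and $|a|<\sqrt p<p$ force $a=0$, contradicting $(a,b)\ne(0,0)$; hence $b\ne 0$, and since $|b|<\sqrt p<p$ we have $\residue b\in\mathbb Z_p^\times$. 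Therefore $\residue x=\residue a\residue b^{-1}$ certifies $H_p(\residue x)\le\max(|a|,|b|)<\sqrt p$, as required.

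The only mild subtlety is in the first bullet, namely making sure the representation $(a,b)$ realizing $H_p$ is itself small enough for the integer inequality $|xb-ya|<p$ to kick in. This is immediate once we observe $H_p(\residue x\residue y^{-1})\le\max(|x|,|y|)<\sqrt{p/2}$ via the trivial representation $(x,y)$, so no separate work is needed beyond plugging in the definitions. Part two is a clean pigeonhole and should pose no real obstacle.
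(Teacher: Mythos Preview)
Your argument is correct. For the first bullet your proof is essentially identical to the paper's: both directions use that a small representation $(a,b)$ of $\residue x\residue y^{-1}$ forces $xb-ya$ to be an integer of absolute value less than $p$ that is divisible by $p$, hence zero.

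For the second bullet you take a genuinely different (though equally standard) route. You run a Thue-style pigeonhole on the $(\lfloor\sqrt p\rfloor+1)^2>p$ pairs $(a,b)$ via the map $(a,b)\mapsto \residue x\residue b-\residue a$, extracting a collision. The paper instead looks at the $r+1$ multiples $\residue S=\{\residue j\residue x:0\le j\le r\}$ with $r=\lfloor\sqrt p\rfloor$, sorts their representatives $0=s_0<\dots<s_r<p$, and uses that some gap $s_{i+1}-s_i$ is at most $p/(r+1)<\sqrt p$; writing $\residue s_{i+1}-\residue s_i=\residue j\residue x$ with $|j|\le r$ gives the small representation. Both arguments are one-paragraph pigeonholes of comparable length; yours is arguably cleaner in that it avoids the case analysis for the wrap-around gap $s_{r+1}=p$, while the paper's version makes the connection to Dirichlet-type approximation (consecutive multiples) more visible.
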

\begin{proof}
We start with the first item. It is clear that \raisebox{0pt}[0pt][0pt]{$H_p(\residue x \residue y^{-1}) \leq H(\frac xy)$} as whenever we can write $\frac xy = \frac ab$ for integers $a, b$, then we also have \raisebox{0pt}[0pt][0pt]{$\residue x \residue y^{-1} = \residue a \residue b^{-1}$}.

Next, we prove that \raisebox{0pt}[0pt][0pt]{$H(\frac xy) \leq H_p(\residue x \residue y^{-1})$}. Recall that we have \raisebox{0pt}[0pt][0pt]{$H_p(\residue x \residue y^{-1}) \leq \max(|x|, |y|) < \sqrt{p/2}$}. Suppose that $\residue x \residue y^{-1} = \residue a \residue b^{-1}$ for some $\residue a, \residue b \in \Int_p^\times$ such that \raisebox{0pt}[0pt][0pt]{$\max(|a|, |b|) = H_p(\residue x \residue y^{-1}) < \sqrt{p / 2}$}. Then $xb - ya$ must be an integer divisible by $p$, but \raisebox{0pt}[0pt][0pt]{$|a|, |b|, |x|, |y| < \sqrt{p / 2}$}. It follows that $|xb - ya| < p$ and thus $xb - ya = 0$. Finally, $\frac xy = \frac ab$ and $H(\frac xy) \leq \max(|a|, |b|) = H_p(\residue x \residue y^{-1})$.

It remains to prove the second item. Let $r = \floor{\sqrt p}$ and let $\residue S = \{\residue j \residue x : j \in \{0, \dots, r\}\}$. Observe that $\residue S$ contains $r + 1$ distinct elements. Let $0 = s_0 < \dots < s_r < p$ denote the unique integers such that $\residue S = \{ \residue s_0, \dots, \residue s_r \}$, and let $s_{r+1} = p$. Then we have $\sum_{i=0}^r s_{i+1} - s_i = p$, and thus there exists some index $i$ with $s_{i+1} - s_i \leq \frac p{r+1} < \sqrt p$. By the definition of $\residue S$ we can write $\residue x = (\residue s_{i+1} - \residue s_i) \residue j^{-1}$ for some $j \in \{1, \dots, r\}$ and hence $H_p(\residue x) \leq \max(s_{i+1} - s_i, r) < \sqrt p$.
\end{proof}

The next lemma constitutes the heart of our concentration bound. Knudsen~\cite[Corollary 4]{Knudsen16} shows that for any set $X$ the sum $\sum_{x, y \in X} 1/H(x/y)$ can be bounded by $k 2^{\Order(\sqrt{\log k \log \log k})}$ regardless of the universe size~$U$; in our setting (where $U$ is as small as $k \polylog k$) the following bound is significantly sharper.

\begin{lemma}[Sum of Inverse Heights] \label{lem:heights-small-universe}
Let $X \subseteq \{-U, \dots, U\}$ be a set of nonzero integers. Then:
\begin{equation*}
    \sum_{x, y \in X} \frac1{H(\frac xy)} \leq \Order(U \log U).
\end{equation*}
\end{lemma}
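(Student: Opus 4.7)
The plan is to classify the pairs $(x,y) \in X \times X$ by the reduced representation of $x/y$. For any nonzero rational $x/y$, write $x/y = a/b$ with $b \geq 1$ and $\gcd(|a|,b) = 1$; then by definition $H(x/y) = \max(|a|,b)$. The key observation is that the pairs $(x,y) \in \mathbb{Z}^2$ satisfying $x/y = a/b$ in lowest terms are exactly $(at, bt)$ for nonzero integers $t$. Hence one can rewrite
\begin{equation*}
    \sum_{x, y \in X} \frac{1}{H(x/y)}
    \;\leq\; \sum_{\substack{a \in \Int,\, b \in \Int_{\geq 1} \\ \gcd(|a|,b) = 1}} \frac{1}{\max(|a|,b)} \cdot \bigl|\{\, t \in \Int \setminus\{0\} : at, bt \in X \,\}\bigr|.
\end{equation*}

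Next I would bound the inner count. Since $X \subseteq \{-U,\ldots,U\}$, membership of $at$ and $bt$ in $X$ forces $|t| \leq U / \max(|a|,b)$, so the number of admissible $t$ is at most $2U / \max(|a|,b)$. In particular, only pairs with $\max(|a|,b) \leq U$ contribute anything. Substituting this bound yields
\begin{equation*}
    \sum_{x, y \in X} \frac{1}{H(x/y)}
    \;\leq\; \sum_{h=1}^{U} \frac{2U}{h^2} \cdot N(h),
\end{equation*}
where $N(h)$ is the number of coprime pairs $(a,b)$ with $b \geq 1$ and $\max(|a|,b) = h$.

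The last step is the easy counting bound $N(h) = O(h)$: either $|a| = h$ and $b \in \{1,\ldots,h\}$ (giving at most $2h$ pairs after accounting for the sign of $a$), or $b = h$ and $|a| \in \{0,\ldots,h-1\}$ (another $O(h)$ pairs). Plugging this in gives
\begin{equation*}
    \sum_{x, y \in X} \frac{1}{H(x/y)}
    \;\leq\; \sum_{h=1}^{U} \frac{2U}{h^2} \cdot O(h)
    \;=\; O(U) \sum_{h=1}^{U} \frac{1}{h}
    \;=\; O(U \log U),
\end{equation*}
as claimed.

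There is no real obstacle here; the only subtlety is the counting of coprime pairs realizing a given height $h$, which is handled by the two-case argument above. Note how this argument crucially exploits that $|x|, |y| \leq U$: without the universe restriction one cannot bound the number of $t$'s by $2U/h$, and one has to resort to the more delicate divisor-bound machinery of Knudsen~\cite{Knudsen16} that produces the $2^{O(\sqrt{\log k \log\log k})}$ factor.
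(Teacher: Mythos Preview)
Your proof is correct and arrives at the same $O(U\log U)$ bound, but by a different decomposition than the paper's. The paper reduces to positive $x,y$, uses the identity $H(x/y)=\max(x,y)/\gcd(x,y)$, and then bounds
\[
\sum_{x,y\in X}\frac{1}{H(x/y)}\;\le\;2\sum_{x\in X}\frac{1}{x}\sum_{\substack{y\in X\\y\le x}}\gcd(x,y)\;\le\;2\sum_{x\in X}d(x)\;\le\;2\sum_{x=1}^U d(x)=O(U\log U),
\]
where the middle step uses that for fixed $x$ there are at most $x/g$ values $y\le x$ with $g\mid y$, so $\sum_{y\le x}\gcd(x,y)\le\sum_{g\mid x}g\cdot\tfrac{x}{g}=x\,d(x)$. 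Your approach instead groups pairs by the \emph{reduced} fraction $a/b$ and counts the number of admissible multipliers $t$ with $|t|\le U/\max(|a|,b)$, then sums over all heights $h$ using $N(h)=O(h)$. Both arguments exploit the universe cap $|x|,|y|\le U$ in the same essential way---it caps the number of pairs realizing a given reduced ratio---and both end with the harmonic sum $\sum_{h\le U}1/h$. Your route is slightly more self-contained (no detour through the divisor function), while the paper's route makes the connection to the classical average order of $d(n)$ explicit; neither buys an asymptotic advantage over the other.
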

\begin{proof}
Let us assume that $X$ contains only positive integers; in the general case the sum can be at most four times larger since $H(\frac xy) = H(-\frac xy)$. We start with the following simple observation: If~$x, y$ are positive integers, then $H(\frac xy) = \max(x, y) / \gcd(x, y)$. Therefore, the goal is to bound
\begin{equation*}
    \sum_{x, y \in X} \frac1{H(\frac xy)} = \sum_{x, y \in X} \frac{\gcd(x, y)}{\max(x, y)} \leq 2 \sum_{x \in X} \frac1x \sum_{\substack{y \in X\\x \geq y}} \gcd(x, y).
\end{equation*}
Fix $x$ and focus on the sum $\sum_y \gcd(x, y)$. Let $g$ be a divisor of $x$. Then there can be at most $x / g$ values $y \leq x$ which are divisible by $g$. It follows that $|\{ y \leq x : \gcd(x, y) = g \}| \leq x / g$. Thus:
\begin{equation*}
    \sum_{\substack{y \in X \\ x \geq y}} \gcd(x, y) \leq \sum_{g | x} g \mult \frac xg = x \mult d(x), 
\end{equation*}
where $d(x)$ denotes the number of divisors of $x$. Combining these previous equations, we obtain
\begin{equation*}
    \sum_{x, y \in X} \frac1{H(\frac xy)} \leq 2 \sum_{x \in X} d(x) \leq 2 \sum_{x \in \{1, \dots, U\}} d(x).
\end{equation*}
To bound the right hand side by $\Order(U \log U)$, it suffices to check that the average number in $[U]$ has $\Order(\log U)$ divisors. More precisely: Any integer $g \in [U]$ divides at most $U / g$ elements in $[U]$ and therefore $\sum_x d(x) \leq \sum_g U / g = \Order(U \log U)$.
\end{proof}

\subsection{Proof of \autoref{thm:threewise-independence}} \label{sec:hashing-proof}
We need some technical lemmas proved in~\cite{Knudsen16}; for the sake of completeness we also give short proofs. Let us call a set $\residue I = \{ \residue a + \residue i \residue b : i \in [r] \} \subseteq \Int_p$ an \emph{arithmetic progression}, and if $\residue b = \residue 1$ then we call~$\residue I$ an \emph{interval}. For a set $\residue X \subseteq \Int_p$, we define the \emph{discrepancy} as
\begin{equation*}
    \disc(\residue X) = \max_{\text{\normalfont$\residue I$ interval}} \left|\, |\residue X \cap \residue I| - \frac{|\residue X| \, |\residue I|}p \,\right|.
\end{equation*}

\begin{lemma}[{{\cite[Lemma 3]{Knudsen16}}}] \label{lem:discrepancy-inverses}
Let $x, y$ be coprime integers with \raisebox{0pt}[0pt][0pt]{$|x|, |y| < \sqrt p$} and let $\residue X$ be an interval of length $|x|$. Then $\disc(\residue y \residue x^{-1} \residue X) \leq 2$.
\end{lemma}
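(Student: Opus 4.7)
The plan is to prove the discrepancy bound by interpreting the $|x|$ points of $\residue y \residue x^{-1} \residue X$ as small perturbations of a perfectly equispaced reference set, and bounding each source of discrepancy separately. First I would reduce to a clean form: translation invariance of discrepancy (shifting $\residue X$ is equivalent to shifting the test interval) lets me assume $\residue X = \{\residue 0, \residue 1, \dots, \residue{x-1}\}$, and the sign symmetry $H_p(\residue z) = H_p(-\residue z)$ lets me assume $x, y > 0$. Writing $d = y x^{-1} \bmod p \in \{0, \ldots, p-1\}$, the task reduces to showing that $|N - x\ell/p| \le 2$ where $N := |\{j \in [x] : \residue{jd} \in \residue I\}|$ for an arbitrary interval $\residue I$ of length $\ell$.

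The key structural identity I would exploit comes from $dx \equiv y \pmod p$: combined with the bound $xy < p$ (which follows from $|x|,|y| < \sqrt p$), there is a unique integer $k \in [x]$ with $dx = y + kp$, and $\gcd(k, x) = 1$ since $\gcd(x, y) = \gcd(x, p) = 1$. Dividing through by $xp$ yields the crucial decomposition
\begin{equation*}
    \frac dp = \frac kx + \frac y{xp}.
\end{equation*}
After rescaling by $1/p$ and viewing points on the unit circle $[0,1)$, the $j$-th point $jd/p \bmod 1$ equals the reference point $jk/x \bmod 1$ shifted by $jy/(xp) \in [0, y/p)$. Since $\gcd(k, x) = 1$, the reference set $\{jk/x \bmod 1 : j \in [x]\}$ is exactly $\{0, 1/x, 2/x, \ldots, (x-1)/x\}$, perfectly equispaced with gap $1/x$.

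The reference set being equispaced, its count in any arc of length $\ell/p$ is $\lfloor x\ell/p \rfloor$ or $\lceil x\ell/p \rceil$, hence discrepancy at most $1$ against $x\ell/p$. It would then remain to show that perturbing each reference point in the positive direction by strictly less than $y/p$ alters $N$ by at most $1$. Two observations make this work: (i) since $xy < p$, any strip of width $y/p$ adjacent to a boundary of the arc contains at most one reference point; (ii) because all perturbations point in one direction, at the lower boundary of the arc only a reference point lying just outside can cross inward ($+1$), while a reference point just inside is moving away from that boundary; symmetrically at the upper boundary only a reference point just inside can cross outward ($-1$). Hence the net perturbation-induced change to $N$ lies in $\{-1, 0, +1\}$, and combining with the reference discrepancy gives $|N - x\ell/p| \le 2$.

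The main obstacle will be the careful case analysis of crossings, in particular ensuring that no single perturbed point wraps across multiple arc boundaries. This is where $xy < p$ is essential: it forces every perturbation to be strictly smaller than the reference spacing $1/x$ (since $y/p < 1/x$), so each perturbed point stays in the ``slot'' of its original reference position, ruling out multi-boundary crossings. A secondary nuisance is handling wrap-around at $0 \bmod p$ for both the arc and for individual perturbations; here too, the inequalities $xy < p$ and $|x|, |y| < \sqrt p$ keep the crossing count at each boundary bounded by $1$ in every case.
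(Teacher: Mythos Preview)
Your proposal is correct and follows essentially the same approach as the paper: reduce to the canonical interval $\{0,\dots,x-1\}$ with $x,y>0$, exhibit the image set as a perturbation of the perfectly equispaced set $\{jp/x : j \in [x]\}$ by amounts less than $y < p/x$, and then read off the discrepancy bound from a segment/crossing count. The only cosmetic difference is the algebra used to reach the ``equispaced plus small shift'' description---you go directly via the decomposition $d/p = k/x + y/(xp)$, whereas the paper first identifies $\residue x^{-1}\residue X$ with $\{\iota(\lceil jp/x\rceil):j\in[x]\}$ and then multiplies by $\residue y$, using that $j\mapsto jy\bmod x$ permutes $[x]$.
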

\begin{proof}
For simplicity assume that $x, y$ are positive (the other cases are symmetric) and also assume that $\residue X = \{ \residue i : i \in [x] \}$ (which is enough, since the discrepancy is invariant under shifts). We first show that $\residue x^{-1} \residue X$ is evenly distributed in the following strong sense: $\residue x^{-1} \residue X = \residue Y$, where $\residue Y = \{ \iota(\ceil{j p / x}) : j \in [x] \}$. Since $\residue x^{-1} \residue X$ and~$\residue Y$ are finite sets of the same size, it suffices to prove the inclusion $\residue x^{-1} \residue X \subseteq \residue Y$. So fix any~$i \in [x]$; we show that $\residue i \residue x^{-1} \in \residue Y$. Let $j \in [x]$ be the unique integer such that $x$ divides $jp + i$. Then $\iota((jp + i)/x) = \residue i \residue x^{-1}$ and $\ceil{jp / x} = (jp + i) / x$ and hence $\residue i \residue x^{-1} \in \residue Y$.

The next step is to show that also $\residue y \residue x^{-1} \residue X$ is distributed evenly. From the previous paragraph we know that $\residue y \residue x^{-1} \residue X = \{ \iota(jp y / x + \epsilon_j y) : j \in [x]\}$ for some rational values $\epsilon_j = \ceil{jp / x} - jp / x < 1$. The key insight is that since $x$ and $y$ are coprime integers, the set $[x]$ is invariant under the dilation with $y$, that is, $\{ jy \bmod x : j \in [x] \} = [x]$. It follows that $\residue y \residue x^{-1} \residue X = \{ \iota(jp / x + \delta_j) : j \in [x]\}$ for some $\delta_j$'s with~$0 \leq \delta_j < y < p / x$.

We point out how to conclude that $\disc(\residue y \residue x^{-1} \residue X) \leq 2$. First, it is obvious that all intervals of the form $\{ \residue i : \ceil{jp/x} \leq i < \ceil{(j+1)p/x} \}$ intersect $\residue y \residue x^{-1} \residue X$ in exactly one point. As every interval~$\residue I$ can be decomposed into several such segments plus two smaller parts of size less than $p/x$ at the beginning and the end, respectively, a simple calculation confirms that $\disc(\residue y \residue x^{-1} \residue X) \leq 2$.
\end{proof}

\begin{lemma}[{{\cite[Lemma 4]{Knudsen16}}}] \label{lem:progression-independence}
Let $\residue x, \residue y \in \Int_p$ and let $\residue I \subseteq \Int_p$ be an arithmetic progression. Then, for $\residue\sigma \in \Int_p$ chosen uniformly at random:
\begin{equation*}
    \Pr((\residue\sigma \residue x, \residue\sigma \residue y) \in \residue I^2) \leq \frac{|\residue I|^2}{p^2} + \Order\!\left(\frac1{H_p(\residue x \residue y^{-1})} \mult \frac{|\residue I|}p + \frac1p\right)\!,
\end{equation*}
\end{lemma}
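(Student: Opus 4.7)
The plan is to bound the probability by reducing to an intersection estimate for intervals under random dilation. After disposing of the degenerate cases $\residue x = 0$ or $\residue y = 0$ (where the event collapses and the probability is trivially at most $|\residue I|/p$), I would write the arithmetic progression as $\residue I = \residue a + \residue b \residue J$ where $\residue J$ is an interval of length $|\residue I|$ starting at zero. The bijective substitution $\residue\sigma' = \residue b^{-1}(\residue\sigma\residue x - \residue a) \in \Int_p$ is uniform in $\Int_p$, and a short calculation rewrites the event $(\residue\sigma\residue x, \residue\sigma\residue y) \in \residue I^2$ as $\residue\sigma' \in \residue J$ together with $\residue\sigma'\residue t \in \residue J + \residue c$, where $\residue t = \residue y\residue x^{-1}$ and $\residue c \in \Int_p$ depends only on $\residue a, \residue b, \residue x, \residue y$. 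Hence the desired probability equals $|\residue J \cap \residue t^{-1}(\residue J + \residue c)|/p = |(\residue J + \residue c) \cap \residue t\residue J|/p$.

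Next I would write $\residue t = \residue y_0 \residue x_0^{-1}$ in minimum-height form, so $\max(|x_0|, |y_0|) = h := H_p(\residue t) = H_p(\residue x\residue y^{-1})$. Using the identity $|(\residue J + \residue c) \cap \residue t\residue J| = |(\residue J + \residue{c'}) \cap \residue t^{-1}\residue J|$ with $\residue{c'} = -\residue t^{-1}\residue c$, I may swap $\residue t$ with $\residue t^{-1}$ and therefore assume $|x_0| = h$. Partition $\residue J$ into $K = \lceil |\residue J|/h\rceil$ consecutive blocks $\residue X_1, \ldots, \residue X_K$, extending the last block past $\residue J$ so that every block has length exactly $h = |x_0|$; this extension can only enlarge $|(\residue J + \residue c) \cap \residue t\residue X_K|$, so the eventual inequality remains an upper bound on the true count. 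By \autoref{lem:discrepancy-inverses} (which applies because $|x_0|, |y_0| \leq h < \sqrt p$ thanks to \autoref{lem:heights-equivalence} under the standing hypothesis $p > 4U^2$) we have $\disc(\residue t\residue X_i) \leq 2$, so $|\residue t\residue X_i \cap (\residue J + \residue c)| \leq h|\residue J|/p + 2$. Summing,
\begin{equation*}
    |(\residue J + \residue c) \cap \residue t\residue J| \;\leq\; K\left(\frac{h|\residue J|}{p} + 2\right) \;\leq\; \frac{|\residue J|^2}{p} + \frac{h|\residue J|}{p} + \frac{2|\residue J|}{h} + 2.
\end{equation*}

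Dividing by $p$ yields exactly the claimed bound: the main terms $|\residue J|^2/p^2$ and $2|\residue J|/(hp)$ match directly, the cross-term $h|\residue J|/p^2$ is absorbed into $|\residue J|/(hp)$ via $h^2 \leq p$ (from \autoref{lem:heights-equivalence}), and the additive $2/p$ matches the $1/p$ in the statement. The delicacy I expect is the symmetry step ``WLOG $|x_0| = h$'': one needs to spell out the change of variables that swaps $\residue t$ and $\residue t^{-1}$ while only translating the offset, so that we truly are free to choose which of $|x_0|, |y_0|$ plays the role of the partition size. Apart from this, the calculation is mechanical; the entire subtlety is isolated in \autoref{lem:discrepancy-inverses}.
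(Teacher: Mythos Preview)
Your argument is correct and reaches the same bound, but the route differs from the paper's. After the same reduction to an interval (the paper writes $\residue I = \residue z\residue J$; you write $\residue I = \residue a + \residue b\residue J$, which is equivalent up to a shift), the paper does \emph{not} partition $\residue J$ into blocks. Instead it introduces the auxiliary set $\residue S = \{\residue\sigma + \residue i\residue x^{-1} : i \in [x]\}$ of size $h$, averages the indicator over $\residue s \in \residue S$, and bounds $\frac1h\Ex\big(\min(|\residue I \cap \residue x\residue S|,\ |\residue I \cap \residue y\residue S|)\big)$; \autoref{lem:discrepancy-inverses} controls $|\residue I \cap \residue y\residue S|$ uniformly, while the fact that $\residue x\residue S$ is itself an interval bounds how often it meets $\residue I$ at all. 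Your approach is more pedestrian but arguably more transparent: you compute the probability exactly as $|(\residue J+\residue c)\cap\residue t\residue J|/p$ and then chop $\residue J$ into $\lceil|\residue J|/h\rceil$ length-$h$ pieces, applying \autoref{lem:discrepancy-inverses} once per piece. Both proofs use \autoref{lem:discrepancy-inverses} as the sole nontrivial input and both need the ``WLOG $|x_0|=h$'' symmetry; the averaging trick buys a slightly slicker write-up, while your block decomposition makes the counting completely explicit. Two minor points you glossed over: the substitution $\residue\sigma' = \residue b^{-1}(\residue\sigma\residue x - \residue a)$ tacitly needs $\residue b \neq 0$ (i.e.\ $|\residue I| \geq 2$), and the minimum-height representation $\residue t = \residue y_0\residue x_0^{-1}$ is automatically coprime (otherwise dividing out a common factor would lower the height), which is what \autoref{lem:discrepancy-inverses} requires.
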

\begin{proof}
First, observe that $\residue I = \residue z \residue J$ for some interval $\residue J$ and some nonzero $\residue z$. As $(\residue\sigma \residue x, \residue\sigma \residue y) \in \residue I^2$ holds if and only if $(\residue\sigma \residue x \residue z^{-1}, \residue\sigma \residue y \residue z^{-1}) \in \residue J^2$, we may replace $\residue x, \residue y$ by $\residue x \residue z^{-1}, \residue y \residue z^{-1}$ and assume that $\residue I$ is an interval. Note that $H_p(\residue x \residue y^{-1})$ is invariant under this exchange. We may further scale and possibly exchange~$\residue x$ and~$\residue y$ to ensure that $H_p(\residue x \residue y^{-1}) = x \geq |y|$, where $x, y$ are integers such that $\iota(x) = \residue x$ and $\iota(y) = \residue y$ with $x$ positive and $x, y$ coprime.

Pick $\residue \sigma \in \Int_p$ uniformly at random, and let $\residue S = \{ \residue\sigma + \residue i \residue x^{-1} : i \in [x]\}$. Instead of bounding the probability $\Pr((\residue\sigma \residue x, \residue\sigma \residue y) \in \residue I^2)$ directly, by linearity of expectation we may instead bound
\begin{align*}
    &\Pr((\residue\sigma \residue x, \residue\sigma \residue y) \in \residue I^2) \\
    ={} &\frac1x \Ex\left( \sum_{\residue s \in \residue S} [(\residue s \residue x, \residue s \residue y) \in \residue I^2] \right) \\
    \leq{} &\frac1x \Ex(\min(\,|\residue I \cap \residue x \residue S|, |\residue I \cap \residue y \residue S|\,))
\intertext{Recall that $\residue I$ is an interval and note that $\residue y \residue S$ is exactly of the form $\residue y \residue x^{-1} \residue X$ for coprime integers~$x, y$ and an interval $\residue X$ of size $x$. Therefore, it follows by \autoref{lem:discrepancy-inverses} that $|\residue I \cap \residue y \residue S| \leq \frac{|\residue I| \mult |\residue y \residue S|}p + 2 = \frac{x |\residue I|}p + 2$:}
    \leq{} &\frac1x \Ex\left(\min\left(|\residue I \cap \residue x \residue S|, \frac{x |\residue I|}p + 2\right)\right)
\intertext{Next, since both $\residue I$ and $\residue x \residue S$ are intervals, there are less than $|\residue x \residue S| + |\residue I| = x + |\residue I|$ choices of $\residue\sigma$ such that $\residue I \cap \residue x \residue S$ is non-empty. We conclude that:}
    \leq{} &\frac{x + |\residue I|}{px} \left(\frac{|\residue I| \, x}p + 2\right) \\
    \leq{} &\frac{|\residue I|^2}{p^2} + \frac{3|\residue I|}{px} + \frac2p,
\end{align*}
recalling that $x = H_p(\residue x \residue y^{-1}) < \sqrt p$ (by \autoref{lem:heights-equivalence}). The claim follows.
\end{proof}

We are finally ready to prove \autoref{thm:threewise-independence} and \autoref{cor:overfull-buckets}. The proofs are analogous to \cite[Theorems~5 and~6]{Knudsen16}.

\begin{proof}[Proof of \autoref{thm:threewise-independence}]
Fix $y, z \in X$. We will later unfix $y$ and $z$ and consider them to be random variables. Let $I = \{ i \in [p] : i \bmod m = a \}$ and define $J$ similarly with $b$ in place of $a$; clearly $\residue I$ and $\residue J$ are arithmetic progressions in $\Int_p$. Let $\residue h(\residue x) = \residue\sigma \residue x + \residue\tau$ be a random linear function on $\Int_p$. Then:
\begin{align*}
    &\Pr(h(y) = h(z) = b \mid h(x) = a) \\
    ={} &\Pr((\residue h(\residue y), \residue h(\residue z)) \in \residue J^2 \mid \residue h(\residue x) \in \residue I) \\
    ={} &\frac1{|\residue I|} \sum_{\residue u \in \residue I} \Pr((\residue h(\residue y), \residue h(\residue z)) \in \residue J^2 \mid \residue h(\residue x) = \residue u).
\end{align*}
The last equality is by conditioning on $\residue h(\residue x)$ taking some fixed value $\residue u$. As $\residue h(\residue x)$ and $\residue h(\residue y) - \residue h(\residue x) = \residue\sigma (\residue y - \residue x)$ are independent, we can omit the condition and apply \autoref{lem:progression-independence}:
\begin{align*}
    &\Pr((\residue h(\residue y), \residue h(\residue z)) \in \residue J^2 \mid \residue h(\residue x) = \residue u) \\
    ={} &\Pr((\residue\sigma(\residue y - \residue x), \residue\sigma(\residue z - \residue x)) \in (\residue J - \residue u)^2) \\
    \leq{} & \frac{|\residue J|^2}{p^2} + \Order\!\left(\frac1{H_p(\frac{\residue y - \residue x}{\residue z - \residue x})} \mult \frac{|\residue J|}p + \frac1p\right)
\intertext{By the definition of $\residue J$ we have $|\residue J| \leq \ceil{\frac pm} \leq \frac pm + 1$ and thus:}
    \leq{} & \frac1{m^2} + \Order\!\left(\frac1{H_p(\frac{\residue y - \residue x}{\residue z - \residue x})} \mult \frac1m + \frac1p\right)\!.
\end{align*}
Now we unfix $y, z$ and consider $y, z \in X$ to be chosen uniformly at random. By averaging over the previous inequalities we get:
\begin{align*}
    &\Pr(h(y) = h(z) = b \mid h(x) = a) \\
    ={} &\frac1{m^2} + \Order\!\left(\frac1{k^2 m} \sum_{y, z \in X} \frac1{H_p(\frac{\residue y - \residue x}{\residue z - \residue x})} + \frac1p\right).
\end{align*}
As $y - x$ and $z - x$ are nonzero integers of magnitude at most $U < \sqrt{p / 2}$, we can apply \autoref{lem:heights-equivalence} to replace $H_p$ by $H$. The remaining sum can be bounded using \autoref{lem:heights-small-universe}:
\begin{equation*}
    \sum_{y, z \in X} \frac1{H_p(\frac{\residue y - \residue x}{\residue z - \residue x})} = \sum_{y, z \in X} \frac1{H(\frac{y - x}{z - x})} = \Order(U \log U),
\end{equation*}
and the claim follows. (The $+\frac1p$ term can be omitted as we are assuming that $p = \Omega(U^2)$.)
\end{proof}

\begin{proof}[Proof of \autoref{cor:overfull-buckets}]
Fix buckets $a, b \in [m]$, and let $F$ denote the number of keys in $X$ hashed to~$b$. By the pairwise independence of $h$ (see \autoref{lem:linear-hashing-basics}), we have that
\begin{equation*}
    \Ex(F \mid h(x) = a) = \Ex(F) = \frac km \pm \Theta\!\left(\frac kp\right) = \frac km \pm \Theta(1).
\end{equation*}
In particular, since $p > m^2$ it holds that $\Ex(F) \geq k/m - \Order(k/p) \geq \Omega(k/m)$. By \autoref{thm:threewise-independence}, we additionally have
\begin{equation*}
    \Ex(F^2 \mid h(x) = a) = \frac{k^2}{m^2} + \Order\!\left(\frac{U \log U}m\right)\!.
\end{equation*}
It follows that
\begin{equation*}
    \Var(F \mid h(x) = a) = \Order\!\left(\frac{U \log U}m\right)\!,
\end{equation*}
and finally, by an application of Chebyshev's inequality we have
\begin{equation*}
    \Pr(|F - \Ex(F)| \geq \lambda \sqrt{\Ex(F)} \mid h(x) = a) \leq \frac{\Var(F \mid h(x) = a)}{\lambda ^2 \Ex(F)} = \Order\!\left(\frac{U \log U}{\lambda^2 k}\right)\!,
\end{equation*}
for all $\lambda > 0$.
\end{proof}

\subsection{An Almost-Matching Lower Bound Against \autoref{thm:knudsen-threewise-independence}} \label{sec:hashing-lower-bound}
In this section we prove the following statement which shows that \autoref{thm:knudsen-threewise-independence} (Theorem~5 in~\cite{Knudsen16}) is almost optimal in the case where $U$ is polynomial in $k$.

\begin{theorem}[\autoref{thm:knudsen-threewise-independence} is Almost Optimal] \label{thm:threewise-independence-lower-bound}
Let $k$ and $U$ be arbitrary parameters with $U \geq k^{1+\epsilon}$ for some constant $\epsilon > 0$, and let $h$ be a random linear hash function with arbitrary parameters $p > U$ and $m \leq U$. Then there exists a set $X \subseteq [U]$ of $k$ keys, a fixed key $x \not\in X$ and buckets $a, b \in [m]$ such that
\begin{equation*}
    \Pr(h(y) = h(z) = b \mid h(x) = a) \geq \frac1{mk} \exp\!\left(\Omega\!\left(\sqrt{\min\left(\tfrac{\log k}{\log\log k}, \tfrac{\log U}{\log^2 \log U}\right)}\right)\!\right)\!,
\end{equation*}
where $y, z \in X$ are uniformly random.
\end{theorem}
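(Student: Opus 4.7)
The plan is to reverse-engineer the chain of estimates used in the proof of Theorem~5 and then exhibit an explicit set $X$ for which the relevant sum of inverse heights is large.

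The first step is a direct calculation matching the upper bound of Lemma~4. For $\residue\xi, \residue\eta \in \Int_p^\times$ and an interval $\residue I \subseteq \Int_p$, one has $\Pr((\residue\sigma\residue\xi,\residue\sigma\residue\eta) \in \residue I^2) = |\residue I \cap (\residue\xi/\residue\eta)\residue I|/p$. Writing $\residue\xi/\residue\eta = \residue\mu/\residue\nu$ in reduced form with $H = \max(|\mu|, |\nu|) < \sqrt p$, a pair $(i, j) \in \residue I \times \residue I$ contributes to the intersection iff $\nu i \equiv \mu j \pmod p$. Whenever $|\residue I|\cdot H < p$ this modular equation collapses to the integer equation $\nu i = \mu j$, whose coprime solutions are exactly $(i,j) = (\mu k, \nu k)$, so the intersection has $\Omega(|\residue I|/H)$ elements and hence
\[
\Pr\!\big((\residue\sigma\residue\xi,\, \residue\sigma\residue\eta) \in \residue I^2\big) \;\geq\; \Omega\!\left(\frac{|\residue I|}{H_p(\residue\xi\residue\eta^{-1}) \cdot p}\right)\!.
\]
Substituting this lower bound, together with the averaging identity $\Pr(\cdot) = \frac{1}{|\residue I|}\sum_{\residue u \in \residue I}\Pr(\ldots)$ used in the proof of Theorem~5, and then taking expectation over uniform $y, z \in X$, yields for every fixed $a, b$:
\[
\Pr(h(y) = h(z) = b \mid h(x) = a) \;\geq\; \Omega\!\left(\frac{1}{m k^2}\right) \sum_{\substack{y, z \in X \\ H((y-x)/(z-x)) < m}} \frac{1}{H\!\left(\frac{y - x}{z - x}\right)}\!.
\]

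The second step is to make this sum large. Set $x = 0$ and take $X \subseteq [U]$ to consist of $k$ arbitrary $r$-smooth integers, for a smoothness threshold $r$ chosen so that the count $\Psi(U, r)$ of $r$-smooth integers in $[U]$ is at least $k$. Since $y, z \in X$ are both $r$-smooth, writing $s = \gcd(y, z)$ and $y = s\mu$, $z = s\nu$ with $\gcd(\mu, \nu) = 1$ forces $s, \mu, \nu$ to be $r$-smooth as well, so
\[
\sum_{y, z \in X} \frac{1}{H(y/z)} \;\geq\; \Omega(1) \cdot \sum_{H \geq 1} \frac{\Psi(H, r) \cdot \Psi(U/H, r)}{H},
\]
where the prefactor absorbs the positive density of coprime $r$-smooth pairs. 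Using Dickman's function $\rho$ with $\Psi(N, r) \approx N\rho(\log N/\log r)$ and setting $u = \log U/\log r$, the sum becomes a discrete version of $U \log r \cdot \int_0^u \rho(v)\rho(u-v)\,dv$. The integrand peaks at $v = u/2$, where $\rho(u/2)^2 = 2^u \rho(u)$, so the integral is at least $\Omega(2^u \rho(u) \sqrt u)$, which combined with $U\rho(u) \approx k$ gives the total lower bound $\Omega(k \cdot 2^u \log r \sqrt u)$.

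The third step is to calibrate $u$. It must satisfy $u\log u \leq O(\log(U/k))$ (from $|X| = k$) and $u \leq O(\sqrt{\log k/\log\log k})$ (beyond which the peak height $r^{u/2}$ would exceed the cutoff $m \approx k$ in the restricted sum above); optimizing subject to both gives $u = \Theta(\sqrt{\min(\log k/\log\log k,\, \log U/\log^2\log U)})$, and substitution produces the claimed bound $\frac{1}{mk}\exp(\Omega(\sqrt{\min(\cdots)}))$. The main obstacle is the careful number-theoretic calibration in this last step: one must simultaneously respect both the size constraint $|X| = k$ and the height cutoff $H < m$ (so that the inequality from step~1 is actually applicable to the contributing pairs), while keeping Dickman's approximation valid throughout. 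The probabilistic lower bound of step~1 is a short direct integer count, and the smooth-number decomposition of step~2 is standard.
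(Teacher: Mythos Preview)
Your Step~1 is essentially the paper's lower-bound computation: set $x=0$, $a=b=0$, reduce to counting pairs $(i,j)$ with $i/j\equiv y/z\pmod p$ in a progression of length $T\approx p/m$, and exhibit $\Omega(T/H(y/z))$ solutions of the form $(t\mu,t\nu)$. (Your side condition $|I|\cdot H<p$, i.e.\ $H<m$, is not actually needed for this direct count; the paper does not impose it.)

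Your Steps~2--3 diverge from the paper. The paper does not use general $r$-smooth numbers or Dickman's~$\rho$. Instead it takes $n$ primes $p_0,\dots,p_{n-1}$, each of size $\Theta(n\log n)$, and sets
\[
X'=\Bigl\{\,s\prod_{i\in I}p_i : I\subseteq[n],\ |I|=n/2,\ 1\le s\le S\Bigr\},
\]
padded out to size $k$. The height-sum lower bound is then elementary combinatorics: for each $x\in X'$, the elements $y$ sharing the same scaling $s$ and differing from $x$ in exactly $r$ prime factors on each side give $H(x/y)=\Theta(n\log n)^r$, and there are $\binom{n/2}{r}^2$ of them, so $\sum_y 1/H(x/y)\ge\Omega(n/(r^2\log n))^r$. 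Choosing $r=\Theta(\sqrt{n/\log n})$ and $n=\min(\Theta(\log U/\log\log U),\,\log k)$ yields the claimed bound without any analytic number theory.

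Your smooth-number route is morally the same construction, but as written it has genuine gaps. First, ``$k$ arbitrary $r$-smooth integers'' cannot work: the sum $\sum_{y,z\in X}1/H(y/z)$ depends heavily on \emph{which} smooth integers you pick, and a bad choice (e.g.\ $k$ distinct primes below $r$) gives only $O(k)$; you need $X$ to be something structured like \emph{all} $r$-smooth integers up to a threshold. Second, your displayed formula $\sum_H\Psi(H,r)\Psi(U/H,r)/H$ is asserted without derivation and appears to drop a factor (for each $\mu\approx H$ one must also count the choices of $\nu\le H$, which should contribute another $\Psi(H,r)$). Third, the constraint in Step~3 that ``$r^{u/2}<m\approx k$'' presumes $m\approx k$, whereas the theorem is stated for arbitrary $m\le U$; since you introduced the cutoff $H<m$ in Step~1, you now owe a separate argument for other ranges of $m$. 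The Dickman approach can likely be made rigorous, but it is both more technical and less self-contained than the paper's explicit binomial count.
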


We first describe the construction of $X$. By the Prime Number Theorem, for any $n \in \Nat$, there are $n$ primes $p_0, \dots, p_{n-1}$ in the range $[n \log n, C n \log n]$ for some absolute constant $C > 1$. Let
\begin{equation*}
    n = \min\left(\frac{\epsilon \log U}{(1 + \epsilon) C \log\log U},\, \log k\right)\!,
\end{equation*}
and define
\begin{equation*}
    X' = \left\{ s \prod_{i \in I} p_i : I \subseteq [n],\, |I| = \frac n2,\, 1 \leq s \leq S \right\}\!,
\end{equation*}
where $1 \leq S \leq k$ is chosen in such a way that $k/2 \leq |X'| \leq k$. There exists indeed such a value of $S$, since $S \leq |X'| \leq S \mult 2^n$ and $2^n \leq k$. We then construct $X \supseteq X'$ by adding arbitrary (small) elements to $X$ until $|X| = k$. One can check that $X \subseteq [U]$ as the largest number in $X'$ has magnitude less than
\begin{equation*}
    S (C n \log n)^n \leq k (\log U)^{\frac{\epsilon \log U}{(1+\epsilon) \log\log U}} = k U^{\frac\epsilon{1+\epsilon}} \leq U^{\frac1{1+\epsilon}} U^{\frac\epsilon{1+\epsilon}} = U.
\end{equation*}
The first step towards proving that $X$ is an extreme instance is to give the following lower bound:

\begin{lemma} \label{lem:heights-lower-bound}
It holds that
\begin{equation*}
    \sum_{x, y \in X} \frac1{H(\frac xy)} = k \exp\!\left(\Omega\!\left(\sqrt{\min\left(\tfrac{\log k}{\log\log k}, \tfrac{\log U}{\log^2 \log U}\right)}\right)\!\right)\!.
\end{equation*}
\end{lemma}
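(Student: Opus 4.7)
The plan is to drop to a combinatorial sub-sum indexed by $n/2$-element subsets of $[n]$, and then extract a single dominant intersection-size term. Write $A_I := \prod_{i \in I} p_i$, and let $S^* \subseteq [1, S]$ be the subset of integers coprime to $\prod_i p_i$; a standard density estimate (Mertens-type, using $p_i \ge n\log n$) yields $|S^*| \ge S/2$ for large $n$. For each $s \in S^*$ the map $(s, I) \mapsto sA_I$ is injective into $X$ by unique prime factorization, and since the construction enforces $|X'| \ge k/2$ and $|X'| \le S\binom{n}{n/2}$, we have $S \ge k/(2\binom{n}{n/2})$, so $|S^*|\binom{n}{n/2} = \Omega(k)$. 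Dropping all cross-$s$ terms from the double sum and using the identity $H(sA_I/sA_J) = H(A_I/A_J)$ (common factors cancel in the reduced fraction) gives
\begin{equation*}
    \sum_{x,y \in X} \frac{1}{H(x/y)} \;\ge\; |S^*| \sum_{\substack{I, J \subseteq [n] \\ |I| = |J| = n/2}} \frac{1}{H(A_I/A_J)}.
\end{equation*}

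Since the $p_i$'s are distinct primes, $A_I/A_J$ is already in lowest terms, so $H(A_I/A_J) = \max(A_{I \setminus J}, A_{J \setminus I}) \le (Cn\log n)^{n/2 - |I \cap J|}$. Classifying pairs $(I, J)$ by the intersection size $r := |I \cap J|$ gives exactly $\binom{n}{n/2}\binom{n/2}{r}^2$ pairs, so the inner sum is at least $\binom{n}{n/2}\binom{n/2}{r}^2 (Cn\log n)^{-(n/2 - r)}$ for any single $r$.

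To optimize, write $r = n/2 - t$ and apply the elementary bound $\binom{n/2}{t} \ge (n/(4t))^t$ (valid for $t \le n/4$), making this factor at least $(n/(16Ct^2 \log n))^t$. Choosing $t = \Theta(\sqrt{n/\log n})$ makes the base a constant greater than $1$ and produces $\exp(\Omega(\sqrt{n/\log n}))$. Combining,
\begin{equation*}
    \sum_{x,y \in X} \frac{1}{H(x/y)} \;\ge\; |S^*|\binom{n}{n/2} \cdot \exp\!\bigl(\Omega(\sqrt{n/\log n})\bigr) \;=\; \Omega(k) \cdot \exp\!\bigl(\Omega(\sqrt{n/\log n})\bigr).
\end{equation*}
Substituting $n = \min(\log k,\, \epsilon\log U/((1+\epsilon)C\log\log U))$ gives $\log n = \Theta(\log\log k)$ in the first regime and $\log n = \Theta(\log\log U)$ in the second, so $\sqrt{n/\log n} = \Omega\bigl(\sqrt{\min(\log k/\log\log k,\, \log U/\log^2\log U)}\bigr)$, which matches the claimed bound.

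The main obstacle is the final optimization: the binomial factor $\binom{n/2}{t}^2 \sim (n/t)^{2t}$ must beat the height penalty $(Cn\log n)^t$ by a definite exponential margin, and balancing the two forces $t^2 \log n = \Theta(n)$, which is precisely where the $\sqrt{n/\log n}$ exponent emerges. The coprimality reduction for uniqueness and the classification of pairs by intersection size are otherwise routine bookkeeping.
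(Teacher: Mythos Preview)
Your proof is correct and follows essentially the same route as the paper: restrict to pairs $(x,y)$ sharing the same scalar factor $s$, classify by the symmetric-difference size $t=|I\setminus J|=|J\setminus I|$, bound $H(x/y)\le (Cn\log n)^t$, count $\binom{n/2}{t}^2$ choices, and optimize at $t=\Theta(\sqrt{n/\log n})$. The only organizational difference is that you introduce the coprime set $S^*$ to secure injectivity of $(s,I)\mapsto sA_I$ and then recover the factor $\Omega(k)$ from $|S^*|\binom{n}{n/2}$, whereas the paper simply fixes an element $x\in X'$ (of which there are at least $k/2$), takes \emph{any} representation $x=sA_I$, and counts the $\binom{n/2}{t}^2$ distinct elements $y=sA_J$; since for fixed $s$ the map $J\mapsto sA_J$ is already injective, no coprimality argument is needed. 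Your detour through $S^*$ is harmless but unnecessary.
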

\begin{proof}
We only need a lower bound, so we will ignore all elements in $X \setminus X'$. Fix any element $x = s \prod_{i \in I} p_i \in X'$; we prove a lower bound against $\sum_{y \in X'} 1/H(\frac xy)$. We call an element~$y$ \emph{good} if it has the form $y = s \prod_{i \in J} p_i$, where both $x$ and $y$ have the same factor $s$ and the symmetric difference of $I$ and $J$ has size exactly $2r$ (i.e., $|I \setminus J| = |J \setminus I| = r$) for some parameter~$r$ to be specified soon. In the fraction $\frac xy$ only the factors $s$ and $\prod_{i \in I \intersect J} p_i$ cancel and therefore $H(\frac xy) = \Theta(n \log n)^r$. Moreover, the number of good elements $y$ is exactly \raisebox{0pt}[0pt][0pt]{$\binom{n/2}r{}^2$}, and thus
\begin{equation*}
    \sum_{y \in X'} \frac1{H(\frac xy)} \geq \frac{\binom{n/2}{r}^2}{\Order(n \log n)^r} \geq \frac{(\frac n{2r})^{2r}}{\Order(n \log n)^r} = \Omega\!\left(\frac{n}{r^2 \log n}\right)^r\!.
\end{equation*}
Choosing $r = \Theta(\sqrt{n / \log n})$ yields
\begin{equation*}
    \sum_{y \in X'} \frac1{H(\frac xy)} \geq \exp(\Omega(r)) = \exp\!\left(\Omega\!\left(\sqrt{\min\left(\tfrac{\log k}{\log\log k}, \tfrac{\log U}{\log^2 \log U}\right)}\right)\!\right)\!.
\end{equation*}
Recall that $x \in X'$ was arbitrary and thus the claim follows by summing over all $x \in X'$.
\end{proof}

\begin{proof}[Proof of \autoref{thm:threewise-independence-lower-bound}]
Let $X$ be as before, choose $x = 0$ and choose $a = b = 0$. For now we also fix $y, z \in X$ but we will later in the proof unfix $y, z$ and treat them as random variables. The first steps are quite similar to the proof of \autoref{thm:threewise-independence}. Let $I = \{ i \in [p] : i \bmod m = 0 \}$. Then $\residue I$ is an arithmetic progression in $\Int_p$. For $\residue h(\residue x) = \residue\sigma \residue x + \residue\tau$ a random linear function on $\Int_p$, we obtain:
\begin{align*}
    &\Pr(h(y) = h(z) = b \mid h(x) = a) \\
    ={} &\Pr((\residue h(\residue y), \residue h(\residue z)) \in \residue I^2 \mid \residue h(\residue x) \in \residue I) \\
    ={} &\frac1{|\residue I|} \sum_{\residue u \in \residue I} \Pr((\residue h(y), \residue h(\residue z)) \in \residue I^2 \mid \residue h(\residue x) = \residue u).
\end{align*}
We continue to bound every term in the sum from below, so fix some value $\residue u \in \residue I$. As $\residue h(\residue x) = \residue\tau$ and $\residue h(\residue y) - \residue h(\residue x) = \residue\sigma \residue y$ are independent, we can omit the condition and it suffices to bound
\begin{align*}
    &\Pr((\residue h(\residue y), \residue h(\residue z)) \in \residue I^2 \mid \residue h(\residue x) = \residue u) \\
    ={} &\Pr((\residue\sigma \residue y, \residue\sigma \residue z) \in (\residue I - \residue u)^2)
\intertext{Let $T = \floor{\frac p{2m}}$ and observe that either $\residue J = \{ \residue i \residue m : i \in [T] \}$ or $\{ -\residue i \residue m : i \in [T] \}$ is contained in $\residue I - \residue u$. In both cases we may replace $\residue I - \residue u$ by $\residue J$ (in the latter case we replace also $\residue\sigma$ by $-\residue\sigma$ which does not change the probability):}
    \geq{} &\Pr((\residue\sigma \residue y, \residue\sigma \residue z) \in \residue J^2) \\
    ={} &\frac{|\residue y^{-1} \residue J \cap \residue z^{-1} \residue J|}p \\
    ={} &\frac1p \sum_{i, j \in [T]} \left[ \residue i \residue y^{-1} \residue m = \residue j \residue z^{-1} \residue m \right] \\
    ={} &\frac1p \sum_{i, j \in [T]} \left[ \residue i \residue j^{-1} = \residue y \residue z^{-1} \right]
\intertext{We claim that there are at least $\Omega(T / H(\frac yz))$ solutions $i, j \in [T]$ to the equation $\residue i \residue j^{-1} = \residue y \residue z^{-1}$. Indeed, consider the reduced fraction \raisebox{0pt}[0pt][0pt]{$\frac{y'}{z'} = \frac yz$} (i.e., $y'$ and $z'$ are coprime and $0 < y', z' \leq H(\frac yz)$ by definition). For any $t < T / H(\frac yz)$ we may pick $i = t y'$ and $j = t z'$. On the one hand we have that~$i$ and~$j$ have the correct size since $i = t y' < (T / H(\frac yz)) \mult H(\frac yz) = T$ and on the other hand the equation is satisfied by \raisebox{0pt}[0pt][0pt]{$\residue i \residue j^{-1} = \residue t \residue{y'} \residue t^{-1} \residue{z'}{}^{-1} = \residue y \residue z^{-1}$}. Hence:}
    \geq{} &\Omega\left(\frac T{p H(\frac yz)}\right) \\
    ={} &\Omega\left(\frac1{m H(\frac yz)}\right).
\end{align*}
We now unfix $y, z$ and consider them as random variables. By averaging over the previous inequality and by applying \autoref{lem:heights-lower-bound} we finally obtain:
\begin{align*}
    &\Pr(h(y) = h(z) = b \mid h(x) = a) \\
    \geq{} &\Omega\left(\frac1{mk^2} \sum_{y, z \in X} \frac1{H(\frac yz)}\right) \\
    \geq{} &\frac1{mk} \exp\!\left(\Omega\!\left(\sqrt{\min\left(\tfrac{\log k}{\log\log k}, \tfrac{\log U}{\log^2 \log U}\right)}\right)\!\right)\!. \qedhere
\end{align*}
\end{proof}

\bibliographystyle{plain}
\bibliography{refs}

\appendix
% !TEX root = ../paper.tex

\section{Computations with Transposed Vandermonde Matrices} \label{sec:vandermonde}
The goal of this section is prove the following theorem:

\thmvandermonde*{}

Throughout, let $V = V(a)$ be as in the statement and let $W = W(a) = V(a)^T$ denote its transpose, i.e.\ a Vandermonde matrix. The proof of \autoref{thm:vandermonde} is by the so-called \emph{transposition principle}: First, classic algorithms show that $W x$ and $W^{-1} x$ can be computed by efficient arithmetic circuits (\autoref{lem:poly-multipoint}). Second, whenever $A x$ can be computed efficiently by an arithmetic circuit, then also $A^T x$ can be computed similarly efficiently (\autoref{lem:transposition-principle}).

See~\cite{Rudra20,vonzurGathenG13} for a definition of \emph{arithmetic circuits}. For our purposes it suffices to only consider addition, subtraction and multiplication gates.

\begin{lemma}[Polynomial Evaluation and Interpolation] \label{lem:poly-multipoint}
There exists an algorithm $\mathcal A$ which, given $a \in F^n$ with pairwise distinct entries $a_i$, computes arithmetic circuits $C$ and $D$ such that:
\begin{itemize}[noitemsep]
\item $\mathcal A$ runs in time $\Order(n \log^2 n)$,
\item $\mathcal A$ uses at most $\Order(n \log^2 n)$ ring operations and at most 1 division,
\item On input $x \in F^n$, the circuit $C$ computes $C(x) = W(a) x$ and $D$ computes $D(x) = W(a)^{-1} x$.
\end{itemize}
\end{lemma}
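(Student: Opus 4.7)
The matrix $W(a)$ is the Vandermonde matrix whose $(i,j)$-entry (1-indexed) is $a_i^{j-1}$. Hence $W(a) x$ is precisely the vector of values of the polynomial $p(y) := \sum_{j=1}^n x_j y^{j-1}$ at the points $a_1, \dots, a_n$, i.e., a multipoint evaluation. Dually, $W(a)^{-1} x$ is the coefficient vector of the unique polynomial of degree less than $n$ that takes the value $x_i$ at $a_i$, i.e., a polynomial interpolation. Both tasks are solved classically by the \emph{subproduct tree} technique in $O(n \log^2 n)$ ring operations; see, e.g.,~\cite{vonzurGathenG13}. The core idea here is that the subproduct tree depends only on $a$, so it can be precomputed inside $\mathcal A$, while the actual linear action on $x$ becomes the arithmetic circuit we output.

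\textbf{Constructing $C$.} Build a balanced binary tree over the leaves $a_1, \dots, a_n$, and at each internal node $v$ precompute $M_v(y) = \prod_{a_i \in \operatorname{leaves}(v)} (y - a_i)$ by fast polynomial multiplication; the full tree costs $O(n \log^2 n)$ ring operations and contains no divisions. Given $p$, one recursively computes $p \bmod M_v$; the reductions use only fast polynomial division by monic polynomials, which needs no field divisions (Newton inversion of a polynomial with constant term $1$ avoids any reciprocation in $F$). At each leaf this yields $p(a_i)$. Since every step is a fixed $F$-linear combination of the coordinates of $x$ with coefficients determined by the precomputed $M_v$'s, it is encoded as a circuit $C$ of size $O(n \log^2 n)$ using only additions, subtractions, and multiplications.

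\textbf{Constructing $D$.} I use the Lagrange interpolation formula
\begin{equation*}
    p(y) \;=\; \sum_{i=1}^n \frac{x_i}{M'(a_i)} \prod_{j \ne i}(y - a_j), \qquad M(y) = \prod_{i=1}^n (y - a_i).
\end{equation*}
Inside $\mathcal A$, we read off $M$ from the root of the subproduct tree, take its formal derivative $M'$ in $O(n)$ ring operations, and apply the multipoint-evaluation algorithm from above to obtain $d_i := M'(a_i)$ for all $i$ in $O(n \log^2 n)$ ring operations. Since the $a_i$ are pairwise distinct, every $d_i$ is a nonzero element of $F$, so we may invert them with \autoref{lem:bulk-division}, paying $O(n)$ multiplications and exactly one division. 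The constants $d_i^{-1}$ are then baked into $D$. On input $x$, $D$ first forms $c_i := d_i^{-1} x_i$ (constant-times-variable multiplications) and then computes $\sum_i c_i \prod_{j \ne i}(y - a_j)$ by walking the subproduct tree bottom-up: at an internal node with children $L, R$ carrying partial polynomials $P_L, P_R$, one forms $P_L \cdot M_R + P_R \cdot M_L$, where the $M_v$'s are now constant coefficients. A standard analysis yields $O(n \log^2 n)$ ring operations, and the whole recipe is encoded as a division-free circuit $D$.

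\textbf{Main obstacle.} The only delicate point is honouring the single-division budget. A naive Lagrange implementation would invert each $d_i$ separately, incurring $n$ divisions; this is avoided by invoking \autoref{lem:bulk-division}, which trades $n-1$ of these inversions for $O(n)$ multiplications. Every other step—building the subproduct tree, evaluating $M'$ at the $a_i$'s, assembling the combining circuit—is performed with ring operations only. Thus $\mathcal A$ runs in $O(n \log^2 n)$ time using $O(n \log^2 n)$ ring operations and at most one division, and the output circuits $C$ and $D$ have size $O(n \log^2 n)$ and compute $W(a) x$ and $W(a)^{-1} x$ with ring operations alone.
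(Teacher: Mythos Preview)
Your proposal is correct and follows essentially the same approach as the paper: both identify $W(a)x$ and $W(a)^{-1}x$ with multipoint evaluation and interpolation via the subproduct tree, and both handle the division budget by observing that the only field inversions are those of the values $M'(a_i)$, which depend solely on $a$ and can therefore be batched via \autoref{lem:bulk-division} inside $\mathcal A$ rather than in the output circuit. Your write-up simply spells out in more detail what the paper's proof sketch cites from~\cite{vonzurGathenG13}.
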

\begin{proof}[Proof sketch]
For $C$, note that evaluating $C(x) = W(a) x$ is exactly the problem of evaluating the polynomial $\sum_i x_i X^i \in F[X]$ at the points $a_1, \dots, a_n$. Hence, we can use the classical $\Order(n \log^2 n)$-time algorithm for \emph{polynomial multi-point evaluation}, see e.g.~\cite[Algorithm~10.7]{vonzurGathenG13}. This algorithm can be interpreted to compute the arithmetic circuit $C$ rather than computing the multi-point evaluation directly. As this algorithm works over rings, it does not use any divisions.

For $D$, observe that computing $D(x) = W(a)^{-1} x$ corresponds to \emph{polynomial interpolation}, which again has a classical $\Order(n \log^2 n)$-time algorithm~\cite[Algorithm~10.11]{vonzurGathenG13}. This time however, we have to pay attention to the number of divisions performed in the process. Note that~\cite[Algorithm~10.11]{vonzurGathenG13} only computes divisions in the second step, all of which can be bulked together by \autoref{lem:bulk-division}, and moreover all inputs to these divisions only depend on $a$ and can thus be performed by the algorithm $\mathcal A$, rather than by the arithmetic circuit $D$. This proves the claim.
\end{proof}

Next, we need the following lemma which can be proven in several ways, for instance via the Baur-Strassen Theorem; see~\cite[Theorem 4.3.1]{Rudra20}.

\begin{lemma}[Transposition Principle] \label{lem:transposition-principle}
If for some matrix $A \in F^{n \times n}$, the function $x \mapsto A x$ is computed by an arithmetic circuit $C$ of size $s$, then the function $x \mapsto A^T x$ is computed by an arithmetic circuit $C'$ of size $\Order(s + n)$. Moreover, one can compute $C'$ from $C$ in time $\Order(s + n)$.
\end{lemma}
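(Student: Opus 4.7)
The plan is to reduce the claim to the Baur--Strassen theorem on automatic differentiation of arithmetic circuits, which states that given a single-output circuit of size $s$ computing a function $f \colon F^N \to F$, one can, in time $O(s)$, construct a circuit of size $O(s)$ that computes all $N$ partial derivatives of $f$.

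First I would build, from the given circuit $C$ for $x \mapsto Ax$, a single-output circuit $C_g$ that computes the bilinear form $g(x, y) = y^T (A x) = \sum_{i=1}^n y_i (Ax)_i$. Concretely, take $C$, append $n$ multiplication gates (each multiplying an output wire $(Ax)_i$ by a fresh input $y_i$), and sum the results with $n-1$ additions; the resulting circuit has size $s + 2n - 1 = O(s + n)$ and takes the $2n$ inputs $x_1, \dots, x_n, y_1, \dots, y_n$.

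Next I would apply the Baur--Strassen transformation to $C_g$, obtaining a circuit $C_\nabla$ of size $O(s + n)$ that outputs all $2n$ partial derivatives of $g$. Discarding the outputs corresponding to $\partial g / \partial y_i$, what remains are the outputs $\partial g / \partial x_j = (A^T y)_j$ for $j = 1, \dots, n$. Symbolically, $g$ is linear in $x$, so these partials are functions of $y$ alone. Hence I would hardwire $x_1 = \dots = x_n = 0$ (replacing the $x$-inputs by the constant $0$), producing a circuit $C'$ whose only inputs are $y_1, \dots, y_n$ and whose outputs are $A^T y$. Dead gates can be pruned in linear time, leaving the size bound $O(s+n)$ intact.

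The only step that needs a small argument is the correctness of the substitution $x = 0$: even if the original circuit $C$ uses multiplications between $x$-dependent wires and the derived circuit $C_\nabla$ syntactically references $x$, the symbolic identity $\partial g/\partial x_j = (A^T y)_j$ (a polynomial in $y$ alone) guarantees that the output value is independent of the numerical values assigned to the $x$-inputs, so any substitution is valid. The whole procedure (constructing $C_g$, applying Baur--Strassen, and hardwiring $x$) runs in $O(s+n)$ time, yielding the claimed bounds on both the size of $C'$ and the construction time.
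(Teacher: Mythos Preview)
Your proposal is correct and follows exactly the route the paper suggests: the paper does not spell out a proof but merely remarks that the lemma ``can be proven in several ways, for instance via the Baur--Strassen Theorem'' and gives a reference. Your reduction to Baur--Strassen via the bilinear form $g(x,y)=y^{T}Ax$ is the standard way to carry this out, so there is nothing to add.
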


\begin{proof}[Proof of \autoref{thm:vandermonde}]
First, we run the algorithm $\mathcal A$ from \autoref{lem:poly-multipoint} to compute arithmetic circuits~$C$ and~$D$ of size $s = \Order(n \log^2 n)$. Recall that $\mathcal A$ uses $\Order(n \log^2 n)$ ring operations plus a single division. The circuits compute $C(x) = W(a) x$ and $D(x) = W(a)^{-1} x$, respectively. Second, use the transposition principle (\autoref{lem:transposition-principle}) to compute circuits $C'$ and $D'$ with $C'(x) = W(a)^T x = V(a) x$ and $D'(x) = (W(a)^{-1})^T x = V(a)^{-1} x$. By \autoref{lem:transposition-principle}, $C'$ and $D'$ also have size $\Order(s + n) = \Order(n \log^2 n)$ and we can compute both in the same running time. Finally, we evaluate $C'$ and $D'$ at~$x$, which again takes time $\Order(n \log^2 n)$ and uses only ring operations.
\end{proof}

\end{document}